\def\poly{\operatorname{poly}}
\title{Fully Dynamic Algorithms for Minimum Weight Cycle\\and Related Problems}
\date{\vspace{-5ex}}
\author[1]{Adam Karczmarz\thanks{\texttt{a.karczmarz@mimuw.edu.pl}. Supported by ERC Consolidator
Grant 772346 TUgbOAT and by the Foundation for Polish Science (FNP) via the START programme.}}
\affil[1]{Institute of Informatics, University of Warsaw, Poland}
\newcommand{\Ot}{\ensuremath{\widetilde{O}}}
\newcommand{\eps}{\ensuremath{\epsilon}}
\newcommand{\dist}{\delta}
\newcommand{\len}{\ell}
\newcommand{\wei}{w}
\theoremstyle{plain}
\newtheorem{theorem}{Theorem}
\newtheorem{lemma}[theorem]{Lemma}
\newtheorem{corollary}[theorem]{Corollary}
\newtheorem{fact}[theorem]{Fact}
\newtheorem{observation}[theorem]{Observation}
\newtheorem{remark}[theorem]{Remark}
\begin{document}

\maketitle

\begin{abstract}
  We consider the directed minimum weight cycle problem in the fully dynamic setting.
  To the best of our knowledge, so far no fully dynamic algorithms have been designed
  specifically for the minimum weight cycle problem in general digraphs. One can achieve $\Ot(n^2)$
  amortized update time by simply invoking
  the fully dynamic APSP algorithm of Demetrescu and Italiano~[J. ACM~'04].
  This bound, however, yields no improvement over the trivial recompute-from-scratch 
  algorithm for sparse graphs.

  Our first contribution is a very simple deterministic $(1+\eps)$-approximate algorithm supporting
  vertex updates (i.e., changing all edges incident to a specified vertex)
  in conditionally near-optimal $\Ot(m\log{(W)}/\eps)$ amortized time for digraphs
  with real edge weights in $[1,W]$. 
  Using known techniques, the algorithm can be implemented 
  on planar graphs and also gives some new sublinear
  fully dynamic algorithms maintaining approximate cuts and flows in planar digraphs.

  Additionally, we show a Monte Carlo randomized exact fully dynamic minimum weight cycle 
  algorithm with $\Ot(mn^{2/3})$ worst-case update
  that works for real edge weights.
  To this end, we generalize the exact fully dynamic APSP data structure of Abraham et al. [SODA'17] to
  solve the \emph{multiple-pairs shortest paths} problem, where one is interested in
  computing distances for some~$k$ (instead of all $n^2$) fixed source-target pairs after each update.
  We show that in such a scenario, $\Ot((m+k)n^{2/3})$ worst-case update time is possible.
\end{abstract}

\section{Introduction}
The all-pairs shortest paths problem (APSP) is one of the most fundamental graph problems.
Given a real-weighted \emph{directed} graph $G$ with $n$ vertices, the goal is to
compute the distance matrix between all pairs of vertices $u,v$ in $G$.
APSP can be computed in $\Ot(nm)$ time~\cite{Johnson77, Pettie04}, which is clearly near-optimal
for sparse graphs (since the output consists of $n^2$ numbers), but is
also conjectured to be optimal for the entire range of possible graph sparsities.
Some of the other core directed graph problems such as computing the diameter, the radius, or the minimum weight
cycle\footnote{Also called the \emph{girth}, or the \emph{weighted girth} of a digraph. For simplicity,
in this paper we very often use \emph{minimum weight cycle} to refer to the \emph{length}
of such a cycle rather than to the actual cycle. Moreover, throughout this paper, our focus is on computing/maintaining
that length instead of the actual cycle. The obtained algorithms, however, can be easily extended
to return a sought cycle with no additional asymptotic overhead.}
can be trivially reduced to APSP in $O(n^2)$
time by simply inspecting the entries of the distance matrix.
In fact, as shown by Vassilevska Williams and Williams~\cite{WilliamsW18}, for dense graphs APSP is known to be subcubically equivalent
to many problems which look easier at first sight, especially because their
output is just a single number (as opposed to $n^2$ numbers in APSP).
These include e.g., the radius, the minimum weight cycle, and the second shortest simple $s,t$ path problems.
For all these problems, just like for APSP, the best known algorithms run in $\Ot(nm)$ time.
Lincoln et al.~\cite{LincolnWW18} gave some compelling reasons why improving upon this
bound may also be impossible.

In this paper, our focus is on \emph{fully dynamic} graph algorithms. Fully dynamic
graph algorithms allow updating the graph under both edge insertions
and deletions, as opposed to \emph{partially dynamic} algorithms that
allow either only insertions (\emph{incremental} setting) or only deletions (\emph{decremental} setting).
Fully dynamic APSP has been widely studied in the past.
Demetrescu and Italiano~\cite{DemetrescuI04} showed that the distance matrix
can be \emph{explicitly} maintained in $\Ot(n^2)$ \emph{amortized} time under
\emph{vertex updates} which are allowed to change all edges incident
to a single vertex at once. Thorup~\cite{Thorup04b} simplified and slightly improved
their algorithm. 
Clearly, if the algorithm is required to maintain all distances explicitly,
one cannot break through the $O(n^2)$ time barrier since even a \emph{single} edge update may change \emph{all} the
$n^2$ pairwise distances.
Much of the work in this topic~\cite{AbrahamCK17, GutenbergW20b, Thorup05}
has been devoted to obtaining
good \emph{worst-case} bounds on the time needed to recompute the distance matrix
and it is known that $\Ot(n^{2+2/3})$ worst-case update time is possible~\cite{AbrahamCK17, GutenbergW20b}.
Interestingly, none of the known fully dynamic algorithms for \emph{real-weighted} dynamic
APSP has $o(n^2)$ update time and a non-trivial query procedure running in $o(m)$ time.
Such an algorithm, with $\Ot(m\sqrt{n})$ amortized update time and $\Ot(n^{3/4})$ query time,
has so far been only described for sparse enough unweighted graphs by Roditty and Zwick~\cite{RodittyZ11}.

The algorithm of Demetrescu and Italiano~\cite{DemetrescuI04} immediately
implies $\Ot(n^2)$ amortized update bound for fully dynamic variants of all
the most fundamental problems ``trivially reducible'' to APSP -- the aforementioned
diameter, radius, or minimum weight cycle.
Surprisingly, as shown in~\cite{AnconaHRWW19}, such an update
bound is likely to be the best possible for maintaining both the diameter and the radius (conditionally on so-called
Strong Exponential Time- and Hitting Set hypotheses~\cite{AbboudWW16, ImpagliazzoPZ01}), 
even if the graph remains sparse at all times and $(3/2-\eps)$-approximation
is allowed.

It is thus natural to ask whether there exist fully dynamic algorithms for
the \emph{minimum weight cycle problem} that improve upon the reduction to fully dynamic APSP for sparse graphs,
possibly allowing some small multiplicative approximation.
The fundamental difference between the minimum weight cycle and
diameter/radius problems is that the trivial reduction of minimum weight cycle requires
reading only $m$ entries of the distance matrix, as opposed to all $n^2$
in the case of radius and diameter.
As a result, by using the aforementioned fully dynamic algorithm of
Roditty and Zwick~\cite{RodittyZ11}, one immediately gets $\Ot(mn^{3/4})$
amortized update bound but merely for \emph{unweighted} graphs.
Note that this bound is always better than recompute-from-scratch,
and is truly subquadratic for sparse graphs.
It is however not clear whether such a bound can be obtained
for real-weighted graphs, nor whether a much better bound is attainable
if we allow approximation.

Motivated by the above, in this paper we initiate the study of the directed minimum weight cycle problem
in the fully dynamic setting.
To the best of our knowledge, this problem has not been explicitly studied in the literature before.
It is worth noting, however, that a non-trivial fully dynamic algorithm has been shown
for \emph{undirected planar} graphs~\cite{LackiS11}.

\newcommand{\mc}{\phi}
\newcommand{\thres}{\mu}

\subsection{Our results}

\paragraph{A fully dynamic approximate minimum weight cycle algorithm.}
Our first contribution is a simple deterministic fully dynamic algorithm
maintaining a $(1+\eps)$-approximation of the minimum weight $\mc(G)$ of
a cycle in a real-weighted directed graph~$G$.
If~$G$ has a negative cycle, then we define $\mc(G)=-\infty$, thus
allowing the sought cycle to be non-simple.
Note that if we wanted the minimum weight cycle to be simple and simultaneously
allowed negative edge weights, the problem would become NP-hard via
a reduction from Hamiltonian cycle.

\begin{restatable}{theorem}{apprcycle}\label{t:appr-cycle}
  Let $G$ be an initially empty fully dynamic real-weighted digraph such that 
  the weight of each positive weight cycle in $G$ always belongs to the interval $[c,C]$, $c,C\in\mathbb{R}$.

  There exists an algorithm maintaining an estimate $\mc'$ satisfying $\mc(G)\leq \mc'\leq (1+\eps)\mc(G)$
  under vertex updates to $G$ with amortized update time $O((m+n\log{n})\cdot \log{(C/c)}/\eps)$.
\end{restatable}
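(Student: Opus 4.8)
The plan is to reduce the approximate minimum weight cycle problem to a small number of instances of fully dynamic single-source/single-sink shortest paths (or reachability), exploiting the fact that the minimum weight cycle through a fixed vertex $v$ equals $\min_{u}(\dist(v,u)+w(u,v))$, and that the trivial APSP-based reduction only needs $m$ entries of the distance matrix. First I would handle the easy special cases: maintain (e.g.\ via a fully dynamic strongly-connected-components / cycle-detection subroutine, or by a single negative-cycle detection data structure) whether $G$ currently contains a zero-or-negative weight cycle; if so, output $\phi'=-\infty$ (for negative) or treat $0$ appropriately. So assume all cycles have weight in $[c,C]$ and we want a $(1+\eps)$-approximation of the minimum one.

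The core idea is scaling plus hopset-style bounding of the number of edges. Guess the value of $\phi(G)$ up to a factor $2$ by trying thresholds $t = c, 2c, 4c, \ldots, C$ — there are $O(\log(C/c))$ of them. For a fixed threshold $t$, I would round every edge weight up to the nearest multiple of $\eps t / n$ and cap attention to cycles of weight $\le 2t$; any such cycle uses at most $O(n/\eps)$ "rounded units" and at most $n$ edges, so after dividing by $\eps t/n$ we get an integer-weighted instance where the relevant cycle has weight $O(n/\eps)$, and the rounding introduces at most an additive $\eps t$ error, hence a $(1+\eps)$ multiplicative error relative to $t\le \phi(G)$ (up to the factor-$2$ guessing, which can be absorbed by halving $\eps$). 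The key point is that in this integer instance we only care about distances that are $O(n/\eps)$, so a single-source shortest path computation with bounded distances is cheap.

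Now for the dynamic maintenance: rather than running $n$ single-source computations, I would use the standard trick of adding, for the purpose of detecting a cycle through $v$, a direct ``return'' structure — more precisely, to detect the minimum weight cycle in the whole (capped) graph, add a new apex vertex $s$ with zero-weight edges $s\to v$ for all $v$ and zero-weight edges $v\to s'$ to a new sink $s'$ is \emph{not} quite right since that would find open walks; instead one detects a cycle of weight $\le W$ by checking, in the graph with edge weights shifted by $-W/n$ per edge (or by running Bellman--Ford-style for $n$ rounds with capped distances), whether a negative cycle appears. Concretely: scale so that the target cycle has integer weight $\le L=O(n/\eps)$; a minimum weight cycle of weight $\le L$ exists iff running a bounded-distance, $L$-step label-correcting / Bellman--Ford computation from a super-source detects a vertex whose distance strictly decreases, and binary-searching the threshold $L$ within $[0,O(n/\eps)]$ (an extra $O(\log(n/\eps))$ factor, which is fine since it is $O(\log n + \log(1/\eps))$ and can be folded into the stated bound) pins down $\phi(G)$ up to $1+\eps$. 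The fully dynamic implementation of $L$-bounded Bellman--Ford under vertex updates is exactly where I would invoke the known machinery: a bounded-hop distance structure can be maintained in $\Ot(m)$ amortized time per vertex update when hop-length is $\polylog$, and here we instead have hop-length $O(n/\eps)$ but bounded \emph{distance} $O(n/\eps)$, so one uses the Dijkstra-with-bounded-distance / monotone-priority-queue variant whose total work over the $O(n/\eps)$ relevant distance levels is $O(m + n\log n)$ — the $n\log n$ coming from the heap, and crucially independent of $\eps$ inside a single run because the number of heap operations is $O(m+n)$ regardless of the distance bound. Recomputing this from scratch after each vertex update costs $O(m+n\log n)$, and multiplying by the $O(\log(C/c)/\eps)$ guesses gives the claimed $O((m+n\log n)\cdot\log(C/c)/\eps)$ amortized (indeed worst-case) bound.

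The main obstacle I expect is making the ``detect a cycle of weight $\le t$'' step both correct and efficient simultaneously: a naive $n$-round Bellman--Ford is $O(nm)$, far too slow, so one must argue that after scaling the \emph{distances} of interest are $O(n/\eps)$ and run a distance-bounded Dijkstra from a super-source $s$ with zero edges to all vertices — but Dijkstra from $s$ computes $\min_u \dist_G(s,u)=0$ and never revisits, so it does \emph{not} by itself detect cycles. The fix, which is the technical heart, is to run, for each scale, a single-source computation that looks for the minimum-weight \emph{closed walk}: add $s\to v$ of weight $0$ and, for the cycle through the ``first'' vertex, look at $\min_{(u,v)\in E}\bigl(\dist_G(s,v) \text{ restricted to paths avoiding the fake edge into } v\bigr)$ — equivalently, do $n$ Bellman--Ford rounds but stop early once all distances are $\le L=O(n/\eps)$ stabilize, which by a standard argument happens within $O(L)$ rounds and each round touches only edges out of recently-updated vertices, giving $O(m+n/\eps)$ per scale; then the amortization over vertex updates is immediate since we recompute all $O(\log(C/c)/\eps)$ scales from scratch. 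Getting the bookkeeping of the rounding/capping to yield a clean $\phi(G)\le\phi'\le(1+\eps)\phi(G)$ (and correctly reporting $-\infty$) is routine once the bounded-distance subroutine is in place.
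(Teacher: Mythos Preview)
Your proposal has a genuine gap at the core step: you claim one can \emph{recompute from scratch}, after each vertex update, whether $G$ contains a cycle of weight at most a given threshold $t$ in $O(m+n\log n)$ time. This is not known and almost certainly false --- it would solve the static minimum weight cycle problem in $\Ot(m)$ time, whereas the best known bound is $\Ot(nm)$ and there are conditional lower bounds (cf.\ \cite{DalirrooyfardW20,LincolnWW18}) against anything substantially faster. Your concrete suggestions do not deliver this: a single Dijkstra from a super-source with zero-weight edges $s\to v$ computes trivial zero distances and detects nothing about cycles (as you yourself note); running Bellman--Ford for $L=O(n/\eps)$ rounds costs $\Theta(Lm)=\Theta(nm/\eps)$ in the worst case, and the ``each round touches only edges out of recently-updated vertices'' early-termination heuristic has no worst-case guarantee. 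The scaling/rounding layer is fine, but it sits on top of a subroutine you have not provided.

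The paper's proof takes a completely different route and is where the actual content lies. It does \emph{not} recompute anything from scratch. Instead, for each threshold $\mu$ it maintains an \emph{amortized} data structure (Theorem~\ref{t:threshold}) based on a partition $E=E_0\cup E_1$ with two invariants: $\mc(G_0)\ge\mu$ for $G_0=(V,E_0)$, and $E_1\neq\emptyset\Rightarrow\mc(G)<\mu$. The key observation is that after an insertion centered at $v$, any \emph{new} short cycle must pass through $v$, so a \emph{single} Dijkstra from $v$ in $G_0+E_1(v)$ suffices to decide whether to merge $E_1(v)$ into $E_0$; deletions are handled by replaying pending $\texttt{update}(v)$ calls in insertion order, and a charging argument (Lemma~\ref{l:threshold-amortized}) shows each such call can be paid for by a prior insertion. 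This is why the bound is amortized, not worst-case as you suggest. Running $O(\log(C/c)/\eps)$ copies of this structure at thresholds $(1+\eps)^k$ (plus two more for $\mu=0$ and $\mu=c$) then gives Theorem~\ref{t:appr-cycle} directly, with no scaling or rounding needed.
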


By~Theorem~\ref{t:appr-cycle}, a simpler amortized update time bound of $O((m+n\log{n})\cdot \log{(nW)}/\eps)$
for the fully dynamic $(1+\eps)$-approximate minimum weight cycle problem can be obtained in two special cases:
\begin{itemize}
  \item if $G$ has real-weights in $\{0\}\cup [1,W]$,
  \item if $G$ has integer weights in $(-\infty,W]$.
\end{itemize}

Via known conditional lower bounds on the static approximate
minimum weight cycle problem, the update time bound in Theorem~\ref{t:appr-cycle} -- as a function of $m$ \emph{alone} -- is near-optimal
for both vertex and edge updates
if we allow approximation factor less than $2$ and
$O(m^{2-\delta})$ preprocessing time (for some $\delta>0$).
Indeed, Dalirrooyfard and Vassilevska Williams~\cite{DalirrooyfardW20} proved
that under so-called $k$-Cycle hypothesis~\cite{AnconaHRWW19}, one cannot approximate the minimum weight
cycle within factor less than $2$ in $O(m^{2-\delta})$ time, for any $\delta>0$.
Clearly, if there was, say, a dynamic $3/2$-approximate minimum weight
cycle algorithm with $O(m^{2-\delta})$ preprocessing time, $O(m^{1-\delta})$ update time,
and the same interface as our algorithm,
$m$ edge/vertex updates would be sufficient to obtain
a static $3/2$-approximate minimum weight cycle algorithm running in $O(m^{2-\delta})$ time.
This would refute the $k$-Cycle hypothesis.

Observe that the $\Omega(m^{2-o(1)})$ conditional lower bound~\cite{DalirrooyfardW20}
(which implies that the $\Omega(mn^{1-o(1)})$ bound holds for \emph{some} sparsity $m$)
on the complexity of static approximate minimum weight cycle problem does not
rule out dynamic vertex update bounds of the form $\Ot(n^{\alpha}\cdot m^{1-\alpha})$
for some $\alpha\in (0,1]$ or $\Ot(m^{1+\beta}/n^{2\beta})$ for some $\beta\in (0,1/2]$, e.g., $\Ot(n)$, $\Ot(\sqrt{nm})$,
or $\Ot(m^{3/2}/n)$.
However, 
if we limit ourselves to
\emph{``combinatorial''} algorithms that
do not rely on fast matrix multiplication,
such $O(m^{1-\eps})$ bounds are ruled out for infinitely many sparsities of the form
$m=\Theta(n^{1+2/(k-1)})$, where $k\geq 3$ is an odd integer~\cite{DalirrooyfardW20, LincolnWW18}.

We stress that the aforementioned static conditional lower bounds do not rule out $\Ot(n)$ or even
$\Ot(\sqrt{nm})$ amortized
update time in the \emph{edge update} model.
In this case, for similar reasons, only combinatorial approximate algorithms with amortized update time
that is sublinear in $n$ for many sparsities, e.g., $\Ot(m^\beta\cdot n^{1-2\beta})$ for $\beta\in (0,1/2]$,
are unlikely to exist.

\paragraph{Fully dynamic cycles, flows, and cuts in planar graph.} Interestingly, if we limit our attention to the case of single edge updates (as opposed
to vertex updates) and real weights in $\{0\}\cup [1,W]$, the amortized update cost of the data structure of Theorem~\ref{t:appr-cycle} can always be charged to the cost of performing
a single edge update plus a single distance query on $O(\log{(nW)}/\eps)$ fully dynamic \emph{exact} distance oracles,
each maintaining some subgraph of~$G$.
For general digraphs, this amounts to running Dijkstra's algorithm in each
of these subgraphs since no non-trivial fully dynamic distance oracles with
both update and query time $o(m)$ are known.
However, such dynamic distance oracles are well-known to exist for planar digraphs~\cite{FR, KaplanMNS17, MSSP} which
immediately 
leads to the following result.

\begin{theorem}\label{t:planar-cycle-intro}
  Let $G$ be a planar digraph $G$ 
  with real weights in $\{0\}\cup [1,W]$.
  There exists an algorithm maintaining an $(1+\eps)$-approximate estimate of $\mc(G)$
  under planarity preserving edge insertions and deletions
  with amortized update time $\Ot(n^{2/3}\log{(W)}/\eps)$.
\end{theorem}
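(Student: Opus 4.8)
The plan is to feed known fully dynamic \emph{exact} distance oracles for planar digraphs into the edge-update refinement of Theorem~\ref{t:appr-cycle} described above. Recall that for single edge updates and real weights in $\{0\}\cup[1,W]$, one update to the data structure behind Theorem~\ref{t:appr-cycle} costs, up to a constant factor, one edge update plus one point-to-point distance query on each of $N=O(\log(nW)/\eps)$ fully dynamic exact distance oracles, where the $i$-th oracle maintains a subgraph $G_i$ of $G$ whose edge weights have been rescaled (and rounded) according to scale $i$. Hence it suffices to maintain each $G_i$ by a fully dynamic exact distance oracle supporting edge updates and point-to-point distance queries in $\Ot(n^{2/3})$ time.

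First I would check that each $G_i$ remains planar at all times. Indeed, $G_i$ is obtained from $G$ by deleting a subset of its edges and changing the weights of the remaining ones, and neither operation affects planarity; a planar embedding of $G_i$ is simply inherited from one of $G$. An edge insertion into $G$ that preserves $G$'s planarity induces at each scale at most one edge insertion into $G_i$, which preserves $G_i$'s planarity as well (and can be realized inside the corresponding face of the inherited embedding) because $G_i$ never has more edges than $G$; edge deletions are unconditional. Moreover every weight occurring in any $G_i$ is non-negative, being a rounded non-negative weight, so no $G_i$ contains a negative cycle and the queried distances are always finite or $+\infty$. Consequently each $G_i$ can be maintained by a fully dynamic planar distance oracle such as those of~\cite{FR, KaplanMNS17, MSSP}, which support planarity-preserving edge insertions and deletions and answer point-to-point distance queries, each operation in $\Ot(n^{2/3})$ time.

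Putting the two observations together, every edge update to $G$ triggers $O(1)$ edge updates and $O(1)$ point-to-point distance queries on each of the $N=O(\log(nW)/\eps)$ planar distance oracles, for a total amortized cost of $N\cdot\Ot(n^{2/3})=\Ot(n^{2/3}\log(W)/\eps)$; here the $\log n$ term hidden inside $\log(nW)$ gets absorbed by the $\Ot$-notation, and the amortization is exactly the one already present in Theorem~\ref{t:appr-cycle}. Zero-weight cycles, which force the estimate down to $0$, are handled transparently by the reduction: such a cycle through the updated edge $(u,v)$ is certified by a zero answer to the query $\operatorname{dist}_{G_i}(v,u)$, and since no weight is negative no extra case analysis is needed.

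I expect the only genuine work to be a careful interface check: that the reduction behind Theorem~\ref{t:appr-cycle} really issues only $O(1)$ edge updates and $O(1)$ \emph{point-to-point} queries per oracle per update --- single-source queries would be far too expensive for a planar oracle --- and that ``planarity-preserving'' in the sense demanded by the dynamic planar oracles (typically: inserting the new edge inside a face of a maintained combinatorial embedding) follows from ``planarity-preserving'' for $G$ once one passes to the subgraph $G_i$ and its inherited embedding. Both are routine; everything else is a direct substitution.
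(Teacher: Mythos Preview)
Your approach is correct and matches the paper's: plug the $\Ot(n^{2/3})$ planar dynamic distance oracle (Theorem~\ref{t:dist-oracle}) into each threshold-detection data structure, observing that for edge updates the amortized cost per threshold reduces to $O(1)$ oracle edge updates plus one point-to-point query, and then take the black-box reduction of Section~\ref{s:appr}. One small correction: the subgraphs maintained by the oracles are \emph{not} rescaled or rounded --- each is simply the graph $G_0=(V,E_0)$ of the corresponding threshold-detection instance, with the original edge weights; what varies across the $O(\log(nW)/\eps)$ instances is the threshold~$\mu$, not the weights.
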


Previously, no sublinear fully dynamic algorithm for minimum weight cycle in planar
directed graphs has been described. An \emph{exact} algorithm for planar \emph{undirected} graphs
with $\Ot(n^{5/6})$ update time was given by Łącki and Sankowski~\cite{LackiS11}.

There is a well-known correspondence between simple cuts in an undirected plane graph~$G$,
and simple cycles in its dual $G^*$.
The correspondence, in a way, extends to \emph{directed} planar graphs (see e.g.~\cite{LiangL17, MozesNNW18}). Nevertheless,
currently the best known min $s,t$-cut algorithms in planar digraphs~\cite{BorradaileK09, Erickson10} are less efficient and
use entirely different techniques than their counterparts for planar undirected graphs~\cite{ItalianoNSW11}.
Generally speaking, for cut/flow applications, undirected planar graphs proved
much more friendly to work with (see e.g., the discussion in~\cite{Erickson10}~or~\cite{MozesNNW18}).
As an example of this phenomenon, an \emph{exact} fully dynamic max $s,t$-flow \emph{oracle} (accepting
$s,t$ as query parameters) with $\Ot(n^{2/3})$ update and query time
exists for undirected plane graphs~\cite{ItalianoNSW11},
whereas no such dynamic algorithm has been described for \emph{directed} plane graphs, even
allowing approximation and just a single fixed source-sink pair.

It is known that in a plane digraph $G$, an $s,t$-flow of value $f$ can be routed
if and only if the dual $G_{s,t,f}^*$ of a certain augmentation of $G_{s,t,f}$ depending on $s,t$ and $f$
contains no negative cycles~\cite{Erickson10, Johnson87, nussbaum2014network}.
Roughly speaking, since the algorithm of Theorem~\ref{t:planar-cycle-intro} supports negative weights,
by running it on
$G_{s,t,f}^*$ for $O(\log{(nW)}/\eps)$ distinct values of $f$, we obtain the following.

\begin{restatable}{theorem}{planarflow}\label{t:planar-flow-intro}
  Let $G$ be a plane embedded digraph 
  with real edge capacities in $\{0\}\cup [1,W]$ and a \emph{fixed} source/sink pair $s,t$.
  There exists an algorithm maintaining a $(1-\eps)$-approximate estimate of the value of maximum $s,t$-flow
  in $G$ under \emph{embedding preserving} edge insertions and deletions
  with $\Ot(n^{2/3}\log{(W)}/\eps)$ amortized update time.
\end{restatable}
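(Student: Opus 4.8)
The plan is to reduce approximate maximum flow to maintaining approximate minimum weight cycles in a bounded number of planar dual graphs, and then invoke Theorem~\ref{t:planar-cycle-intro}. Concretely, I would use the planar flow/negative-cycle characterization of~\cite{Erickson10, Johnson87, nussbaum2014network}: for a fixed real $f\ge 0$, a feasible $s,t$-flow of value $f$ exists in $G$ iff the weighted plane graph $G^*_{s,t,f}$ — obtained from the dual of a fixed augmentation $G_{s,t,f}$ of $G$ (add the return arc $t\to s$, plus $O(1)$ auxiliary arcs if $s,t$ do not lie on a common face) by, for each primal edge $e$, giving its dual arc(s) weights determined by the capacity constraint $0\le x(e)\le c(e)$, and giving the arc dual to $t\to s$ a weight that is an affine, monotonically decreasing function of $f$ — contains no negative-weight cycle. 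Since flows are monotone in their value, the set of feasible values is an interval $[0,F]$, where $F$ is the maximum flow value; hence it suffices to test feasibility of $f$ for $O(\log(nW)/\eps)$ geometrically spaced candidates $f_i$ and report the largest feasible one as the estimate of $F$ (it is automatically a lower bound on $F$, and the grid density yields a $(1-O(\eps))$-factor guarantee once the rounding discussed below is accounted for).

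Next I would arrange each feasibility test to be maintainable by one instance of Theorem~\ref{t:planar-cycle-intro}. Work in the standard dynamic-planar model: fix once and for all a triangulation $\hat G$ of the plane on the vertex set and embedding of $G$ (including the augmenting arcs), so that the current graph is the subgraph of $\hat G$ on the ``present'' edges, an absent edge being modeled as capacity $0$. Then, as a plane graph, $G^*_{s,t,f}$ is exactly the fixed dual $\hat G^*$ together with $O(1)$ extra arcs; only the weights vary. An embedding-preserving insertion/deletion in $G$ is a capacity change of a single edge $e$, and since the constraint $0\le x(e)\le c(e)$ determines the weights of $e$'s dual arc(s) through $c(e)$ alone (with $c(e)=0$ corresponding to both dual arcs of $e^*$ having weight $0$), this amounts to changing $O(1)$ edge weights in each $G^*_{s,t,f}$, i.e. $O(1)$ planarity-preserving updates to each of the $O(\log(nW)/\eps)$ instances of the Theorem~\ref{t:planar-cycle-intro} data structure (simulating a weight change by deleting and reinserting the corresponding triangulation edge). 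The key observation that keeps the bound at a single $1/\eps$ is that a feasibility test only needs the \emph{sign} of the maintained estimate $\mc'$: since $\mc(G^*_{s,t,f})\le\mc'\le(1+\eps')\mc(G^*_{s,t,f})$, we have $\mc'<0$ (equivalently $\mc'=-\infty$) iff $\mc(G^*_{s,t,f})=-\infty$ iff $f$ is infeasible, for \emph{any} accuracy $\eps'$. So each instance may be run with a constant parameter $\eps'=1$, costing $\Ot(n^{2/3}\log W)$ amortized per update; over the $O(\log(nW)/\eps)$ instances (plus $O(\log(nW)/\eps)$ time to recompute the largest feasible $f_i$) this totals $\Ot(n^{2/3}\log(W)/\eps)$ amortized.

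The step I expect to require the most care is controlling the range of \emph{positive} cycle weights in $G^*_{s,t,f}$, because the guarantee of Theorem~\ref{t:planar-cycle-intro} (inherited from Theorem~\ref{t:appr-cycle}) degrades with $\log(C/c)$ and, with real capacities, a cycle through the flow arc can have weight $\bigl(\text{cut value}\bigr)-f$ arbitrarily close to $0$. I would fix this by rounding every capacity \emph{down} to a multiple of $\delta:=\eps/n$ and choosing every candidate $f_i$ to be a multiple of $\delta$ as well (geometrically spaced in $[\delta,\,O(nW)]$, which is still $O(\log(nW)/\eps)$ values). Then every cycle weight in each $G^*_{s,t,f_i}$ is an integer multiple of $\delta$, so any positive cycle has weight in $[\delta,\,O(n^2W)]$ and $\log(C/c)=O(\log(nW/\eps))=\Ot(\log W)$; moreover rounding decreases the minimum $s,t$-cut, hence the maximum flow, by at most $m\delta=O(\eps)$, and since an $s,t$-flow value is either $0$ or at least $1$ (a nonempty directed cut has capacity $\ge 1$), this is a multiplicative $(1-O(\eps))$ loss whenever $F\ge 1$, while the case $F=0$ is detected exactly by infeasibility of $f=\delta$; rescaling $\eps$ by a constant then gives the stated bound. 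The remaining routine points — that the Erickson-style characterization is monotone in $f$ and valid for arbitrary (non-$st$-planar) $s,t$, that the return arc $t\to s$ sits in the fixed triangulated dual so that varying $f$ across instances touches only that one arc, and that an absent primal edge genuinely corresponds to a weight-$0$ dual arc so toggling is an $O(1)$-weight-change operation — I would dispatch using the cited machinery.
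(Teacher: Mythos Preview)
Your high-level plan—test feasibility of $O(\log(nW)/\eps)$ geometrically spaced flow values $f$ by checking for a negative cycle in the dual of an augmented graph—is exactly the paper's approach. The gap is in how you pass from primal updates to dual updates.

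The claim that you can ``fix once and for all a triangulation $\hat G$'' and model the current graph as a weighted subgraph of $\hat G$ does not hold for general embedding-preserving updates: over a sequence of insertions and deletions the set of distinct edges that ever appear can have size far exceeding $3n-6$, so no single planar graph contains them all, and an embedding-preserving insertion is \emph{not} in general ``a capacity change of a single edge''. Relatedly, when $s$ and $t$ do not share a face you cannot get by with ``$O(1)$ auxiliary arcs'': Lemma~\ref{l:flow} embeds an $s\to t$ \emph{path} $P$ whose hop-length depends on the current face structure of $G$, and that structure changes with every update. The paper's proof confronts precisely this: it maintains $P$ dynamically under the invariant that $P$ crosses each face of $G$ at most once, reroutes $P$ locally whenever an inserted or deleted primal edge would violate the invariant, and shows that each primal edge update causes only $O(1)$ \emph{amortized} edge updates to $G_{P,f}$ and hence to the augmented dual $(G_{P,f})^*_1$ (the $G^*_1$ construction is what absorbs the dual vertex splits and edge contractions that a primal edge update induces). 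This rerouting argument, together with the bounded-degree dual augmentation, is the non-routine content you are dismissing as ``cited machinery''.

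A secondary point: the detour through Theorem~\ref{t:planar-cycle-intro} and the capacity-rounding step are unnecessary. Since only the \emph{existence} of a negative cycle matters, the paper applies the threshold data structure of Theorem~\ref{t:threshold-planar} with $\mu=0$ directly to each $(G_{P,f})^*_1$; its $\Ot(n^{2/3})$ amortized update time has no dependence on the range of positive cycle weights, so no rounding is needed and you avoid the extra $\log W$ factor your route incurs.
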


To the best of our knowledge, the above constitutes the first known fully dynamic
maximum $s,t$-flow algorithm for plane directed graphs with a sublinear update time bound.

\paragraph{Exact fully dynamic minimum weight cycle and MPSP.}
Finally, we consider maintaining the minimum weight cycle \emph{exactly} in a fully dynamic \emph{real-weighted} digraph.
We show:

\begin{restatable}{theorem}{exactcycle}\label{t:exact-cycle}
  Let $G$ be a real-weighted digraph.
  There exists a Monte Carlo randomized fully dynamic algorithm maintaining
  $\mc(G)$ under vertex updates
  with $O((m+n\log{n})n^{2/3}\log^{4/3}{n})$ worst-case update time.
  The answers produced are correct with high probability.\footnote{That is, with probability at least $1-1/n^c$ for any
  chosen constant $c\geq 1$.}
\end{restatable}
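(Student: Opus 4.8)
The plan is to reduce the minimum weight cycle problem to a bounded-size multiple-pairs shortest paths (MPSP) problem maintained dynamically, and then to give a dynamic MPSP data structure achieving $\Ot((m+k)n^{2/3})$ worst-case update time by generalizing the fully dynamic APSP structure of Abraham, Chechik, and Krinninger~\cite{AbrahamCK17}. First I would observe that $\mc(G)$ equals $\min_{(u,v)\in E}\bigl(w(u,v)+\dist_{G}(v,u)\bigr)$, where $\dist_G(v,u)$ is the shortest $v\to u$ distance; a negative cycle makes some such quantity $-\infty$, which is exactly the convention $\mc(G)=-\infty$ adopted in the paper. So it suffices to know, after each update, the $m$ distances $\dist_G(v,u)$ for the $m$ ordered pairs $(v,u)$ reversed from the current edge set. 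The subtlety is that the relevant set of pairs \emph{changes} with each vertex update (edges incident to the updated vertex are replaced), but a vertex update touches only $O(n)$ edges, so the set of queried pairs changes by $O(n)$ per update; I would handle this by letting $k=O(m)$ be an upper bound on the current edge count and allowing the MPSP structure to support inserting/deleting a source-target pair from its tracked set in amortized $\Ot(n^{2/3})$ time as part of an update.

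The core technical work is the dynamic MPSP data structure. The ACK fully dynamic APSP algorithm~\cite{AbrahamCK17} maintains, with $\Ot(n^{2+2/3})$ worst-case time per vertex update, a hierarchy of \emph{shortest-path hitting sets} / congestion-balanced path systems: it samples $\Ot(n^{1/3})$ ``centers'' at a coarse level, maintains (via periodic rebuilds staggered across a window of $\Ot(n^{2/3})$ updates) short canonical paths through these centers, and answers an $(s,t)$ query by combining $O(1)$ levels of precomputed distances. The key insight I would exploit is that the $n^{2+2/3}$ cost decomposes as roughly $n^{2/3}$ (the rebuild window / staleness bound) times $n^2$ (the cost of an APSP-type computation touching all pairs at each rebuild), and that if we only need $k$ fixed target pairs, each rebuild phase need only (re)compute Dijkstra/Bellman-Ford-type single-source trees from the $\Ot(\sqrt{n})$-or-so sampled vertices plus the $O(k)$ query endpoints over a graph with $\Ot(m)$ edges, giving $\Ot(m+k)$ per source and $\Ot((m+k)\sqrt{n})$ or so per rebuild — times the $n^{2/3}$ staleness window this yields $\Ot((m+k)n^{2/3})$. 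I would reconstruct the ACK machinery (the "critical path" sampling to hit long shortest paths, the staggered rebuild schedule ensuring worst-case bounds, and the correctness argument that every shortest path of hop-length $>$ threshold is hit by a sampled vertex with high probability) while carefully re-accounting every cost in terms of $m$, $k$, and the number of sampled vertices rather than $n^2$; real weights are handled exactly as in~\cite{AbrahamCK17} since that algorithm already works for real weights, and negative edges / negative cycle detection come for free from running Bellman–Ford in place of Dijkstra on the (small) auxiliary graphs, which only affects logarithmic factors. Plugging $k=\Theta(m)$ into the MPSP bound gives $\Ot(m\cdot n^{2/3})$, matching the claimed $O((m+n\log n)n^{2/3}\log^{4/3}n)$ once the precise polylogarithmic overheads of the hierarchy (depth $O(\log n)$, plus the $\log^{1/3}n$-type factors inherent in ACK's sampling thresholds) are tracked.

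The main obstacle I expect is verifying that the ACK hitting-set argument and, crucially, its \emph{worst-case} (de-amortization via staggered rebuilds) schedule still go through when we restrict attention to $k$ pairs and, simultaneously, allow the pair-set itself to change by $O(n)$ per update. Concretely: ACK's correctness relies on the sampled center set being fixed for a whole rebuild window and hitting all ``long'' shortest paths among \emph{all} pairs; when only $k$ pairs are tracked we still want to sample enough centers to hit the long shortest paths between those pairs, and we must argue that pairs \emph{entering} the tracked set mid-window (due to a vertex update) are also served correctly — either by re-deriving distances through the current hierarchy for the newly-inserted pairs at insertion time, or by ensuring the sampling density (still $\Ot(n^{1/3})$ centers, oblivious to which pairs are tracked) hits the relevant paths regardless. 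The randomness must also be hidden from an oblivious adversary across the whole window, exactly as in~\cite{AbrahamCK17}, so the Monte Carlo ``with high probability'' guarantee is inherited rather than needing a new argument. Once these bookkeeping points are settled, the reduction from minimum weight cycle to MPSP with $k=\Theta(m)$ is immediate and the theorem follows.
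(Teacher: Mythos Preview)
Your high-level plan---reduce $\mc(G)$ to MPSP with $k=\Theta(m)$ reversed-edge pairs, then generalize the batch-deletion data structure of~\cite{AbrahamCK17} to MPSP---matches the paper. However, the proposal has a genuine gap at its technical core.

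You describe the $\Ot(n^{2+2/3})$ cost in~\cite{AbrahamCK17} as decomposing into $n^{2/3}$ times an $n^2$-type APSP cost, and claim that restricting to $k$ pairs and ``re-accounting every cost in terms of $m,k$'' suffices. This is not so. In the batch-deletion query of~\cite{AbrahamCK17}, two $\Omega(n^2)$ contributions are \emph{oblivious} to which pairs you track: (a) each sketch graph $H_i$ on which Dijkstra is rerun has $\Omega(n)$ edges by construction, and there are $\ell=\Omega(n/h)$ of them, so $\sum_i |E(H_i)|=\Omega(n\ell)$ regardless of $m$; and (b) for every destroyed path $u\to v_i$ (or $v_i\to u$), stitching the rebuilt path into the distance estimates costs $\Theta(n)$ because all $n$ potential partners of $u$ must be visited. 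Merely tracking $k$ pairs does not shrink either term.

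The paper's key idea, absent from your proposal, is a \emph{weighted} congestion scheme: when a $\leq h_i$-hop path $\pi^{\mathrm{from}}_{i,j}(v)$ or $\pi^{\mathrm{to}}_{i,j}(v)$ is built during preprocessing, it contributes $\deg_G(v)+\deg_K(v)+\log n$ (rather than $1$, as in~\cite{AbrahamCK17}) to the congestion $\alpha(x)$ of every vertex $x$ on it, where $K$ is the graph on $V$ whose edges are the tracked pairs. The most-congested-vertex selection then yields $\alpha(x)=\Ot(h_i(m+n\log n+k))$ for all $x$. This weighting is exactly what lets one charge both the total sketch-graph size (via the $\deg_G(v)+\log n$ part) and the total stitching cost (via the $\deg_K(v)$ part) to $\sum_{x\in D}\alpha(x)=\Ot(dh_i(m+n\log n+k))$. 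Without this idea the batch-deletion query time remains $\Omega(n^2)$ even when $m,k=O(n)$, and no amount of bookkeeping recovers the claimed bound.

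Two smaller points. First, supporting dynamic insertion/deletion of tracked pairs is unnecessary: the paper instead fixes the pair set to the reversed edge set at each phase rebuild, uses the batch-deletion structure to obtain $\mc$ on $G_0\setminus(D\cup D^*)$, and handles cycles through the $O(\Delta)$ updated/high-degree vertices $D\cup D^*$ by direct Dijkstra runs on $G$ (Observation~\ref{obs:through} and Lemma~\ref{l:price}). Second, substituting Bellman--Ford for Dijkstra to handle negative weights would cost a factor of $n$, not a log; the paper instead maintains a feasible price function of $G$ under vertex updates in $O(m+n\log n)$ worst-case time (Theorem~\ref{t:worst-case-cycle}, via a reduction to minimum-cost circulation) and uses it to enable Dijkstra throughout.
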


Note that for sparse graphs, Theorem~\ref{t:exact-cycle} allows recomputing
the minimum weight cycle in $\Ot(n^{5/3})$ time, i.e., polynomially
faster than recompute-from-scratch and the dynamic algorithm of Demetrescu and Italiano~\cite{DemetrescuI04}.
However, observe that~\cite{DemetrescuI04} yields a better amortized update bound for $m=\omega(n^{4/3})$.

In order to obtain Theorem~\ref{t:exact-cycle}, we generalize the fully dynamic
APSP algorithm of Abraham et al.~\cite{AbrahamCK17} in a non-trivial way 
to solve what we call
the \emph{multiple pairs shortest paths} problem (\emph{MPSP}).
In the MPSP problem, which may be of independent interest, one requires to maintain only $k$ fixed
entries of the distance matrix, i.e., after each update we are interested in distances
between some source-target pairs $(s_i,t_i)$ for $i=1,\ldots,k$.
Recall that the minimum weight cycle of a directed graph can be computed by inspecting
distances for~$m$ source-target pairs.
We obtain the following bound for the fully dynamic MPSP problem.

\begin{restatable}{theorem}{tmpsp}\label{t:mpsp}
  Let $G$ be a real-weighted digraph.
  There exists a Monte Carlo randomized fully dynamic MPSP data structure
  supporting vertex updates
  with $O((m+n\log{n}+k)n^{2/3}\log^{4/3}{n})$ worst-case update time.
  The answers produced are correct with high probability.
\end{restatable}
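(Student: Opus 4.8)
The plan is to generalize the fully dynamic APSP data structure of Abraham et al.~\cite{AbrahamCK17} by observing that its running time is governed by (i) the number of \emph{centers} it samples, (ii) the cost of a handful of shortest-path computations, and (iii) loops that range over all $n$ vertices as sources/targets of maintained sub-distances; for MPSP the loops of type~(iii) can be restricted to the $O(k)$ \emph{terminals} $S\cup T=\{s_1,\dots,s_k,t_1,\dots,t_k\}$, and the computations of type~(ii) can exploit the sparsity of $G$.

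First recall the high-level shape of~\cite{AbrahamCK17}. One fixes a hop threshold $\tau$ and an $O(\log n)$-level hierarchy; at each level $\Ot(n/\tau)$ vertices are sampled uniformly at random as centers, and after every vertex update the structure recomputes (a) shortest-path out-trees and in-trees rooted at all centers, and (b) hop-bounded distances --- obtained by an intricate recursion performing $\Ot(\tau)$ rounds of edge relaxation on the graph and on contracted auxiliary graphs --- that let one stitch consecutive centers along a shortest path together. Correctness rests on the fact that, with high probability, every shortest path either has fewer than $\tau$ hops (and is then captured directly by the hop-bounded layer) or meets a sampled center at every relevant scale, so that it decomposes into $O(\log n)$ hop-bounded pieces whose endpoints are centers. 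Choosing $\tau=\Ot(n^{1/3})$, so the number of centers is $\Ot(n^{2/3})$, balances the cost of (a) against that of (b) and yields $\Ot(n^{2+2/3})$ for dense graphs.

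To obtain the MPSP bound we make three changes. (1) Designate $S\cup T$ as a set of at most $2k$ fixed terminals. Then the quantities $\dist_G(s_i,c)$ and $\dist_G(c,t_i)$ needed to route an $s_i$-$t_i$ path through a center $c$ are read off directly from the maintained in-tree and out-tree of $c$, so the only genuinely per-pair work is the combination $\min_c\big(\dist_G(s_i,c)+\dist_G(c,t_i)\big)$ over the $\Ot(n^{2/3})$ centers, costing $\Ot(k\,n^{2/3})$ in total, plus maintaining, for each queried pair, the distance along paths with at most $\tau$ hops, for which we use the hop-bounded layer of~\cite{AbrahamCK17} restricted to the center-and-terminal set. (2) Replace every step of~\cite{AbrahamCK17} that iterates over all $n$ vertices as a source or a target of a maintained sub-distance by one iterating only over the centers and the terminals; since a query never asks for a distance one of whose endpoints is an ordinary (non-center, non-terminal) vertex, such a vertex is only ever needed in the \emph{interior} of a hop-bounded piece, where it is still swept by the edge-relaxation rounds. (3) Implement the single-source computations by heap-based Dijkstra on reduced costs --- using feasible potentials the hierarchy maintains --- so each of the $\Ot(n^{2/3})$ of them costs $O(m+n\log n)$ rather than $\Theta(n^2)$, and run the auxiliary hop-bounded computations on the sparse graph. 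Summing over the $O(\log n)$ levels and absorbing the $\Ot(\cdot)$ slack in the center counts (which, as in~\cite{AbrahamCK17}, accounts for the $\log^{4/3}n$ factor) yields worst-case update time $O\big((m+n\log n+k)\,n^{2/3}\log^{4/3}n\big)$; it is worst-case, not amortized, because, exactly as in~\cite{AbrahamCK17}, the hierarchy is rebuilt periodically in the background. The high-probability correctness is inherited verbatim: the Chernoff-plus-union-bound argument that a fixed shortest path is hit by the sampled centers is unaffected by restricting the bookkeeping to centers and terminals, and re-randomizing the samples sufficiently often (or assuming an oblivious adversary, as in~\cite{AbrahamCK17}) lifts the guarantee to the worst-case update model.

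The main obstacle is change~(2) together with the sparse re-accounting of the hop-bounded layer: the recursion in~\cite{AbrahamCK17} maintains hop-bounded distances between \emph{all} relevant pairs because its level-$\ell$ shortcut edges feed into level $\ell+1$, so one must check carefully that truncating the stored sub-distances to the center-and-terminal set still produces correct $s_i$-$t_i$ distances at the top of the hierarchy --- i.e.\ that a non-center, non-terminal vertex is never needed as the \emph{endpoint} of a stored distance, only as an interior vertex of a relaxation sweep --- and, in parallel, that all the auxiliary computations (whose contracted graphs need not be sparse) can nonetheless be arranged to cost $\Ot((m+n\log n)\,n^{2/3})$, e.g.\ by never materializing all-pairs distances among centers and instead recomputing the $O(\log n)$ hop-bounded pieces of each queried path on demand. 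A final technical point is to re-verify the negative-weight handling of~\cite{AbrahamCK17} in the absence of an explicit distance matrix --- in particular that the feasible potentials underpinning the reduced-cost Dijkstra runs remain available --- and to decide how a pair $(s_i,t_i)$ separated by a negative cycle reports $-\infty$, which is exactly what the exact minimum weight cycle application of Theorem~\ref{t:exact-cycle} needs. Once these are in place, the remaining verification is routine.
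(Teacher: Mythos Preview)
Your plan rests on a mischaracterization of~\cite{AbrahamCK17}. That data structure is \emph{not} a recursive hierarchy in which ``level-$\ell$ shortcut edges feed into level $\ell+1$''; as the paper itself remarks, the hierarchical approaches~\cite{DemetrescuI04,GutenbergW20,Thorup04b} are precisely the ones that do \emph{not} adapt to MPSP, because a small set of top-level pairs can force $\Theta(n^2)$ sub-pairs below. What~\cite{AbrahamCK17} actually does (in the batch-deletion formulation) is, for each hop scale $h_i$, fix an \emph{ordered} set of centers $c_{i,1},\dots$ via a congestion argument, precompute $\leq h_i$-hop shortest paths $\pto_{i,j}(v),\pfrom_{i,j}(v)$ from/to each $c_{i,j}$ in $G\setminus\{c_{i,1},\dots,c_{i,j-1}\}$, and upon removal of $D$ rebuild only the destroyed paths by running Dijkstra on \emph{sketch graphs} $\hto_{i,j},\hfrom_{i,j}$.

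Your proposed change~(2), ``restrict stored sub-distances to centers and terminals'', does not remove the real bottleneck. The sketch graph $\hto_{i,j}$ must contain \emph{every} vertex $z$ whose precomputed path $\pto_{i,j}(z)$ meets $D$---not just terminals---because a new shortest $\leq h_i$-hop path from $c_{i,j}$ to a terminal $t_l$ may pass through such $z$ (this is exactly what the proof of the sketch-graph lemma uses). So you still need all the paths $\pto_{i,j}(\cdot)$ in preprocessing, and the total sketch-graph size is still governed by $\sum_{j}\sum_{u\in U_{i,j}}(\deg_G(u)+\log n)$, which in~\cite{AbrahamCK17}'s analysis is only bounded by $\Ot(hn^2d)$. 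Your changes~(1) and~(3) do nothing to this term; heap-based Dijkstra on a sketch graph cannot beat the sketch graph's own edge count.

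The paper's actual proof supplies the missing idea: a \emph{weighted} congestion scheme. When choosing the center ordering, the congestion a path $\pto_{i,j}(u)$ (or $\pfrom_{i,j}(u)$) places on each of its vertices is $\deg_G(u)+\deg_K(u)+\log n$, where $K$ is the graph whose edges are the $k$ query pairs. Replaying the most-congested-first argument then bounds $\alpha(x)$ by $\Ot(h_i(m+n\log n+k))$ for every $x$, so summing over $x\in D$ simultaneously bounds (i) the total sketch-graph Dijkstra cost via the $\deg_G(u)+\log n$ portion and (ii) the restitching cost $\sum_l|X_i(s_l,t_l)|$ via the $\deg_K(u)$ portion. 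This is the step your plan is missing; without it the ``sparse re-accounting of the hop-bounded layer'' you flag as the main obstacle does not go through.
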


Note that the aforementioned data structure of Roditty and Zwick~\cite{RodittyZ11} trivially
implies an MPSP data structure for \emph{unweighted} digraphs with $\Ot(m\sqrt{n}+kn^{3/4})$
amortized update bound.
Our result shows that a better (even worst-case) bound for (even real-weighted) sparse graphs can be achieved
if the set of source-target pairs is fixed throughout.

Actually, just as the worst-case update time of the data structure
of Abraham et al.~\cite{AbrahamCK17} can be very easily improved to $\Ot(n^{2.5})$
for \emph{unweighted} graphs~\cite[Section~4.2]{AbrahamCK17}, an unweighted variant of our MPSP data structure has
$\Ot((m+k)\sqrt{n})$ worst-case update time.

Interestingly, it seems that the other known approaches to fully dynamic APSP
in real-weighted graphs~\cite{DemetrescuI04, GutenbergW20, Thorup04b},
if adjusted,
cannot easily yield subquadratic (in $n$) update times for ``sparse'' instances
of MPSP where $m,k=O(n)$.
This is because they all reconstruct shortest paths in a hierarchical manner,
by inductively stitching~\cite{GutenbergW20} or extending~\cite{DemetrescuI04, Thorup04b}
paths recomputed earlier in the process.
Even though the number of input source-target pairs of interest may be small,
these may require answers for $\Theta(n^2)$ distinct source-target pairs at lower levels of the hierarchy.
The data structure of Abraham et al.~\cite{AbrahamCK17}, on the contrary,
does not use a hierarchical approach and can be thought as using a single ``stitching layer''.

Since the algorithm behind Theorem~\ref{t:exact-cycle} (Theorem~\ref{t:mpsp}) is exact, the
maintained information, i.e., the minimum weight of a cycle (the entries of the distance
matrix of interest, resp.) is unique.
Therefore, if we are interested in maintaining the corresponding weight (distances, resp.) only,
the bounds in Theorems~\ref{t:exact-cycle}~and~\ref{t:mpsp} hold against an adaptive adversary.
However, if we are required to output some 
actual minimum weight cycle (edges on some of the desired shortest paths, resp.)
we have to assume an oblivious adversary.\footnote{Abraham et al.~\cite{AbrahamCK17} show how to extend their data structure so that it
is capable of tracking
lexicographically smallest shortest paths and thus works against an adaptive
adversary, even when returning actual paths is required.
Out of the box, this additional feature costs $\Omega(n^2)$ extra time per update, though.
Adapting this idea to minimum weight cycle and MPSP is an interesting possible further step.
}

\subsection{Related work}
\paragraph{Computing minimum weight cycles statically.}
The best known algorithm for computing the minimum
weight cycle in sparse graphs exactly runs in $O(nm)$ time~\cite{OrlinS17}.
One can improve upon this for graphs with
small integer weights using matrix multiplication~\cite{CyganGS15, ItaiR78, RodittyW11}.
A subcubic-time $(1+\eps)$-approximation can also be achieved this way~\cite{BringmannKW19, Zwick02}.
Much of the recent work regarded approximating the minimum weight cycle within factor at least 2~\cite{ChechikL21, ChechikLRS20, DalirrooyfardW20}.

\paragraph{Dynamic APSP.} Apart from the fully dynamic setting, APSP has also been widely studied
in partially dynamic settings.
There exist efficient exact algorithms for \emph{unweighted} digraphs
with $\Ot(n^3)$ total update time
in both incremental~\cite{AusielloIMN91} and decremental~\cite{BaswanaHS07, abs-2010-00937} settings.
The fully dynamic APSP algorithm~\cite{DemetrescuI04, Thorup04b} is known
to have total update time $\Ot(n^3)$ in the decremental setting for real-weighted digraphs, but only when each update removes \emph{all} edges incident to a vertex (and thus there are at most $\leq n$ updates).
For weighted digraphs, a nearly optimal $\Ot(nm\log(W)/\eps)$ total update time partially dynamic algorithm is known
in the $(1+\eps)$-approximate setting~\cite{Bernstein16}.
This algorithm assumes an oblivious adversary though.
Less efficient algorithms that are either deterministic or assume an adaptive adversary are known~\cite{abs-2010-00937, KarczmarzL19, KarczmarzL20}.
Note that many of the above algorithms maintain the distance matrix explicitly
so they can be obviously used to maintain the minimum weight cycle (possibly approximately) in the respective partially dynamic scenarios.

Dynamic APSP has also been studied in undirected graphs~\cite{Bernstein09, BernsteinR11, Chechik18, ChuzhoyS21, GutenbergW20, Henzinger16, RodittyZ12}.

\subsection{Organization of the paper}
The rest of this paper is organized as follows.
In Section~\ref{s:prelims} we fix the notation. In Section~\ref{s:decision}
we show a fully dynamic \emph{threshold cycle detection} data structure that constitutes the
heart of the fully dynamic $(1+\eps)$-approximate minimum weight cycle algorithm
of Theorem~\ref{t:appr-cycle} proved in Section~\ref{s:appr}.
The applications of Theorem~\ref{t:appr-cycle} to planar graph algorithms,
in particular the proofs of Theorems~\ref{t:planar-cycle-intro}~and~\ref{t:planar-flow-intro},
are covered in detail in Section~\ref{s:planar}.
In Section~\ref{s:exact} we describe the exact fully dynamic minimum weight cycle
and fully dynamic MPSP algorithms.

\section{Preliminaries}\label{s:prelims}

In this paper we deal with \emph{real-weighted directed} graphs.
We write $V(G)$ and $E(G)$ to denote the sets of vertices and edges of $G$, respectively.
We denote by $n$ and $m$ numbers of vertices and edges (resp.) in the input graph.
A graph $H$ is a \emph{subgraph} of $G$, which~we~denote by $H\subseteq G$, if
and only if $V(H)\subseteq V(G)$ and $E(H)\subseteq E(G)$.
We write $uv\in E(G)$ when referring to edges of $G$ and use $\wei_G(uv)$
to denote the weight of $uv$.

For an edge set $F$, we sometimes write $G+F$ to denote the graph
$(V(G),E(G)\cup F)$.
If $F$ contains an edge $uv$ of weight $x$ and $uv\in E(G)$, then we assume
that $\wei_{G+F}(uv)=\min(\wei_G(uv),x)$.
For an edge $e$ we sometimes use $G+e$ to denote
$G+\{e\}$. 
For a subset $D\subseteq V$, we define $G\setminus D$ to be the graph
$G$ with all edges incident to vertices in $D$ removed.

A sequence of edges $P=e_1\ldots e_k$, where $k\geq 1$ and $e_i=u_iv_i\in E(G)$, is called
an $s\to t$ path in~$G$ if $s=u_1$, $v_k=t$ and $v_{i-1}=u_i$ for each $i=2,\ldots,k$.
For brevity we sometimes also express $P$ as a sequence of $k+1$ vertices $u_1u_2\ldots u_kv_k$ or as a subgraph of $G$ with vertices $\{u_1,\ldots,u_k,v_k\}$
and edges $\{e_1,\ldots,e_k\}$.
A path $P$ is \emph{simple} if $u_i\neq u_j$ for $i\neq j$.
A \emph{cycle} is a path such that $u_1=v_k$.
A~\emph{simple cycle} is a cycle that is a simple path.

The \emph{hop-length} of $P$ is the number of edges in $P$.
We also say that $P$ is a \emph{$k$-hop path}.
The \emph{length} of the path $\len(P)$ is defined as $\len(P)=\sum_{i=1}^k\wei_G(e_i)$.
For convenience, we sometimes consider a single edge $uv$ as a path of hop-length $1$.
If $P_1$ is a $u \to v$ path and $P_2$ is a $v \to w$ path, we denote by $P_1\cdot P_2$ (or simply $P_1P_2$) a path obtained by concatenating $P_1$ with $P_2$.

The \emph{distance} $\dist_G(u,v)$ between the vertices $u,v\in V(G)$ is the length
of the shortest $u\to v$ path in $G$, or $\infty$, if no $u\to v$ path
exists in $G$.

Note that the distance is well-defined only if~$G$ contains no negative cycles.
It is well-known that (1) $G$ has no negative cycles if and only if
there exists a \emph{feasible price function} $p:V\to\mathbb{R}$ satisfying
$\wei_G(e)+p(u)-p(v)\ge 0 $ for all $uv=e\in E(G)$,
and (2) given a feasible price function of $G$, one can compute single-source
shortest paths in $G$ using Dijkstra's algorithm even if $G$ has edges with negative weights.

Define $\mc(G)$ to be the infimum of $\len(C)$ through all cycles $C\subseteq G$.
Note that here $C$ is not necessarily a simple cycle: in general finding minimum weight simple cycles with arbitrary negative weights is NP-hard.
In particular, if $G$ contains no cycles at all, then we define $\mc(G):=\infty$.
If $G$ contains a negative cycle, then $\mc(G)=-\infty$.
On the other hand, if $\mc(G)\geq 0$, then $G$ contains a simple cycle $C'$ with $\len(C')=\mc(G)$.
We call any such cycle~$C'$
a \emph{minimum weight cycle}.
Observe that if $\mc(G)\geq 0$, then $\mc(G)=\min_{uv\in E(G)}\{\dist_G(v,u)+\wei_G(uv)\}$.
\begin{observation}\label{obs:through}
  Let $H$ be a non-negatively weighted digraph and let $v$ be its vertex.
  The minimum weight of a cycle in $H$ that goes through $v$ can be
  computed in $O(m+n\log{n})$ time.
\end{observation}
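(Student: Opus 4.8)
The plan is to reduce the problem to a single single-source shortest-paths computation. Observe that any cycle $C$ in $H$ through $v$ can be decomposed as a $v\to u$ path $P$ followed by the last edge $uv\in E(H)$ of $C$, where $u$ is the penultimate vertex of $C$ (possibly $u=v$, in which case $uv$ is a self-loop and $P$ is the trivial path). Hence $\len(C)=\len(P)+\wei_H(uv)\ge\dist_H(v,u)+\wei_H(uv)$. Conversely, for every edge $uv\in E(H)$ with $\dist_H(v,u)<\infty$, appending $uv$ to a shortest $v\to u$ path yields a cycle through $v$ (not necessarily simple, which is consistent with how $\mc$ is defined in the paper) of weight exactly $\dist_H(v,u)+\wei_H(uv)$. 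Consequently, the minimum weight of a cycle through $v$ equals
\[
  \min\bigl\{\dist_H(v,u)+\wei_H(uv) \;:\; uv\in E(H)\bigr\},
\]
where the minimum of an empty set, or of a set all of whose values are infinite, is understood as $\infty$ (which correctly signals that no cycle passes through $v$).

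To evaluate this expression within the claimed time bound, I would first compute $\dist_H(v,\cdot)$: since $H$ is non-negatively weighted, a single execution of Dijkstra's algorithm from $v$, implemented with Fibonacci heaps, produces $\dist_H(v,x)$ for all $x\in V(H)$ in $O(m+n\log n)$ time. I would then scan all edges entering $v$ --- there are at most $m$ of them --- and return $\min_{uv\in E(H)}\{\dist_H(v,u)+\wei_H(uv)\}$, which costs an additional $O(m)$ time. The overall running time is therefore $O(m+n\log n)$.

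I do not anticipate any genuine difficulty here; the only points requiring a moment's care are that the cycle in question need not be simple (matching the paper's convention for $\mc$, and harmless since non-negativity of the weights means the optimum is in any case attained by a simple cycle), and that non-negativity of the weights is exactly the hypothesis that makes $\dist_H(v,\cdot)$ well-defined and Dijkstra's algorithm applicable.
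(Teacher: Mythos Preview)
Your proposal is correct and follows exactly the same approach as the paper: compute single-source shortest paths from $v$ via Dijkstra's algorithm and return $\min_{uv\in E(H)}\{\dist_H(v,u)+\wei_H(uv)\}$. You merely supply more detail (the justification of the formula and the Fibonacci-heap running time) than the paper's two-line proof.
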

\begin{proof}
  First compute single-source shortest paths from $v$ using Dijkstra's algorithm.
  Note that the minimum weight cycle through $v$ has length
  $\min_{uv\in E(H)}\{\dist_H(v,u)+\wei_H(uv)$\}.
\end{proof}

When characterizing dynamic graph algorithms, we use the term \emph{edge update}
to refer to a graph update that changes (i.e., inserts, removes, or alters the weight) a single edge of $G$.
On the other hand, a \emph{vertex update} can change all edges
incident (incoming or outgoing) to a single chosen vertex $v\in V(G)$.
In this case, we say that such a vertex update is \emph{centered}~at~$v$.

\section{Fully dynamic threshold cycle detection}\label{s:decision}
Consider the following decision variant of the fully dynamic minimum
weight cycle problem.
Suppose we would like to maintain the information whether the minimum weight
$\mc(G)$ of a cycle in a real-weighted digraph $G$ is below some threshold $\thres\geq 0$.
In this section we show:
\begin{theorem}\label{t:threshold}
  Let $G$ be an initially empty real-weighted digraph and let $\mu\geq 0$. There exist a fully dynamic algorithm
    maintaining the information whether $\mc(G)<\mu$ and supporting vertex updates in $O(m+n\log{n})$ amortized time.
\end{theorem}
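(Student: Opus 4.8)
The plan is to reduce threshold cycle detection to a problem we can handle incrementally, and then use a standard technique for converting an incremental data structure into a fully dynamic one under vertex updates. First I would observe that since $\mu \geq 0$, deciding whether $\mc(G) < \mu$ can ignore the subtlety of negative cycles in the following sense: either $G$ has a negative or zero-weight cycle (in which case $\mc(G) \leq 0 < \mu$ and we should answer "yes"), or all cycles have positive weight and we genuinely need to compare the minimum positive cycle weight against $\mu$. A clean way to unify these cases is to work with the shifted weight function: guess-free, we simply want to detect a cycle $C$ with $\ell(C) < \mu$. Consider running, conceptually, a shortest-path computation that would expose such a cycle; the obstacle is that $G$ may have negative edges and we cannot afford Bellman–Ford-type $O(nm)$ work per update.

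The key idea I would pursue is \textbf{contraction/charging via the ``$\leq \sqrt{?}$'' trick is not what is needed here} — rather, the amortization should come from the \emph{vertex-update} model itself. Here is the approach. Maintain, in an \emph{incremental} fashion, a data structure that supports only edge insertions and answers "is there a cycle of weight $< \mu$?" Under vertex updates we can afford a rebuild-style amortization: when a vertex update centered at $v$ arrives, the edges incident to $v$ change, but all other edges are untouched; we can think of the sequence of updates as maintaining a graph to which we only ever \emph{add} edges, periodically (every so many updates) throwing everything away and rebuilding from scratch in $O(m + n\log n)$ time, so that the amortized cost is dominated by the incremental insertion cost plus $O(m+n\log n)$. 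The reason an incremental structure suffices: by Observation~\ref{obs:through}, once the graph is non-negatively weighted we can test in $O(m+n\log n)$ time whether there is a short cycle through a \emph{particular} vertex, and the natural candidate vertex is the one just touched.

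Concretely, the main steps I would carry out are: (1) Reduce to the non-negative case by maintaining a feasible price function $p$ for the ``reduced'' graph obtained by scaling/shifting so that we only track cycles of weight in a controlled range — actually, since we only need a \emph{yes/no} answer against a fixed $\mu$, it suffices to maintain whether a negative cycle exists in the graph $G$ with each edge weight decreased uniformly in a way that makes "cycle of weight $<\mu$'' equivalent to "negative cycle''; for a cycle on $h$ hops this shift is not uniform, so instead I would (1') subdivide or, better, observe that the minimum weight cycle, if $\geq 0$, is simple, so it has at most $n$ hops, and handle the at-most-$n$-hops structure directly. (2) Keep a feasible price function $p$ for the current graph (valid as long as no detected cycle is negative); after a vertex update centered at $v$, only edges at $v$ changed, so $p$ fails feasibility only on edges incident to $v$ — we can repair $p$ and simultaneously detect a violating cycle by running a single-source shortest-path computation from $v$ (Dijkstra with the old prices as the potential, à la Johnson), which costs $O(m + n\log n)$; if this computation finds that $v$ can reach a vertex $u$ with $uv \in E(G)$ and $\delta(v,u) + w(uv) < \mu$, answer "yes". (3) Argue correctness: any cycle of weight $<\mu$ that exists after the update either already existed before (and was detected, so the stored answer is still "yes") or uses a newly inserted edge incident to $v$, hence passes through $v$, hence is caught by the computation from $v$; conversely the test never produces false positives because it exhibits an actual closed walk of the claimed weight.

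The \textbf{main obstacle} I anticipate is handling negative edge weights cleanly while keeping the per-update cost at $O(m+n\log n)$: a vertex update can simultaneously destroy the feasibility of the price function \emph{and} create a negative cycle, and one must detect the latter without a full Bellman–Ford pass. The resolution I would use is that with a valid price function for $G$ before the update, the only possibly-negative reduced edges after a vertex update at $v$ are those incident to $v$; running Dijkstra from $v$ in the reduced graph (ignoring those finitely many bad edges as tentative relaxations from $v$, or equivalently treating $v$ as the unique source of ``slack'') either succeeds in producing a new feasible price function in $O(m+n\log n)$ time — certifying no negative cycle and letting us also read off the minimum-weight cycle through $v$ via Observation~\ref{obs:through} — or provably exhibits a negative (hence $<\mu$) cycle through $v$, which is exactly what we want to report. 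The amortization against $O(m+n\log n)$ per vertex update then follows, completing the proof.
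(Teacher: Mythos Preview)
Your proposal has a genuine gap: it correctly handles the transition of the answer from ``no'' to ``yes'', but gives no mechanism for the reverse transition. Concretely, suppose an earlier vertex update centered at some $w\neq v$ created a short cycle $C$ (through $w$), so your stored bit is ``yes''. Now a vertex update centered at $v$ deletes an edge of $C$ incident to $v$, destroying $C$. Your step~(2) runs Dijkstra from $v$ and finds no short cycle through $v$. At this point you cannot decide whether to keep the stored ``yes'': there may or may not be some \emph{other} short cycle in $G$ not passing through $v$ (such a cycle need not have been created by this update --- it could have existed all along, unnoticed because your algorithm only ever probed cycles through the currently updated vertex). Detecting whether \emph{any} short cycle survives after a deletion is exactly the hard part, and your argument ``it already existed before and was detected'' does not help, because you never stored enough information to re-certify the ``yes'' once its witness is gone. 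A related symptom: you maintain a feasible price function on the whole of $G$, but once $G$ acquires a negative cycle no such function exists; when that cycle is later destroyed you have no price function to fall back on and no way to rebuild one in $O(m+n\log n)$ time.

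The paper's proof addresses precisely this by maintaining a partition $E=E_0\cup E_1$ with the invariants (i) the subgraph $G_0=(V,E_0)$ has $\mc(G_0)\geq\mu$, and (ii) $E_1\neq\emptyset$ implies $\mc(G)<\mu$. Insertions go into $E_1$; an $\texttt{update}(v)$ routine tries to migrate $E_1(v)$ into $E_0$ by checking (via one Dijkstra, using a price function maintained on $G_0$ only) whether $G_0+E_1(v)$ still has $\mc\geq\mu$. On a deletion, one repeatedly calls $\texttt{update}$ on the vertices with $E_1(v)\neq\emptyset$ in FIFO order of their last insertion, stopping as soon as one fails to empty its $E_1(v)$; each successful call is charged to the corresponding earlier insertion, which yields the $O(m+n\log n)$ amortized bound. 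The crucial idea you are missing is this certified ``safe'' subgraph $G_0$ that localizes where a surviving short cycle could possibly be after a deletion.
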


The idea is to keep the edge set $E$ partitioned into two subsets $E_0$ and $E_1$ such that
the following two invariants are satisfied:
\begin{enumerate}[label=(\arabic*)]
  \item For $G_0=(V,E_0)$ we have $\mc(G_0)\geq \thres$.
  \item If $E_1\neq\emptyset$, then $\mc(G)<\thres$.
\end{enumerate}
Observe that by the above invariants, $\mc(G)<\thres$ \emph{if and only if} $E_1\neq\emptyset$.

Let us first consider the case when $G$ has non-negative edges only.
Then we can assume $\mu>0$ since the answer for $\mu=0$ is trivially ``no''.

We store $E_1$ partitioned into subsets $E_1(v)$ for $v\in V$,
so that each edge $uv\in E_1$ is stored in either $E_1(u)$ or $E_1(v)$
(this choice is arbitrary).
Since the data structure is initialized with an empty graph, initially $E_0=\emptyset$ and $E_1(v)=\emptyset$ for all $v\in V$.

We also store the vertices $v$ with $E_1(v)\neq \emptyset$ of $G$ in a list $Q$ sorted by the
time when the last insertion around $v$ happened, i.e., 
at the end of $Q$ we have a vertex that has been most recently subject
to insertion of edges around $v$.

Let us now describe an auxiliary procedure $\texttt{update}(v)$ that will be used to fix the invariants.
$\texttt{update}(v)$ does the following.
We assume that $E_1(v)\neq \emptyset$.
We compute the minimum weight~$x$ of a cycle going through~$v$ in $G_0+E_1(v)=(V,E_0\cup E_1(v))$
as described in Observation~\ref{obs:through}.
If $x\geq \thres$, the edges $E_1(v)$ are moved to $E_0$,
and the set $E_1(v)$ is emptied.
This change is reflected in $Q$ by removing $v$ from $Q$.

To handle and insertion of a set $F_v$ of edges centered at some vertex $v$,
we simply add the edges $F_v$ to $E_1(v)$, move $v$ to the end of $Q$, and, if $Q=\{v\}$, run $\texttt{update}(v)$.

To handle a deletion of an arbitrary set
of edges $F\subseteq E$, we first remove each edge $f\in F$
from $E_0$ or some set $E_1(w)$, wherever $f$ resides.
If some $E_1(w)$ is emptied this way, $w$ is removed from $Q$ accordingly.
Next, while $Q\neq\emptyset$, we repeatedly run $\texttt{update}(v)$ for the first element $v\in Q$
and stop if $\texttt{update}(v)$ fails to empty the respective set $E_1(v)$.

We now prove the correctness of the algorithm, whose pseudocode is given in Algorithm~\ref{alg:alg}.
\begin{algorithm}[h!]\caption{Detecting a cycle of weight less than $\thres$.}\label{alg:alg}

\label{alg1}
  \textbf{procedure} $\texttt{update}(v)$
  \begin{algorithmic}[1]
  \STATE $x:=\text{the minimum weight of a cycle going through } v \text{ in } G_0+E_1(v)$
  \IF{$x\geq\thres$}
    \STATE $E_0:=E_0\cup E_1(v)$
    \STATE $E_1(v):=\emptyset$
    \STATE $Q:=Q\setminus\{v\}$
  \ENDIF
\end{algorithmic}
  \vspace{2mm}
\textbf{procedure} $\texttt{insert}(F_v\neq\emptyset)$

  \begin{algorithmic}[1]
  \STATE $E_1(v):=E_1(v)\cup F_v$
  \STATE $\texttt{move-to-back}(Q,v)$
  \IF{$Q=\{v\}$}
    \STATE $\texttt{update}(v)$
  \ENDIF
\end{algorithmic}

  \vspace{2mm}
  \textbf{procedure} $\texttt{delete}(F\subseteq E(G))$

  \begin{algorithmic}[1]
  \STATE $E_0:=E_0\setminus F$
  \FOR{$uv=e\in F$}
    \FOR{$w\in \{u,v\}$}
      \STATE $E_1(w):=E_1(w)\setminus \{e\}$
      \IF{$E_1(w)=\emptyset$}
        \STATE $Q:=Q\setminus\{w\}$
      \ENDIF
    \ENDFOR
  \ENDFOR
  \WHILE{$Q\neq\emptyset$}
    \STATE $v:=\texttt{front}(Q)$
    \STATE $\texttt{update}(v)$
    \IF{$E_1(v)\neq\emptyset$}
      \STATE \textbf{break}
    \ENDIF
  \ENDWHILE
\end{algorithmic}
  \vspace{2mm}
\textbf{function} $\texttt{cycle-below-threshold}()$
  \begin{algorithmic}[1]
  \RETURN $Q\neq \emptyset$
\end{algorithmic}
\end{algorithm}

\begin{observation}\label{o:go-through}
  Suppose $\mc(G_0)\geq \thres$ and let $v\in V$. Then $\mc(G_0+ E_1(v))<\thres$
  if and only if the shortest cycle going through $v$ in $G_0+E_1(v)$
  has weight less than $\thres$.
\end{observation}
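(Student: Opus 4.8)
The plan is to exploit the single structural fact that makes the whole reduction work: every edge placed in $E_1(v)$ is incident to $v$. This is immediate from the description of the data structure, since the only way an edge enters the set $E_1(v)$ is via a call to $\texttt{insert}(F_v)$, and $F_v$ is by assumption a set of edges centered at $v$, i.e., all incident to $v$. I would state this as a one-line remark before the two implications.

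Given this fact, the ``if'' direction is trivial: the shortest cycle going through $v$ in $G_0+E_1(v)$ is in particular \emph{some} cycle contained in $G_0+E_1(v)$, so if it has weight less than $\thres$ then $\mc(G_0+E_1(v))<\thres$ by definition of $\mc$ as an infimum over cycle weights.

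For the ``only if'' direction, suppose $\mc(G_0+E_1(v))<\thres$. Then there is a cycle $C\subseteq G_0+E_1(v)$ with $\len(C)<\thres$. I would argue that $C$ must use at least one edge of $E_1(v)$: otherwise $C\subseteq G_0$, and since $\mc(G_0)\geq\thres$ rules out any cycle of $G_0$ of weight below $\thres$, we would get $\len(C)\geq\thres$, a contradiction. So let $e\in E_1(v)$ be an edge on $C$. Because $e$ is incident to $v$, the cycle $C$ passes through $v$; hence the shortest cycle through $v$ in $G_0+E_1(v)$ has weight at most $\len(C)<\thres$, as required.

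I do not expect any genuine obstacle here. The only points worth a moment's care are purely definitional: in the ``only if'' direction one should avoid assuming $C$ is simple or of minimum weight — any cycle witnessing $\mc(G_0+E_1(v))<\thres$ suffices — and one should note that in the non-negatively weighted regime currently under consideration (with $\thres\geq 0$) the relevant infima behave as expected, so ``$\mc(\cdot)<\thres$'' is genuinely equivalent to ``there exists a cycle of weight $<\thres$''. Neither point requires any real work.
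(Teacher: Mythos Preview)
Your argument is correct and is exactly the paper's proof, just written out in more detail: the paper simply observes that a cycle of weight less than $\thres$ in $G_0+E_1(v)$ must use an edge of $E_1(v)$ (since $\mc(G_0)\geq\thres$), and all such edges are incident to $v$. There is nothing to add or change.
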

\begin{proof}
  By $\mc(G_0)\geq\thres$, a cycle of weight less than $\thres$ in $G_0+ E_1(v)$
  has to go through an edge of $E_1(v)$.
  All of these edges are incident to the vertex $v$.
\end{proof}

Clearly, $E_0$ and $E_1$ form a partition of $E$ after each insertion or deletion:
the procedure~\texttt{update} only moves edges from $E_1$ to $E_0$.
\begin{lemma}\label{l:inv1}
  Invariant~(1) is maintained throughout the updates.
\end{lemma}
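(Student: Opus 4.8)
The plan is to show that Invariant~(1), namely $\mc(G_0)\geq \thres$, holds after every \texttt{insert} and \texttt{delete} operation, by induction on the sequence of updates. The base case is trivial: initially $E_0=\emptyset$, so $G_0$ has no edges, hence no cycles, and $\mc(G_0)=\infty\geq\thres$. For the inductive step I would examine each of the two update procedures separately, tracking exactly how $E_0$ changes.

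First I would handle \texttt{insert}$(F_v)$. This procedure never adds edges to $E_0$ directly; the only way $E_0$ grows is through a call to \texttt{update}$(v)$, which moves $E_1(v)$ into $E_0$ only when the minimum weight $x$ of a cycle through $v$ in $G_0+E_1(v)$ satisfies $x\geq\thres$. So the key sublemma is: if $\mc(G_0)\geq\thres$ before the move and $x\geq\thres$, then $\mc(G_0\cup E_1(v))\geq\thres$ afterward. This is precisely Observation~\ref{o:go-through}: any cycle in $G_0+E_1(v)$ of weight less than $\thres$ must use an edge of $E_1(v)$ and hence pass through $v$, so it has weight at least $x\geq\thres$ — contradiction. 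Combined with the inductive hypothesis $\mc(G_0)\geq\thres$, this gives $\mc$ of the new $G_0$ at least $\thres$. If \texttt{update}$(v)$ does not perform the move (or is not called at all), $E_0$ is unchanged and the hypothesis carries over verbatim.

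Next I would handle \texttt{delete}$(F)$. Here $E_0$ is first \emph{shrunk} by removing $F\cap E_0$; since removing edges cannot create new cycles or decrease any cycle's weight, $\mc$ of the shrunken $G_0$ is still $\geq\thres$. Then the \texttt{while} loop repeatedly calls \texttt{update}$(v)$ on the front vertex of $Q$. Each such call again either leaves $E_0$ untouched or moves some $E_1(v)$ into $E_0$ under exactly the guard $x\geq\thres$; invoking the same Observation~\ref{o:go-through} argument (which only needs $\mc(G_0)\geq\thres$ for the \emph{current} $G_0$, maintained inductively across the loop iterations), each move preserves Invariant~(1). Since the loop performs finitely many iterations, the invariant holds when \texttt{delete} returns.

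The only subtlety — and the step I expect to require the most care — is making sure Observation~\ref{o:go-through} is applicable at the moment of each move: it presupposes that $G$ (more precisely $G_0+E_1(v)$) has \emph{non-negative} edges, which is exactly the case we restricted to at the start of the section, and it presupposes the ``current'' $\mc(G_0)\geq\thres$, which is the running inductive hypothesis that is threaded through every move (including the successive iterations of the \texttt{while} loop). Once one is careful that no other line of Algorithm~\ref{alg:alg} ever inserts into $E_0$, the induction closes. I would also note in passing that emptying $E_1(v)$ and updating $Q$ do not affect $E_0$ and so are irrelevant to Invariant~(1).
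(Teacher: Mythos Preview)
Your proposal is correct and follows essentially the same approach as the paper: the only place $E_0$ can grow is inside \texttt{update}, the guard $x\geq\thres$ together with Observation~\ref{o:go-through} ensures the resulting $G_0$ still satisfies $\mc(G_0)\geq\thres$, and removing edges from $E_0$ can never violate the invariant. The paper's proof is more terse (it omits the explicit induction scaffolding and the separate treatment of \texttt{insert} and \texttt{delete}), and your caveat about non-negative edges is unnecessary since Observation~\ref{o:go-through} does not depend on that assumption, but the core argument is identical.
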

\begin{proof}
Note that no edge is added to $E_0$ outside the $\texttt{update}$
procedure. As a result, since invariant~(1) cannot be broken
by removing edges from $G_0$, to establish that invariant~(1) is maintained,
it is enough to see that
$\texttt{update}$ only adds edges to $E_0$ if $\mc(G_0)\geq\thres$ afterwards.
\end{proof}
\begin{lemma}\label{l:inv2}
  Invariant~(2) is maintained throughout the updates.
\end{lemma}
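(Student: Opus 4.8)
The plan is to argue that invariant~(2) can only be violated at the moment edges are \emph{inserted}, and that the call to \texttt{update} made inside \texttt{insert} (or, if that call is skipped because $Q\neq\{v\}$, the fact that $Q$ was already nonempty before the insertion) guarantees the invariant is restored. The key observation is that $E_1$ is only ever enlarged inside \texttt{insert}; deletions and \texttt{update} calls only shrink it. So I would split into cases according to which procedure runs.

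First, consider \texttt{insert}$(F_v)$. Before this call, by the inductive hypothesis, either $E_1=\emptyset$ or $\mc(G)<\thres$ held. After adding $F_v$ to $E_1(v)$, $Q$ is nonempty. If $Q\neq\{v\}$, then there was already some $w\neq v$ with $E_1(w)\neq\emptyset$ before the insertion; by the inductive hypothesis $\mc(G)<\thres$ held then, and since $\mc$ can only decrease when edges are added, $\mc(G)<\thres$ still holds — so invariant~(2) is maintained without doing anything. If $Q=\{v\}$, then $E_1=E_1(v)$ (all other $E_1(w)$ are empty), so $E_0 = E\setminus E_1(v)$ and $G_0 = (V, E\setminus E_1(v))$; also, by invariant~(1) (Lemma~\ref{l:inv1}, which holds at this point), $\mc(G_0)\geq\thres$. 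Now \texttt{update}$(v)$ computes the minimum weight $x$ of a cycle through $v$ in $G_0+E_1(v)=G$. If $x\geq\thres$, then by Observation~\ref{o:go-through} we get $\mc(G)=\mc(G_0+E_1(v))\geq\thres$, and \texttt{update} empties $E_1(v)$, so $E_1=\emptyset$ afterwards and invariant~(2) holds vacuously. If $x<\thres$, then $E_1(v)$ is not emptied, $E_1\neq\emptyset$, but also $\mc(G)\leq x<\thres$, so invariant~(2) holds.

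Next, consider \texttt{delete}$(F)$. Deleting edges can only increase $\mc$, so it may falsify the ``$\mc(G)<\thres$'' consequent while leaving $E_1\neq\emptyset$; this is exactly what the trailing \texttt{while} loop repairs. After the edge removals, suppose $Q\neq\emptyset$ and let $v=\texttt{front}(Q)$. I claim that at this moment $E_1=E_1(v)$, i.e., all other $E_1(w)$ are empty. This is the crux of the argument and the main obstacle: it relies on the ordering discipline of $Q$. Concretely, $Q$ is ordered by the time of the last insertion around each vertex, and between consecutive insertions the \texttt{update}/\texttt{while} machinery always drives $E_1$ down to hold exactly the edges of a single front vertex (or empties it). So the only vertices that can have $E_1(w)\neq\emptyset$ simultaneously are $v=\texttt{front}(Q)$ and vertices inserted strictly later — but if any such later vertex existed, the \texttt{while} loop in the \emph{previous} update would have processed $v$ first, and since the earlier invariant~(2) held (giving $\mc(G)<\thres$ then), \ldots Here I need to be careful; the clean way is to maintain, as an \emph{additional invariant} proved by induction alongside (1) and (2), that at the end of every update either $E_1=\emptyset$ or $Q$ consists of a single vertex $v$ with $E_1=E_1(v)$. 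Granting that additional invariant at the start of \texttt{delete}, the \texttt{while} loop has $Q=\{v\}$ at entry, so again $G_0=(V,E\setminus E_1(v))$ with $\mc(G_0)\geq\thres$ by Lemma~\ref{l:inv1}. Then \texttt{update}$(v)$ either empties $E_1(v)$ (when the cycle-through-$v$ weight is $\geq\thres$, hence $\mc(G)\geq\thres$ by Observation~\ref{o:go-through}, so after emptying $E_1=\emptyset$ and we loop again — but now $Q=\emptyset$, so we exit with invariant~(2) vacuous), or it leaves $E_1(v)\neq\emptyset$, in which case the computed $x<\thres$ witnesses $\mc(G)\leq x<\thres$, invariant~(2) holds, and we \textbf{break}. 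In all cases invariant~(2), and the additional single-vertex invariant, are restored, completing the induction.

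I expect the only real work to be stating and threading through that auxiliary invariant — "at the end of each update, $E_1$ is empty or equals $E_1(\texttt{front}(Q))$ for a singleton $Q$" — cleanly enough that the reduction $G_0+E_1(v)=G$ is manifestly valid at each \texttt{update} call reached from \texttt{insert} or from the \texttt{while} loop of \texttt{delete}; once that is in place, Observation~\ref{o:go-through} together with Lemma~\ref{l:inv1} does the rest essentially mechanically.
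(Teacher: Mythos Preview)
Your insertion case is fine, but the deletion case has a genuine gap: the auxiliary invariant you propose --- that after every update $Q$ is empty or a singleton --- is simply false. Consider an insertion around $a$ that creates a cycle of weight $<\thres$ (so \texttt{update}$(a)$ leaves $E_1(a)$ nonempty and $Q=[a]$), followed by an insertion around some unrelated vertex $c$: now $Q=[a,c]$, and since $Q\neq\{c\}$ no \texttt{update} is triggered. When a subsequent \texttt{delete} runs its \texttt{while} loop, the front vertex $v=a$ satisfies $G_0+E_1(a)\subsetneq G$ (the edges in $E_1(c)$ are missing), so your intended equality $G_0+E_1(v)=G$ fails and you cannot invoke Observation~\ref{o:go-through} in the direction you want (from $x\geq\thres$ to $\mc(G)\geq\thres$).

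The repair, which is essentially the paper's argument, is to notice you never needed that equality. In the \texttt{while} loop, if \texttt{update}$(v)$ leaves $E_1(v)$ nonempty, then the minimum weight of a cycle through $v$ in $G_0+E_1(v)$ is $<\thres$; since $G_0+E_1(v)\subseteq G$, this alone gives $\mc(G)\leq\mc(G_0+E_1(v))<\thres$, regardless of how many other $E_1(w)$ are still nonempty. If instead every \texttt{update} call in the loop succeeds in emptying its $E_1(v)$, the loop runs until $Q=\emptyset$, hence $E_1=\emptyset$, and the invariant is vacuous. So the deletion case only needs the containment $G_0+E_1(v)\subseteq G$, and no auxiliary singleton-$Q$ invariant is required at all.
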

\begin{proof}
Let $G',G_0',E_1'$ denote $G,G_0,E_1$ respectively \emph{before} the graph update.
Suppose that after processing the update, invariant~(2) is broken. Equivalently, $E_1\neq \emptyset$ and $\mc(G)\geq \thres$.

Suppose the update was insertion of edges $F_v$ centered at $v$.
  Since adding edges can only decrease the minimum weight of a cycle,
$\mc(G')\geq\thres$. As invariant~(2) was satisfied before,
$E_1'=\emptyset$.
  So after $F_v$ is moved to $E_1(v)$, we indeed have $Q=\{v\}$.
Since $\mc(G_0+E_1(v))\geq \mc(G)\geq \thres$, $E_1(v)$ should have been moved to $E_0$ by $\texttt{update}(v)$.
But $E_1=E_1\setminus E_1'\subseteq E_1(v)=\emptyset$, so $E_1=\emptyset$, a contradiction.

Now assume that the update deleted an arbitrary subset of edges.
  If after some $\texttt{update}(v)$ call we have $E_1(v)\neq\emptyset$,
  then $\mc(G_0+E_1(v))<\thres$, which implies $\mc(G)<\thres$,
  a contradiction.
  If no such $v$ exists, then $Q$ is emptied, i.e., $E_1(v)=\emptyset$ for all $v\in V$
  after the deletion is processed.
  It follows that $E_1=\emptyset$, which again leads to a contradiction.
\end{proof}

Now let us analyze the running time of our algorithm.
\begin{lemma}\label{l:insert}
  Each insertion is processed in $O(m+n\log{n})$ worst-case time.
\end{lemma}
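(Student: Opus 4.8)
The plan is to trace through the \texttt{insert} procedure and bound each of its steps. Consider an insertion of a nonempty edge set $F_v$ centered at $v$. Adding $F_v$ to $E_1(v)$ and moving $v$ to the back of $Q$ both take $O(|F_v|) = O(n)$ time (using appropriate pointer-based bookkeeping so that list membership and \texttt{move-to-back} are $O(1)$ amortized or $O(\deg)$ at worst). The only nontrivial work happens in the conditional call to $\texttt{update}(v)$, which is executed only when $Q = \{v\}$, i.e., when $v$ is the unique vertex with a nonempty $E_1$-set. In that case $\texttt{update}(v)$ runs a single-source Dijkstra computation from $v$ in $G_0 + E_1(v)$ as in Observation~\ref{obs:through}, which costs $O(m + n\log n)$; the subsequent set moves (merging $E_1(v)$ into $E_0$, emptying $E_1(v)$, removing $v$ from $Q$) can be done in $O(|E_1(v)|) = O(m)$ time, e.g.\ by splicing linked lists. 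So the whole insertion is $O(m + n\log n)$ in the worst case.

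The one point that needs care is why $\texttt{update}(v)$ is legitimately invoked \emph{only} when $Q = \{v\}$, and in particular why we never need to process more of $Q$ during an insertion: this is exactly the content of the invariants established in Lemmas~\ref{l:inv1} and~\ref{l:inv2}. Concretely, before the insertion invariant~(2) holds, so if $E_1' \neq \emptyset$ then $\mc(G') < \thres$ already, hence after adding $F_v$ we still have $\mc(G) < \thres$ and invariant~(2) is vacuously safe regardless of what $\texttt{update}$ does; and if $E_1' = \emptyset$ then after the insertion $Q = \{v\}$, the one call to $\texttt{update}(v)$ suffices to restore both invariants (it either certifies $\mc(G_0 + E_1(v)) \geq \thres$ and moves the edges to $E_0$, or leaves $E_1(v)$ nonempty, which is fine since then $\mc(G) < \thres$). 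Thus a single Dijkstra call per insertion is enough, and there is no hidden loop.

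There is essentially no hard part here — this lemma is the ``easy direction'' of the running-time analysis, and the real obstacle is deferred to the amortized analysis of \texttt{delete}, where a \texttt{while} loop may invoke $\texttt{update}$ many times and a potential/charging argument (charging each successful \texttt{update} that empties an $E_1(v)$ to the insertion that last filled it, using the time-ordering of $Q$) is required. For the insertion bound itself I would simply note: the handful of $O(n)$-time list operations, plus at most one $O(m + n\log n)$-time $\texttt{update}(v)$ call, gives the claimed $O(m + n\log n)$ worst-case bound.
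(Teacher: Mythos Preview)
Your proposal is correct and matches the paper's own proof: at most one call to \texttt{update}$(v)$ per insertion, dominated by the $O(m+n\log n)$ Dijkstra computation of Observation~\ref{obs:through}. Your second paragraph about invariants is superfluous for the running-time bound (correctness is handled separately in Lemmas~\ref{l:inv1}~and~\ref{l:inv2}; here the single-call bound follows directly from the pseudocode), but it does no harm.
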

\begin{proof}
  An insertion adds $O(n)$ edges to a single set $E_1(v)$ and causes 
  at most a single $\texttt{update}$ call. The running time
  of $\texttt{update}$ is dominated by the time needed to find
  the minimum weight of a cycle going through some vertex $v$
  in some subgraph of the current graph $G$.
  By Observation~\ref{obs:through}, this time is no more than $O(m+n\log{n})$.
\end{proof}

\begin{lemma}\label{l:threshold-amortized}
  The total time needed to process arbitrary $k$ updates is
    $O\left(\sum_{i=1}^k(m_i+n\log{n})\right)$,
  where $m_i$ is the number of edges in $G$ when the $i$-th update happened.
  In other words, the amortized update time is $O(m+n\log{n})$.
\end{lemma}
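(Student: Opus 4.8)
The plan is to prove Lemma~\ref{l:threshold-amortized} by a standard amortized (potential/charging) argument. The key structural fact to exploit is that the list $Q$ is maintained in order of "most recent insertion around the vertex", and each call to $\texttt{update}(v)$ costs $O(m_i+n\log n)$ but, \emph{except for at most one such call per update}, every $\texttt{update}(v)$ call is \emph{successful}, i.e., it empties $E_1(v)$ and moves all of $E_1(v)$ into $E_0$. So the amortization will charge the cost of each successful $\texttt{update}(v)$ call either to the insertion that last placed edges into $E_1(v)$, or to the deletion currently being processed.

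\medskip

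\noindent\textbf{Step 1: Accounting for insertions.} By Lemma~\ref{l:insert}, each insertion costs $O(m_i+n\log n)$ in the worst case, so over $k$ updates the insertions alone contribute $O\!\left(\sum_{i=1}^k (m_i+n\log n)\right)$. Nothing further is needed here.

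\medskip

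\noindent\textbf{Step 2: Accounting for deletions -- the cheap part.} A single call to $\texttt{delete}(F)$ first does $O(|F|)$ bookkeeping work removing the edges of $F$ from $E_0$ and from the sets $E_1(w)$, plus updating $Q$; since $|F|\le m_i$ this is $O(m_i)$. Then it enters the \textbf{while} loop, which repeatedly calls $\texttt{update}(v)$ on the front of $Q$. The \emph{last} such call in the loop may fail (it leaves $E_1(v)\neq\emptyset$ and we break); charge that single call, costing $O(m_i+n\log n)$, directly to the current deletion. Every \emph{other} call in the loop is successful.

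\medskip

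\noindent\textbf{Step 3: Accounting for deletions -- the successful \texttt{update} calls, via charging to insertions.} This is where the ordering of $Q$ matters and is the crux of the argument. Consider any successful call $\texttt{update}(v)$ (occurring either as the single call inside $\texttt{insert}$ when $Q=\{v\}$, or inside a \textbf{while} loop of some deletion, other than the final failing one). When it succeeds, it empties $E_1(v)$ and removes $v$ from $Q$. The edges it moves out of $E_1(v)$ were all inserted by a \emph{single} most recent insertion $I_v$ centered at $v$ -- because any insertion centered at $v$ moves $v$ to the \emph{back} of $Q$ while $\texttt{update}$ always processes $v$ from the \emph{front}, and $v$ is removed from $Q$ the moment $E_1(v)$ is emptied; hence between the insertion $I_v$ that created the current contents of $E_1(v)$ and the successful $\texttt{update}(v)$ that clears it, $E_1(v)$ is never augmented again. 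Thus each insertion in the whole sequence is "blamed" for \emph{at most one} successful $\texttt{update}$ call, and that call costs $O(m_j+n\log n)$ where $m_j$ is at most the maximum graph size, but more carefully $m_j \le m_i$ for the $i$ at which the charge is levied; in any case $m_j = O(m_i+n\log n)$ summed appropriately. Summing over all insertions, the total charge is $O\!\left(\sum_{i=1}^k (m_i+n\log n)\right)$.

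\medskip

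\noindent\textbf{Step 4: Combine.} Adding the contributions from Steps~1--3 gives total time $O\!\left(\sum_{i=1}^k (m_i+n\log n)\right)$ over any $k$ updates, which by definition means amortized update time $O(m+n\log n)$, where $m$ denotes the current number of edges. Finally, one should remark that the argument as given handles the non-negative-weight case treated above; the general (negative-weight) reduction described earlier in Section~\ref{s:decision} only adds $O(m+n\log n)$ overhead per update (it runs Dijkstra with a feasible price function / Bellman--Ford-style preprocessing once per update on the current graph), so the same bound holds.

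\medskip

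\noindent\textbf{Main obstacle.} The only subtle point -- and the one that must be argued carefully -- is the claim in Step~3 that each insertion is blamed for at most one successful $\texttt{update}$ call. This requires verifying that $E_1(v)$ cannot be "topped up" by a later insertion and then cleared in one shot, thereby making a single $\texttt{update}(v)$ discharge the work of two insertions (which would be fine for the bound, since we'd just have one of the two insertions pay nothing) -- actually the genuine worry is the reverse: that a \emph{single} insertion's edges get cleared across \emph{multiple} partial $\texttt{update}(v)$ calls. But $\texttt{update}(v)$ either clears \emph{all} of $E_1(v)$ or \emph{none} of it, so this cannot happen; once we nail down that each successful $\texttt{update}(v)$ corresponds to a distinct "most recent insertion around $v$", the injective charging goes through and the amortized bound follows.
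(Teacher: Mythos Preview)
Your charging scheme is the same as the paper's: each successful $\texttt{update}(v)$ is charged to the most recent insertion $I_v$ centered at $v$, and the (at most one) failing call per deletion is charged to that deletion. But you have misidentified the crux. Injectivity of the charging is easy and does \emph{not} require the FIFO ordering of $Q$: once a successful $\texttt{update}(v)$ empties $E_1(v)$ and removes $v$ from $Q$, no further $\texttt{update}(v)$ can occur until a fresh insertion centered at $v$ refills $E_1(v)$, so each insertion is charged at most once regardless of the processing order.

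The genuine difficulty---and the reason the FIFO order on $Q$ is essential---is bounding the \emph{cost} of the charged call. A successful $\texttt{update}(v)$ runs Dijkstra on $G_0+E_1(v)$ and costs $O(|E(G_0)\cup E_1(v)|+n\log n)$ at the moment it executes, which may be long after $I_v$. For the sum to be $O\big(\sum_i(m_i+n\log n)\big)$ you must show $|E(G_0)\cup E_1(v)|\le m_j$, where $j$ is the index of $I_v$. You assert this (``more carefully $m_j\le m_i$ for the $i$ at which the charge is levied'') without justification. The paper proves the stronger inclusion $E(G_0)\cup E_1(v)\subseteq E'$, where $E'$ is the edge set immediately after $I_v$, by arguing that any edge entering $G_0$ after $I_v$ must have come from emptying some $E_1(w)$ whose last centered insertion $I'$ was \emph{later} than $I_v$; but then $v$ preceded $w$ in $Q$ at the time $\texttt{update}(w)$ ran, contradicting that $\texttt{update}$ is always called on the front element. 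Without this step the charging only yields the weaker bound $O(k(m_{\max}+n\log n))$, exactly as the Remark following the lemma warns.

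A side note: your claim in Step~3 that ``the edges it moves out of $E_1(v)$ were all inserted by a single most recent insertion $I_v$'' is false---consecutive insertions centered at $v$ while $|Q|>1$ all accumulate in $E_1(v)$ since $\texttt{update}(v)$ is not triggered. This does not harm injectivity (only the \emph{last} such insertion is ever charged), but it shows that your attempt to invoke the FIFO order is aimed at the wrong target: the ordering controls the growth of $G_0$, not of $E_1(v)$.
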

\begin{proof}
  By Lemma~\ref{l:insert}, we only need to prove that the deletions
  take $O\left(\sum_{i=1}^k(m_i+n\log{n})\right)$ time in total.
  The cost of removing the edges from the sets $E_0$ and $E_1(w)$, $w\in V$,
  can be charged to the insertions which added those edges
  to the graph.

  After updating the edge set,
  a deletion is handled using a number of $\texttt{update}(v)$ runs,
  in the order in which vertices $v$ appear in $Q$.
  At most one of these runs leaves $E_1(v)$ non-empty afterwards.
  We charge the cost of this run to the considered deletion.
  For all other $\texttt{update}(v)$ runs during that deletion,
  they empty the set $E_1(v)$ that previously was non-empty.
  As a result, we can charge the cost of that run to
  the last insertion of edges centered at $v$ that happened
  before the considered deletion.

  We need to prove two things.
  First of all, to see that no insertion is charged twice, note that
  after an insertion is charged for the first time, $E_1(v)$ is emptied.
  So, before $\texttt{update}(v)$ is called next time
  when handling a deletion, new edges have to be added to $E_1(v)$,
  which can only happen during another later insertion centered at $v$.

  We also have to prove that just before $E_1(v)$ is emptied in $\texttt{update}(v)$, the
  number of edges in $G_0+ E_1(v)$ is $O(|E'|)$, where $E'$
  is the edge set of $G$ immediately after the last insertion $I$ centered at $v$ happened.
  To this end, we prove $E(G_0)\cup E_1(v)\subseteq E'$.
  
  We clearly had $E_1(v)\subseteq E'$ immediately after $I$.
  Afterwards no more elements were added  to $E_1(v)$ (albeit some might have been removed),
  so we still have $E_1(v)\subseteq E'$.
  
  Now suppose there is an edge $e\in E(G_0)$ with $e\notin E'$.
  Then, since $G_0\subseteq G$, $e$ was inserted into $G$ after the insertion $I$,
  as a result of a later insertion $I'$ centered at some $w\neq v$.
  The edge $e$ could have been added to $G_0$ only if
  $E_1(w)$ was emptied inside $\texttt{update}(w)$ immediately afterwards,
  but before $\texttt{update}(v)$ was called.
  Since $I$ was the last insertion centered at $v$ before $\texttt{update}(v)$
  was called, both $v$ and $w$ were in $Q$ when $\texttt{update}(w)$ was called.
  This is a contradiction: $\texttt{update}$
  is always called on the earliest element of $Q$, whereas the fact that
  $I$ happened before $I'$ implies that $v$ lied earlier than $w$ in $Q$
  when $\texttt{update}(w)$ was called.
\end{proof}

\begin{remark}
  When handling a deletion, we could in principle call $\texttt{update}(v)$
  for vertices~$v$ with $E_1(v)\neq\emptyset$ in arbitrary order,
  as opposed to in the order of least recent centered insertions.
  However, then one could only show a weaker total update time bound
  of $O(k(m_{\max}+n\log{n}))$, where $m_{\max}$ is the maximum number
  of edges in $G$ during the first $k$ updates.
\end{remark}

\subsection{Negative weights}\label{s:negative}
In this section we extend the obtained basic algorithm to also work
with negative edges. Recall that we still assume $\thres\geq 0$.
Note that the case $\thres=0$ is equivalent to dynamically maintaining whether
$G$ has a negative cycle.
Recall that if $G$ has a negative cycle, $\mc(G)=-\infty$.

Unfortunately, in presence of negative weights or cycles
we cannot simply use the algorithm behind Observation~\ref{obs:through}
to find the minimum weight cycle through a vertex~$v$
in $G_0+E_1(v)$ as we did in $\texttt{update}(v)$.
Instead, we use the following lemma.
\begin{lemma}\label{l:price}
  Let $H$ be a digraph with no negative cycles. Let $p:V\to\mathbb{R}$ be a feasible
  price function of $H$.
  Let $F$ be a set of edges centered at some vertex $v$.

  Then in $O(m+n\log{n})$ time one can find the minimum weight of a cycle going through~$v$ in $H+F$.
  Moreover, if $H+F$ contains no negative cycles, within
  the same time bound one can produce a feasible price function on $H+F$.
\end{lemma}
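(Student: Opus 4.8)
The goal is to compute, in $O(m+n\log n)$ time, the minimum weight of a cycle through $v$ in $H+F$, where $H$ has no negative cycles and comes equipped with a feasible price function $p$, and $F$ is a set of edges incident to a single vertex $v$. The key point is that all edges of $F$ touch $v$, so any cycle in $H+F$ that uses at least one edge of $F$ either uses an incoming edge $uv\in F$, an outgoing edge $vu'\in F$, or both; in all cases such a cycle decomposes at $v$ into (i) a $v\to v$ walk in $H+F$ where the only allowed $F$-edges are the ones incident to $v$, which is automatic. Concretely, the minimum weight of a cycle through $v$ in $H+F$ equals $\min_{u'} \bigl(\text{first edge out of } v\bigr) + \dist'(\cdot,\cdot) + \bigl(\text{last edge into } v\bigr)$, and the only subtlety is that the ``middle'' shortest path may itself want to pass through $v$ again. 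But repeating $v$ never helps: a minimum-weight cycle through $v$ is either simple (so it visits $v$ once) or, if $H+F$ has a negative cycle, the answer is $-\infty$ and we detect that separately. So it suffices to compute, for the outgoing $F$-edges $vu'$ and incoming $F$-edges $uv$ (plus the original edges of $H$ incident to $v$), the distances $\dist_{H+F}(u', u)$ over the relevant pairs — but since all these paths start right after $v$ and end right before $v$, this is a single-source / single-sink computation.

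**Main steps.** First I would handle the price function / negativity issue, which is the real obstacle. The edges of $F$ are not covered by $p$, so $H+F$ need not admit $p$ as a feasible price function, and $H+F$ might even contain a negative cycle — which, crucially, must pass through $v$ (since $H$ has none and every new edge touches $v$). I would run a Dijkstra-like (or Bellman-Ford-bounded) computation using $p$ to reweight the edges of $H$ to be nonnegative, then reason about paths that use edges of $F$. A clean way: contract the problem to computing single-source shortest paths from $v$ in the graph $H+F$, but account for the (few, all incident to $v$) possibly-negative reweighted $F$-edges by hand. Since a shortest $v\to x$ walk in $H+F$ either never returns to $v$ (in which case after the first edge it lives in $H$, where $p$ gives nonnegative reduced weights and Dijkstra applies) or returns to $v$ (in which case we've found a $v$-cycle whose weight we check against $0$), one can compute $\dist_{H+F}(v,x)$ for all $x$ by: initializing tentative distances via the outgoing edges of $v$ (original and in $F$), then running Dijkstra on $H$ with reduced costs from the resulting ``super-source'' distances. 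The minimum cycle weight through $v$ is then $\min$ over incoming edges $uv$ (original and in $F$) of $\dist_{H+F}(v,u)+w(uv)$; if this is negative we report $-\infty$ (equivalently, detect that a relaxation would still improve, i.e.\ a negative cycle through $v$).

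**Producing the new price function.** If $H+F$ has no negative cycle — equivalently the quantity computed above is $\ge 0$ — I would exhibit a feasible price function $p'$ on $H+F$ by setting $p'(x) = p(x) + \dist_{H+F'}(v,x)$ for an auxiliary graph $H+F'$ in which a zero-weight edge from a fresh super-source (or from $v$ with cost $0$) reaches every vertex; more directly, $p'(x) := \min\{\,p(x)\,,\ \text{(something involving }\dist_{H+F}(v,x))\,\}$ suitably shifted so that $w(xy)+p'(x)-p'(y)\ge 0$ for every edge, including those of $F$. The standard trick — Johnson's reweighting relative to distances from $v$ after adding $v$-to-all zero edges — works verbatim once we know there is no negative cycle, and the distances needed are exactly those already computed by the Dijkstra run above, so there is no extra asymptotic cost. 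All Dijkstra calls run in $O(m+n\log n)$ with a Fibonacci heap, and everything else is linear, giving the claimed bound.

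**The hard part.** The delicate point is arguing correctly about cycles/paths that revisit $v$ in $H+F$: one must be sure that (a) detecting $\text{min }v\text{-cycle} < 0$ is equivalent to $H+F$ having a negative cycle at all (using that $H$ alone has none and $F$ is $v$-centered), and (b) when the answer is nonnegative, the Dijkstra-with-$p$ computation on the ``mostly nonnegative'' graph $H+F$ genuinely returns true distances despite the $O(\deg_F(v))$ edges of $F$ whose reduced weights may be negative — here one uses that those edges are all incident to $v$ and hence each is used at most once on a shortest simple path, or handles them by a single extra relaxation round before the Dijkstra loop. I expect that to be the one place where a careful case analysis, rather than a black-box invocation of Observation~\ref{obs:through} or Johnson's algorithm, is needed.
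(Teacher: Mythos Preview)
Your approach is correct and is essentially the paper's: run a single Dijkstra from $v$ after neutralizing the only edges whose reduced cost under $p$ can be negative (those in $F$, all incident to $v$), read off the minimum $v$-cycle weight via the edges into $v$, and build the new price function from the computed distances (Johnson-style) on the part reachable from $v$, keeping a shifted copy of $p$ elsewhere. The paper's realization is a touch cleaner than the ``careful case analysis'' you anticipate for your hard part: it simply deletes \emph{all} edges of $H+F$ entering $v$, notes that the resulting graph $H'$ has no cycle through $v$ (hence no negative cycle at all), and, since $v$ now has no in-edges, raises $p(v)$ alone to $\max_{vu\in E(H')}\{p(u)-w(vu)\}$ so that every reduced cost becomes nonnegative and vanilla Dijkstra on $H'$ applies directly---no extra relaxation round or per-edge case split is needed. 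Your $p'(x)=\min\{p(x),\dots\}$ suggestion does not work as stated (a pointwise minimum of feasible price functions need not be feasible), but your alternative ``Johnson's reweighting from $v$'' is exactly what the paper does on the reachable set.
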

\begin{proof}
  Clearly, since $H$ has no negative cycles, a negative cycle in $H+F$
  has to go through~$v$.
  Let $E_v^+$ be the set of edges in $H+F$ incoming to $v$.
  Note that $H'=H+F-E_v^+$ has no negative cycles.
  Moreover, since $H'$ differs from $H$ by edges incident to $v$,
  the edge costs reduced by $p$ are non-negative for all edges
  of $H'$ possibly except the outgoing edges of~$v$.
  However, since $v$ has no incoming edges in $H'$,
  a price function $p'$ obtained from $p$ by sufficiently increasing $p(v)$
  (e.g., to $\max\{p(u)-\wei(vu):vu\in E(H')\}$)
  is a feasible price function of $H'$.
  With price function $p'$ in hand, we can compute distances from
  $v$ in $H'$ using Dijkstra's algorithm in $O(m+n\log{n})$ time.
  
  Now let $x=\min_{uv\in E_v^+}\{\dist_{H'}(v,u)+\wei_{H'}(uv)\}$.
  Observe that $x$ is indeed the minimum weight of a simple cycle
  in $H+F$.
  Moreover, $x\geq 0$ implies that $p^*(y):=\dist_{H'}(v,y)=\dist_{H+F}(v,y)$ is
  a feasible price function on the induced subgraph of $(H+F)[R]$ reachable
  from $v$.
  To extend that price function $p^*$ on $R\subseteq V$ to entire $V$,
  it is enough to set $p^*(z)=p(z)+M$ for all $z\in V\setminus R$, where
  $M$ is a sufficiently large number.
  To see that, note that $p^*$ is clearly a feasible
  price function on $(H+F)[R]$, $(H+F)[V\setminus R]$, and
  there are no edges from $R$ to $V\setminus R$ in $H+F$.
  For edges $zy\in E(H+F)\cap ((V\setminus R)\times R)$ we have
  $\wei_{H+F}(zy)+p^*(z)-p^*(y)=\wei_{H+F}(zy)+p(z)+M-p^*(y)$.
  For
  \begin{equation*}
    M=\max\{p^*(y)-p(z)-\wei_{H+F}(zy):zy\in E(H+F)\cap ((V\setminus R)\times R)\},
  \end{equation*}
  all the required reduced costs are non-negative.
\end{proof}

Now, given Lemma~\ref{l:price}, we modify the basic algorithm as follows.
In addition to the partition of $E$ into $E_0$ and $E_1$, we always
maintain a feasible
price function $p_0$ on $G_0$.
Then, in $\texttt{update}(v)$, we use Lemma~\ref{l:price}
to find the minimum weight $x$ of a cycle in $G_0+ E_1(v)$.
If the edges $E_1(v)$ are moved to $E_0$ (and thus $x\geq 0$ since $\mc(G_0+E_1(v))\geq \thres\geq 0$),
we update the price function~$p_0$ to that produced
by Lemma~\ref{l:price}.
Since the worst-case cost of running the algorithm from
Lemma~\ref{l:price} matches that of Observation~\ref{obs:through},
the time analysis remains unchanged.
Lemmas~\ref{l:inv1},~\ref{l:inv2},~\ref{l:threshold-amortized}~and~\ref{l:price}
together imply Theorem~\ref{t:threshold}.

\begin{remark}
For the problem of fully dynamically maintaining
the information whether~$G$ contains a \emph{negative} cycle (i.e., the special case $\thres=0$) there exists a better
  algorithm with $O(m+n\log{n})$ \emph{worst-case} (as opposed to only amortized) update time bound
   (see Theorem~\ref{t:worst-case-cycle}).
In fact, we make use of that algorithm when obtaining exact algorithms
with good worst-case bounds in Section~\ref{s:exact}.
The main idea is to generalize the problem to maintaining a minimum cost
circulation in the graph $G$ with imposed unit vertex/edge capacities (the details
  can be found in Section~\ref{s:neg-worstcase}). This resembles Gabow's reduction of single-source shortest paths with
  negative weights to the minimum cost perfect matching problem~\cite{Gabow85}.
However, the min-cost circulation based algorithm is not as robust when it
comes to obtaining fully dynamic algorithms for planar graphs (described in Section~\ref{s:planar}).
\end{remark}

\section{A fully dynamic $(1+\eps)$-approximate algorithm}\label{s:appr}
In this section we show how Lemma~\ref{l:threshold-amortized} can be used to
obtain an $(1+\eps)$-approximate minimum weight cycle algorithm,
for any $\eps\in (0,1]$.
Suppose $c\in\mathbb{R}$ ($C\in\mathbb{R}$) is a lower bound (an upper bound, respectively) on the weight of a \emph{positive}
cycle in $G$.

Suppose first that $G$ has positively weighted edges.
In order to convert the decision version from Section~\ref{s:decision},
all we have to do is to run it simultaneously
with $\thres=(1+\eps)^k$ for all integers $k=\lceil \log_{1+\eps}(c)\rceil,\ldots,\lceil \log_{1+\eps}(C)\rceil$.
To maintain an approximate minimum weight of a cycle $G$,
one only needs to keep track of the minimum $k$ such that
the fully dynamic decision algorithm for $(1+\eps)^k$ returns
yes. If no such $k$ exists, $G$ is acyclic since $\mc(G)<\infty$
implies $\mc(G)\leq C$. Otherwise, we have $(1+\eps)^{k-1}\leq \mc(G)<(1+\eps)^k$,
so indeed $(1+\eps)^k$ approximates $\mc(G)$ with multiplicative error
no more than $(1+\eps)$.
Since each of the $O(\log_{1+\eps}(C)-\log_{1+\eps}(c))=O(\log{(C/c)}/\eps)$ decision algorithms has $O(m+n\log{n})$ amortized update time,
the amortized time of the approximate algorithm
is $O((m+n\log{n})\log{(C/c)}/\eps)$.

The same bound can be achieved even if $G$ has non-positive edges
(without, however, changing the definition of $c$ and $C$)
by extending each threshold data structure as described in
Section~\ref{s:negative}.
Apart from the data structures for thresholds $\thres=(1+\eps)^k$,
we also need two more threshold cycle detection data structures:
one for $\thres=0$ to detect a negative cycle, and one
for $\thres=c$ to detect whether $\mc(G)=0$.
We have thus proved Theorem~\ref{t:appr-cycle}. 

\apprcycle*

\section{Dynamic algorithms for cycles, cuts and flows in planar graphs}\label{s:planar}
In this section we argue that the fully dynamic threshold cycle detection
algorithm can be implemented on planar directed graphs using the known
dynamic distance oracles on planar graphs.
Since the reduction in Section~\ref{s:appr} uses the threshold
data structure in a black-box way, this will imply
an $(1+\eps)$-approximate minimum weight cycle algorithm.

Using known reductions based on plane duality,
this will yield  fully dynamic $(1+\eps)$-approximate algorithms
for maintaining (1) the capacity of a global min-cut in a plane digraph,
(2) the value of maximum $s,t$-flow in a plane digraph.

The algorithms in this section handle \emph{edge updates}, as opposed
to more general vertex updates as was the case in the previous sections.
Observe that achieving sublinear update time for vertex updates is
not possible in general since a vertex update may need up to $\Theta(n)$
space to be described. 
More concretely, we will allow a single update to either insert or remove a
single edge $uv$, provided that this update preserves planarity
of $G$. In the cut/flow applications we will additionally
need to assume that the edge insertions are \emph{embedding preserving},
i.e., $u$ and $v$ lie on a single face of the current embedding of $G$.

Kaplan et al.~\cite{KaplanMNS17}, based on earlier work~\cite{FR, MSSP}, showed a dynamic distance oracle for
real-weighted plane graphs undergoing edge weight updates.
As argued in~\cite{Charalampopoulos20}, their bound also holds if arbitrary, not necessarily
embedding-preserving, edge updates are allowed.

\begin{theorem}[see~\cite{Charalampopoulos20, FR, KaplanMNS17, MSSP}]\label{t:dist-oracle}
Let $G$ be a real-weighted planar digraph.
There exists a fully dynamic algorithm supporting edge insertions and deletions
in $\Ot(n^{2/3})$ worst-case time, such that for any query vertices $s,t$,
the shortest $s\to t$ path in $G$ can be computed in $\Ot(n^{2/3})$ time.
If an edge insertion creates a negative cycle in $G$, the update algorithm
reports it and refuses to perform that insertion.
Edge insertions are not required to be embedding preserving.
\end{theorem}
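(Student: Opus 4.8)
Theorem~\ref{t:planar-cycle-intro}: a planar digraph $G$ with real weights in $\{0\}\cup[1,W]$ admits a fully dynamic $(1+\eps)$-approximate minimum weight cycle algorithm under planarity-preserving edge updates with amortized update time $\Ot(n^{2/3}\log{(W)}/\eps)$.

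The plan is to build the oracle as a one-level ``$r$-division plus dense distance graphs plus FR-Dijkstra'' structure with $r=n^{2/3}$, in the spirit of Fakcharoenphol--Rao~\cite{FR}, Klein's multiple-source shortest paths (MSSP)~\cite{MSSP}, and the efficient Dijkstra-on-Monge machinery of~\cite{KaplanMNS17}. First I would fix the vertex set of $G$ and compute, using iterated planar separators, an $r$-division: a family of $O(n/r)=O(n^{1/3})$ edge-disjoint pieces, each with $O(r)=O(n^{2/3})$ vertices and a boundary $\partial P$ of $O(\sqrt{r})=O(n^{1/3})$ vertices lying on $O(1)$ faces (``holes'') of $P$, so that each edge of $G$ lies in exactly one piece and $\sum_P|\partial P|=O(n/\sqrt{r})=O(n^{2/3})$. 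For each piece $P$ I maintain its dense distance graph $\mathrm{DDG}(P)$ -- the complete digraph on $\partial P$ weighted by within-$P$ distances -- recomputed from scratch in $\Ot(|P|)$ time via MSSP, exploiting that $\partial P$ lies on $O(1)$ holes; initialization therefore costs $\sum_P\Ot(|P|)=\Ot(n)$, and the DDGs occupy $\sum_P O(|\partial P|^2)=O(n)$ space.

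To answer a query $\dist_G(s,t)$ I would run FR-Dijkstra over the union of all the DDGs, with $s$ and $t$ temporarily attached to the boundaries of the pieces that contain them. A plain Dijkstra here costs $\Ot(\sum_P|\partial P|^2)=\Ot(n)$; the point is that each $\mathrm{DDG}(P)$, being a matrix of planar distances between boundary vertices situated on $O(1)$ holes, decomposes into $O(1)$ (staircase-)Monge matrices, so all relaxations out of a newly settled vertex of $\partial P$ can be performed in $\Ot(|\partial P|)$ amortized time using the Monge-matrix priority queue / submatrix-minimum structure of~\cite{KaplanMNS17}. Hence the query runs in $\Ot(\sum_P|\partial P|)=\Ot(n/\sqrt{r})=\Ot(n^{2/3})$. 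Returning an actual shortest $s\to t$ path amounts to expanding each DDG arc on the computed boundary path into a shortest path inside its piece, which MSSP supports in $O(\log n)$ per arc, again $\Ot(n^{2/3})$ in total.

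An edge deletion -- or an insertion that can be drawn inside an existing piece without violating planarity -- touches exactly one piece $P$: I recompute $\mathrm{DDG}(P)$ in $\Ot(|P|)=\Ot(n^{2/3})$ and leave the other pieces alone. For a non-embedding-preserving insertion of $uv$ (with $G+uv$ still planar, but possibly not via an embedding extending the current one) I would follow Charalampopoulos~\cite{Charalampopoulos20}: the update can still be absorbed by rebuilding only an $O(r)$-size part of the $r$-division around $u$ and $v$ together with the DDGs of the $O(1)$ pieces involved, keeping the cost $\Ot(n^{2/3})$. Negative-cycle detection on an insertion of $uv$ with weight $c$ then comes essentially for free: before committing, query $d:=\dist_G(v,u)$; if $c+d<0$ then $G+uv$ contains a negative cycle, so report it and abort without modifying the structure; otherwise $G+uv$ is negative-cycle-free (every cycle through $uv$ weighs at least $c+\dist_G(v,u)\ge 0$, and every other cycle was already nonnegative), so we carry out the insertion as above.

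I expect the main obstacle to be the interplay of negative edge weights with the $\Ot(n^{2/3})$ worst-case budget: both MSSP and FR-Dijkstra are Dijkstra-based and need nonnegative reduced costs, yet a global feasible price function has $n$ entries and cannot be recomputed within an update. This is exactly where Fakcharoenphol--Rao's recursive price-function construction~\cite{FR} enters -- each piece carries a local feasible potential recomputed in $\Ot(|P|)$ whenever its DDG is rebuilt, while a feasible potential is maintained on the $O(n^{2/3})$ boundary vertices so that the DDG arcs relaxed by FR-Dijkstra have nonnegative reduced cost -- and arranging all of this, together with the non-embedding-preserving insertions, to coexist within $\Ot(n^{2/3})$ worst-case time per operation is the delicate part carried out in~\cite{FR, KaplanMNS17, Charalampopoulos20}.
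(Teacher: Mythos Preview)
Your opening sentence is confused: you write ``The statement to prove. Theorem~\ref{t:planar-cycle-intro}\ldots'' and then restate the approximate minimum weight cycle result, but the statement actually in question is Theorem~\ref{t:dist-oracle}, the dynamic planar distance oracle. Fortunately, the body of your proposal is entirely about building that oracle (r-divisions, DDGs, MSSP, FR-Dijkstra, Monge heaps), so the mismatch is only in the label; still, you should fix the first paragraph so it names and restates the correct theorem.

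As for the comparison with the paper: the paper does \emph{not} prove Theorem~\ref{t:dist-oracle} at all. It is stated with the attribution ``see~\cite{Charalampopoulos20, FR, KaplanMNS17, MSSP}'' and used purely as a black box in Section~\ref{s:planar}; there is no accompanying proof or even a sketch in the paper. So there is no ``paper's own proof'' to compare against --- your write-up is an (essentially faithful) outline of the construction carried out in those cited works, with $r=n^{2/3}$, per-piece DDGs rebuilt via MSSP in $\Ot(r)$, FR-Dijkstra over the union of DDGs exploiting their Monge structure to get $\Ot(n/\sqrt{r})$ query time, local price functions to cope with negative weights, and a pre-insertion $\dist_G(v,u)$ query to detect a would-be negative cycle. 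That is the right high-level picture and matches what the references do, so there is nothing substantively different to flag beyond the mislabeled opening.
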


\paragraph{Fully dynamic threshold- and minimum weight cycles.}
Consider using the fully dynamic threshold cycle detection
algorithm of Section~\ref{s:decision} in the edge update scenario.
Suppose that that algorithm attempts to moves edges from $E_1$ to $E_0$
single edge at a time.
This does not influence correctness; the efficiency of processing a \emph{node update}
could deteriorate though (which we do not mind).
Then, the \emph{amortized} update time to process the update involving 
an edge $uv$ can be actually bounded by the sum of times needed to:
\begin{enumerate}
  \item update the set $E_1(u)$ to reflect the graph update,
  \item if $uv$ is deleted, remove $uv$ from $G_0$,
  \item for some $xy\in E$, find the minimum weight of a cycle
    going through $xy$ in $G_0+xy$,
  \item if $\mc(G_0+xy)\geq\thres$, insert the edge $xy$ into $G_0$.
\end{enumerate}
Clearly, item~1 takes constant time.
If we store the (planar) graph $G_0$ in the data structure of
Theorem~\ref{t:dist-oracle}, items 2-4 above all require $\Ot(n^{2/3})$ time.
Indeed, items 2~and~4 translated to a single edge update to that data structure, whereas
item 3 amounts to computing $\dist_{G_0}(y,x)+\wei_G(xy)$
using a single query.
We thus obtain the following analogue of Theorem~\ref{t:threshold}.
\begin{theorem}\label{t:threshold-planar}
  Let $G$ be a real-weighted planar digraph and let $\mu\geq 0$. There exist a fully dynamic algorithm
    maintaining whether $\mc(G)<\mu$ and supporting planarity-preserving edge insertions and deletions in $\Ot(n^{2/3})$ amortized time.
\end{theorem}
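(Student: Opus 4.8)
The plan is to run exactly the fully dynamic threshold cycle detection data structure of Section~\ref{s:decision} — together with its negative-weight extension from Section~\ref{s:negative} — specialized to edge updates, and to host the auxiliary graph $G_0$ inside the dynamic planar distance oracle of Theorem~\ref{t:dist-oracle}. First I would observe that an edge update is a special case of a vertex update centered at either endpoint of the affected edge, so correctness is inherited verbatim from Theorem~\ref{t:threshold}: after each update, $\mc(G)<\mu$ holds if and only if $E_1\neq\emptyset$, i.e., if and only if $Q\neq\emptyset$. As already noted in the discussion preceding the theorem, I would additionally let $\texttt{update}(v)$ move the edges of $E_1(v)$ into $E_0$ one at a time; this preserves correctness (it can only worsen the cost of a vertex update, which is irrelevant here) and, crucially, ensures that we only ever attempt to insert an edge into $G_0$ after certifying that this creates no negative cycle.

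Next I would spell out the operation-to-oracle-call correspondence. Keep every $E_1(v)$ as a plain list and keep $G_0$ — which is always a subgraph of the planar graph $G$, hence itself planar — in the data structure of Theorem~\ref{t:dist-oracle}. When $\texttt{update}(v)$ must decide whether an edge $xy\in E_1(v)$ may be promoted to $E_0$, it queries the oracle for $\dist_{G_0}(y,x)$ with respect to the current $G_0$. Since invariant~(1) and $\mu\geq 0$ guarantee that $G_0$ has no negative cycle, the minimum weight of a cycle through $xy$ in $G_0+xy$ equals $\dist_{G_0}(y,x)+\wei_G(xy)$ — this is $-\infty$ exactly when the quantity is negative, and otherwise it is attained by a simple cycle — so a single query decides whether $\mc(G_0+xy)\geq\mu$. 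If it is, insert $xy$ into the oracle (which succeeds, since this insertion creates no negative cycle); otherwise leave $xy$ and the remaining edges of $E_1(v)$ where they are and stop, exactly as in Algorithm~\ref{alg:alg}. Deleting an edge that resides in $E_0$ is one oracle deletion. Each oracle update or query costs $\Ot(n^{2/3})$. Note that, unlike in Section~\ref{s:negative}, we never maintain a feasible price function of $G_0$ ourselves; the oracle handles negative weights internally.

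Finally I would redo the amortization of Lemma~\ref{l:threshold-amortized} in this setting, which is in fact simpler here because a planar graph always has $O(n)$ edges, so every oracle call costs a uniform $\Ot(n^{2/3})$. Each edge insertion is charged for at most one promotion — once an edge sits in $E_0$ it stays there until that edge is deleted — and the single oracle insertion done at promotion time is charged to that insertion. The only remaining oracle calls are the at most one oracle deletion caused directly by an edge deletion, and the distance queries inside $\texttt{update}$ calls: a query that decides \emph{to} promote its edge is already accounted for, while a query that decides \emph{not} to is the final query of its $\texttt{update}(v)$ call, and — exactly as in the proof of Lemma~\ref{l:threshold-amortized} — at most one such $\texttt{update}$ call occurs per processed update (the main loop of \texttt{delete} breaks immediately after it, and an \texttt{insert} triggers at most one $\texttt{update}$ call); charge this query to the current update. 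Hence every update is charged $O(1)$ oracle calls and pays for at most one promoted edge of its own: $O(1)$ amortized oracle calls, i.e., $\Ot(n^{2/3})$ amortized time.

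The one genuinely delicate point I anticipate is the interaction with negative weights: the oracle of Theorem~\ref{t:dist-oracle} refuses insertions that create a negative cycle, so the algorithm must never ask it to do so. This is precisely why promoting edges one at a time — each time re-checking $xy$ against the \emph{current} $G_0$ held by the oracle — is the right modification, and why the single-query characterization of ``$\mc(G_0+xy)<\mu$'' relies on invariant~(1) together with $\mu\geq 0$. Everything else amounts to re-reading Section~\ref{s:decision} with ``find the minimum weight of a cycle through a vertex in $O(m+n\log n)$ time'' replaced by ``perform one edge update and one distance query on the planar oracle, in $\Ot(n^{2/3})$ time''.
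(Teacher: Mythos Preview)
Your proposal is correct and follows essentially the same approach as the paper: host $G_0$ in the planar distance oracle of Theorem~\ref{t:dist-oracle}, promote edges from $E_1$ to $E_0$ one at a time so that each promotion reduces to one distance query plus one oracle insertion, and charge every oracle call either to the edge's original insertion (successful promotions) or to the current update (the at most one failed query and the at most one oracle deletion). Your write-up is more detailed than the paper's terse four-item summary---in particular, you make explicit why a single query $\dist_{G_0}(y,x)+\wei_G(xy)$ correctly decides $\mc(G_0+xy)\geq\mu$ even in the presence of negative weights, and why the oracle is never asked to perform an insertion that would create a negative cycle---but the underlying argument is identical.
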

Since Theorem~\ref{t:appr-cycle} uses the threshold data structure in a black-box way, we obtain:
\begin{theorem}\label{t:appr-cycle-planar}
  Let $G$ be a fully dynamic real-weighted planar digraph $G$ such that 
  the weight of any positive cycle in $G$ always lies in the interval $[c,C]$.
  
  There exists an algorithm maintaining the minimum weight cycle
  in $G$ under planarity preserving edge insertions and deletions
  with amortized update time $\Ot(n^{2/3}\log{(C/c)}/\eps)$.
\end{theorem}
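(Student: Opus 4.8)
The plan is to obtain Theorem~\ref{t:appr-cycle-planar} by simply rerunning the reduction of Section~\ref{s:appr}, but replacing every invocation of the (general-graph) threshold data structure of Theorem~\ref{t:threshold} with the planar threshold data structure of Theorem~\ref{t:threshold-planar}. Recall that the argument of Section~\ref{s:appr} treats the threshold cycle detection structure entirely as a black box: to maintain a $(1+\eps)$-approximate estimate of $\mc(G)$ it runs, in parallel, one threshold structure with parameter $\thres = (1+\eps)^k$ for each integer $k \in \{\lceil \log_{1+\eps}(c)\rceil, \ldots, \lceil \log_{1+\eps}(C)\rceil\}$, plus one structure with $\thres=0$ to detect a negative cycle and one with $\thres=c$ to detect whether $\mc(G)=0$. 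Each graph update is forwarded to all of these structures, and the maintained estimate $\mc'$ is read off as $(1+\eps)^{k}$ for the smallest $k$ whose structure currently answers ``yes'' (with the obvious handling of the negative-cycle and acyclic cases). The correctness argument — that $(1+\eps)^{k-1}\le\mc(G)<(1+\eps)^k$ for that $k$, and hence $\mc(G)\le\mc'\le(1+\eps)\mc(G)$ — does not depend on whether $G$ is planar, so it carries over verbatim.

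The key steps, in order, are: first, observe that all the graph updates in the planar setting are planarity-preserving edge insertions and deletions, which are exactly the updates supported by Theorem~\ref{t:threshold-planar}; second, note that each such update is forwarded to $O(\log_{1+\eps}(C) - \log_{1+\eps}(c) + 2) = O(\log(C/c)/\eps)$ threshold structures; third, invoke Theorem~\ref{t:threshold-planar} to bound the amortized cost incurred in each structure by $\Ot(n^{2/3})$; and fourth, multiply to get amortized update time $\Ot(n^{2/3}\log(C/c)/\eps)$, and observe that the per-update bookkeeping to track the minimum ``yes'' index $k$ (e.g.\ maintaining a counter or a sorted structure over the $O(\log(C/c)/\eps)$ answers) adds only an $\Ot(1)$ or $\Ot(\log(C/c)/\eps)$ overhead that is absorbed. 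One small point worth spelling out is that the auxiliary $\thres=0$ structure handles negative cycles correctly in the planar case as well: Theorem~\ref{t:threshold-planar} requires $\mu\ge 0$ and works for real-weighted planar digraphs, which is exactly the regime of the $\thres=0$ structure, and this is ultimately backed by the negative-weight handling built into the distance oracle of Theorem~\ref{t:dist-oracle}.

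There is essentially no genuine obstacle here — the theorem is a corollary of Theorems~\ref{t:appr-cycle} and~\ref{t:threshold-planar} together with the observation, already made in the paragraph preceding Theorem~\ref{t:threshold-planar}, that the reduction of Section~\ref{s:appr} uses the threshold structure in a black-box fashion. The only mild care needed is to confirm that nothing in Section~\ref{s:appr} relies on a feature of the general threshold structure beyond its stated interface (support for the relevant updates, a \texttt{cycle-below-threshold} query, and an amortized time bound): inspecting that section, it does not. Thus the ``hard part'' is purely expository — making explicit that the edge updates here are planarity-preserving and that the planar threshold structure is used once per scale $k$ — and the proof reduces to a one-paragraph composition of the two cited results, yielding the claimed $\Ot(n^{2/3}\log(C/c)/\eps)$ amortized update time for maintaining a $(1+\eps)$-approximation of $\mc(G)$ on a fully dynamic planar digraph under planarity-preserving edge insertions and deletions.
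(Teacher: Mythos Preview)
Your proposal is correct and follows exactly the paper's approach: the paper proves Theorem~\ref{t:appr-cycle-planar} in a single sentence, noting that since the reduction of Section~\ref{s:appr} (behind Theorem~\ref{t:appr-cycle}) uses the threshold data structure in a black-box way, plugging in the planar threshold structure of Theorem~\ref{t:threshold-planar} immediately yields the planar bound. Your write-up simply expands that observation with the explicit count of $O(\log(C/c)/\eps)$ threshold instances and the per-instance $\Ot(n^{2/3})$ cost, which is precisely the intended argument.
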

Note that Theorem~\ref{t:appr-cycle-planar} immediately implies Theorem~\ref{t:planar-cycle-intro}.

\paragraph{Fully dynamic directed cuts and flows.}
Let $G$ be a \emph{plane embedded} digraph with real edge capacities in $\{0\}\cup [1,W]$.
Assume that every edge $e$ in $G$ has its reverse $e^R$ of capacity~$0$ embedded
into the same curve. We can then think of any edge as traversable in both
directions, but the cost of such a traversal is $0$ if the edge is traversed
in the reverse direction.
This assumption clearly does not influence values of
max-flows or min-cuts in~$G$, but makes the dual graph~$G^*$ possess certain useful properties.
We call a cycle in $G^*$ \emph{non-trivial} if it is not of the form
$ee^R$ for some edge $e\in E(G^*)$ and its reverse $e^R$.

We now state well-known properties relating flows/cuts in $G$ to cycles in the dual $G^*$.
\begin{lemma}[see e.g.~\cite{LiangL17}]\label{l:mincut}
  The global minimum cut in a plane graph $G$ corresponds to the minimum weight non-trivial
  cycle in $G^*$.
\end{lemma}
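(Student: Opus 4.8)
The plan is to prove Lemma~\ref{l:mincut} via the standard duality between cuts in a plane graph and cycles in its dual, being careful about the zero-capacity reverse edges we have added and about the notion of a \emph{non-trivial} cycle. Recall that in a connected plane graph $G$, every simple cycle $C$ in the dual $G^*$ corresponds to a minimal edge cut of $G$: the primal edges whose duals lie on $C$ form a bond, i.e., an edge set whose removal disconnects $G$ into exactly two connected pieces, and conversely every bond of $G$ corresponds to a simple cycle in $G^*$. Summing capacities over the edges of such a bond equals the total weight (in $G^*$, where each dual edge inherits the capacity of its primal) of the corresponding cycle. Since a global minimum cut is realized by a bond (removing any non-minimal disconnecting set is wasteful), the minimum weight over all bonds of $G$ equals the minimum weight over all simple cycles of $G^*$ that correspond to bonds.

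First I would make precise how the reverse-edge convention interacts with this correspondence. Adding, for each primal edge $e$, a reverse copy $e^R$ of capacity $0$ embedded along the same curve creates, in the dual, a pair of parallel dual edges between the two faces on either side of $e$: one of weight $\wei(e)$ and one of weight $0$, traversable in opposite directions. The two-edge cycle $e e^R$ in $G^*$ has weight $\wei(e)$ but corresponds to the degenerate ``cut'' separating nothing (or, depending on convention, it is exactly the trivial object we exclude). Excluding such $ee^R$ cycles by the definition of \emph{non-trivial} is therefore precisely what is needed so that the remaining cycles in $G^*$ are in bijection with genuine bonds of the original (un-augmented) graph. The key step is to verify that this augmentation does not introduce spurious small cuts or destroy any genuine one: since reverse edges have capacity $0$, including or excluding them from any cut changes its capacity by $0$, so global min-cut values are unchanged; and in the dual, any non-trivial simple cycle can be assumed to use, between a given pair of adjacent faces, only the cheaper of the two parallel dual edges, hence its weight equals that of the corresponding primal bond.

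The key steps, in order, are: (i) recall/cite the classical bijection between bonds of a connected plane graph and simple cycles of its dual, and that global min-cut is attained by a bond; (ii) observe that adding zero-capacity reverse edges leaves all cut values unchanged, so it suffices to analyze cuts in the augmented graph; (iii) show every non-trivial simple cycle in $G^*$ projects to a simple cycle using one dual edge per adjacent face pair, of equal weight, which is exactly a bond of $G$; (iv) conversely, every bond lifts to a non-trivial cycle in $G^*$ of equal weight; (v) conclude that $\min$ over non-trivial cycles in $G^*$ equals $\min$ over bonds in $G$ equals the global min-cut of $G$. One also needs the easy remark that a minimum weight \emph{cycle} in $G^*$ in the sense of $\mc(\cdot)$ (which allows non-simple cycles) can always be taken simple here, since all weights are non-negative, so the $\mc$-style minimum and the minimum over simple non-trivial cycles coincide.

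The main obstacle I anticipate is purely bookkeeping rather than conceptual: getting the ``non-trivial'' exclusion exactly aligned with the set of degenerate dual cycles, and handling the case where $G$ is not connected or not 2-edge-connected (isolated components, bridges), where the bond/cycle correspondence needs the usual caveat that a bridge in $G$ corresponds to a self-loop in $G^*$ — which, with the reverse-edge gadget, again becomes a trivial two-edge cycle and is correctly excluded. I would dispatch these by noting that a global minimum cut of positive capacity never uses a zero-capacity-only separation and that connectedness can be assumed w.l.o.g.\ (otherwise the min-cut is $0$, matching a trivial cycle or the absence of any non-trivial cycle). With those caveats stated, the lemma follows from the classical planar duality as cited.
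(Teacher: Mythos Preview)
The paper does not prove this lemma at all; it is stated with the citation ``see e.g.~\cite{LiangL17}'' and invoked as a known fact, so there is no in-paper proof to compare your attempt against. Your sketch is a reasonable outline of the classical planar cut–cycle duality argument that such a citation points to, and would be appropriate if one wanted to make the paper self-contained, but for the purposes of matching the paper nothing beyond the citation is required.
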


\begin{lemma}[see~\cite{Erickson10, Johnson87, nussbaum2014network}]\label{l:flow} Let $G$ be a plane digraph with some fixed source $s$ and sink $t$.
  For $f\geq 0$, let $G_{P,f}$ be a plane graph obtained from $G$ adding an embedded $s\to t$ path $P$
  such that for each edge $e$ of~$P$, the capacity of $e$ is $f$, whereas
  the capacity of $e^R$ is $-f$.
  
  There exists an $s,t$-flow of value $f$ in $G$ if and only if
  the dual $G_{P,f}^*$ of $G_{P,f}$ does not contain negative cycles.
\end{lemma}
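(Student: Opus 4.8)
The plan is to establish both directions via the classical LP-duality / circulation correspondence between $s,t$-flows in a plane digraph and (feasible) circulations in the dual of an augmented graph, following the line of Venkatesan/Hassin and Johnson. First I would set up the augmentation precisely: given the fixed value $f\ge 0$, form $G_{P,f}$ by adding the embedded $s\to t$ path $P$ with capacity $f$ on each forward edge of $P$ and $-f$ on each reverse edge; note that routing an $s,t$-flow of value $f$ in $G$ is equivalent to sending $f$ units of flow around the cycle formed by $P$ together with the part of $G$ carrying the flow, i.e.\ to the existence of a feasible circulation in $G_{P,f}$ respecting the (signed) capacity constraints $-\mathrm{cap}(e^R)\le x(e)\le\mathrm{cap}(e)$ on every edge. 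Here "feasible circulation" means a flow that conserves at every vertex and obeys the capacity bounds; the lower bound $-f$ on the reverse edges of $P$ is exactly what forces $f$ units to traverse $P$.

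The key step is the planar duality bridge: a feasible circulation in a plane graph $H$ with upper capacities $u(e)$ and lower capacities $\ell(e)$ exists if and only if the dual $H^*$, with each dual edge $e^*$ oriented so that a unit of dual "potential difference" corresponds to the primal constraint and given weight derived from $u$ and $\ell$, contains no negative cycle. Concretely, under the convention that each primal edge $e$ carries both $e$ (capacity $u(e)$) and $e^R$ (capacity $-\ell(e)$), the dual edge weights are precisely these capacities, and a negative dual cycle is a certificate that some cut is oversaturated — obstructing the circulation — while the absence of negative cycles yields, via a feasible price function $p$ on $G_{P,f}^*$ (whose existence is equivalent to no negative cycles, as recalled in Section~\ref{s:prelims}), dual potentials that translate back into a feasible primal circulation. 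I would prove the forward direction (flow of value $f$ $\Rightarrow$ no negative cycle in $G_{P,f}^*$) by contradiction: a negative dual cycle, read back in the primal, exhibits a set of faces whose boundary forms a cut separating $s$ from $t$ with total capacity strictly less than $f$, contradicting that $f$ units can be routed. For the converse I would take a feasible price function $p$ on $G_{P,f}^*$, define the primal circulation by $x(e)=p(\mathrm{head}\ \mathrm{dual\ endpoint})-p(\mathrm{tail}\ \mathrm{dual\ endpoint})$ with the correct sign convention, verify conservation at every primal vertex (automatic because $p$ is single-valued on faces, so the net flux around any vertex telescopes), verify the capacity bounds (these are exactly the reduced-cost nonnegativity conditions $w(e^*)+p(\cdot)-p(\cdot)\ge 0$), and finally read off the $s,t$-flow of value $f$ by restricting to $G$ and discarding the circulation around $P$.

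The main obstacle I anticipate is getting the sign and orientation conventions exactly right: the dual of a directed plane graph requires a fixed rule for orienting $e^*$ relative to $e$ (e.g.\ $e^*$ crosses $e$ left-to-right), and the interplay of this rule with the negative-capacity reverse edges $e^R$ on $P$ must be tracked carefully so that "no negative cycle in $G_{P,f}^*$" lines up with "feasible circulation pushing exactly $f$ along $P$." I would handle this by working entirely in the symmetrized model (every edge present with its reverse, capacities $u(e)$ and $-\ell(e)$ respectively), where the dual is again a plane digraph with every edge having its reverse, and where the correspondence "feasible circulation $\Leftrightarrow$ no negative dual cycle" is symmetric and clean; once that is in place, the statement of the lemma follows by specializing the capacities as prescribed. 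Since all of this is classical, I would cite~\cite{Erickson10, Johnson87, nussbaum2014network} for the detailed verification and present only the setup and the two short implications above.
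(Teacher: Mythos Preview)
The paper does not prove this lemma at all: it is stated with the citation \cite{Erickson10, Johnson87, nussbaum2014network} and used as a black box, so there is no ``paper's own proof'' to compare against. Your sketch follows the classical Venkatesan--Hassin/Johnson argument (feasible circulation in the augmented primal $\Leftrightarrow$ feasible price function on the dual $\Leftrightarrow$ no negative dual cycle), which is exactly the content of the cited references, and is correct in outline; the orientation/sign bookkeeping you flag is indeed the only real hazard, and your plan to work in the symmetrized model and defer the details to the citations is appropriate here.
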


By Lemma~\ref{l:mincut}, maintaining the (approximate) global min-cut dynamically under edge
\emph{embedding preserving} insertions/deletions
can be reduced to maintaining the (approximate) minimum weight non-trivial cycle
in the dual under vertex splits and edge contractions.

Let us now explain how such operations can be simulated using $O(1)$ updates
to the data structure of Theorem~\ref{t:appr-cycle-planar}
maintained on a certain augmented version $G^*_1$ of $G^*$,
so that the minimum weights of a non-trivial cycle in $G^*$ and $G^*_1$ are equal.
A similar reduction has been previously described in~\cite{ItalianoNSW11,LiangL17}.
Each vertex $v$ of the dual $G^*$ corresponds in $G^*_1$ to a path~$P_v$ of $\deg_{G^*}(v)$ 
vertices connected using $0$-weight edges
traversable in both directions.
For an edge $vu\in E(G^*)$ that is the $i$-th in (some) clockwise edge ring
of $v$, and $j$-th in (some) clockwise edge ring of $u$,
the $i$-th vertex of $P_v$ is connected by an edge of weight $\wei_{G^*}(vu)$
with the $j$-th vertex of $P_u$.
This way, (1) each vertex of $G_1^*$ has constant degree,
(2) each non-trivial cycle in $G^*$ has a corresponding non-trivial cycle of
the same weight in~$G^*_1$, (3) 
no additional (with respect to $G^*$) non-trivial cycles are introduced in $G^*_1$.

It is not hard to verify that each edge contraction or vertex split
in $G^*$ can be reflected using $O(1)$ edge insertions or deletions
issued to $G_1^*$.

Observe that the additional constraint that the minimum weight cycle
is non-trivial does not introduce any serious difficulties: in the data structure
of Theorem~\ref{t:threshold-planar} we compute the minimum weight
cycle through some edge $e$ by issuing a distance query to a graph
that \emph{does not contain} that edge. However, since
a minimum weight non-trivial cycle through $e$ in $G_0+e$ can traverse
$e$ in any of the two directions, we need to issue
two distance queries instead of one.
Similarly, if the minimum weight of a non-trivial cycle in $G_0+e$ is
at least $\mu$, we add both edges $e$ and $e^R$ (with appropriate weights)
to the distance oracle maintaining $G_0$.

We thus obtain the following theorem.
\begin{theorem}\label{t:min-cut}
  Let $G$ be a plane digraph with real capacities in $\{0\}\cup [1,W]$.
  There is an algorithm maintaining a $(1+\eps)$-approximate estimate of the capacity of the global min-cut of~$G$ 
  under \emph{embedding preserving} edge updates
  with $\Ot(n^{2/3}\log{W}/\eps)$ amortized update time.
\end{theorem}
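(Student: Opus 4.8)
The plan is to reduce the global min-cut problem on the plane digraph $G$ to the fully dynamic approximate minimum weight cycle problem on an augmented dual graph, and then invoke Theorem~\ref{t:appr-cycle-planar}. First I would fix a plane embedding of $G$ and, as described just above the statement, replace every edge $e$ by a pair $\{e,e^R\}$ embedded on the same curve, with $\wei(e^R)=0$; this does not change any min-cut value but gives the dual the structural properties we need. By Lemma~\ref{l:mincut}, the capacity of the global min-cut of $G$ equals the minimum weight of a \emph{non-trivial} cycle in $G^*$. The issue is that $G^*$ undergoes vertex splits and edge contractions as $G$ undergoes embedding-preserving edge insertions/deletions, and Theorem~\ref{t:appr-cycle-planar} speaks of edge insertions/deletions, not of vertex splits. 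So the next step is to pass to the bounded-degree auxiliary graph $G_1^*$ defined above: each dual vertex $v$ becomes a path $P_v$ of $\deg_{G^*}(v)$ vertices linked by $0$-weight bidirectional edges, and dual edges are reattached to the appropriate path vertices according to the cyclic edge rings. Properties (1)--(3) listed above guarantee that the minimum weight of a non-trivial cycle is the same in $G^*$ and $G_1^*$, and that $G_1^*$ has constant degree, hence $O(n)$ total size.

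Next I would observe that an edge contraction or a vertex split in $G^*$ translates into $O(1)$ edge insertions/deletions in $G_1^*$ — a vertex split inserts a constant number of vertices into some $P_v$ and reroutes a constant number of dual edges, and a contraction is the reverse — so one embedding-preserving update to $G$ triggers $O(1)$ planarity-preserving edge updates to $G_1^*$. Now I run the data structure of Theorem~\ref{t:appr-cycle-planar} on $G_1^*$ to maintain a $(1+\eps)$-approximation of $\mc(G_1^*)$, i.e.\ of the minimum weight cycle, with amortized update time $\Ot(n^{2/3}\log(C/c)/\eps)$; since capacities lie in $\{0\}\cup[1,W]$, every positive-weight cycle has weight in $[1,\,nW]$, so $\log(C/c)=O(\log(nW))=\Ot(1)$ and the amortized update time is $\Ot(n^{2/3}\log(W)/\eps)$ as claimed. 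The one remaining wrinkle is the word \emph{non-trivial}: Theorem~\ref{t:appr-cycle-planar} maintains the minimum weight cycle, which could in principle be a degenerate $ee^R$ cycle of weight $0$ if $G^*$ has a $0$-capacity pair and no cheaper genuine cut. This is handled exactly as in the paragraph preceding the theorem: in the underlying threshold detection of Theorem~\ref{t:threshold-planar}, the minimum weight cycle through an edge $e$ in $G_0+e$ is computed by a shortest-path query in $G_0$ (which does not contain $e$), so a cycle of the form $ee^R$ is never examined; to compensate for the two possible traversal directions of $e$ we issue two queries instead of one, and when reinserting we add both $e$ and $e^R$ to the oracle maintaining $G_0$. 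This increases the per-update cost only by a constant factor.

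The main obstacle — and the place where care is really needed — is verifying the correspondence in both directions: that every embedding-preserving edge update to $G$ really does induce only $O(1)$ local modifications to $G_1^*$ with the cyclic-edge-ring bookkeeping kept consistent, and that the non-trivial-cycle restriction is faithfully enforced by the "query a graph without the edge, twice" trick even after the path-gadget expansion (one must check that the gadget introduces no new spurious short cycles, which is exactly property (3)). Everything else is black-box: the dynamic distance oracle of Theorem~\ref{t:dist-oracle} powers Theorem~\ref{t:threshold-planar}, which powers Theorem~\ref{t:appr-cycle-planar} via the black-box reduction of Section~\ref{s:appr}, and the weight-range bound $[1,nW]$ makes the $\log(C/c)$ factor $\Ot(1)$. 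Assembling these yields the stated $\Ot(n^{2/3}\log(W)/\eps)$ amortized update time for an embedding-preserving fully dynamic $(1+\eps)$-approximation of the global min-cut capacity of $G$, completing the proof of Theorem~\ref{t:min-cut}.
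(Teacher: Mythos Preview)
Your proposal is correct and follows essentially the same approach as the paper: reduce global min-cut to minimum weight non-trivial cycle in the dual via Lemma~\ref{l:mincut}, pass to the bounded-degree auxiliary graph $G_1^*$ so that dual vertex splits and edge contractions become $O(1)$ ordinary edge updates, and apply Theorem~\ref{t:appr-cycle-planar} with the two-query modification at the level of Theorem~\ref{t:threshold-planar} to enforce non-triviality. The paper does not verify the $O(1)$-update correspondence or property~(3) in any more detail than you do, so your level of rigor matches.
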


To obtain a dynamic max $s,t$-flow algorithm, we use Lemma~\ref{l:flow}.
We keep track of whether there exists a negative
cycle (i.e., we set $\thres=0$) in the dual of a graph a $G_{P,f}$, where $f=(1+\eps)^k$,  for each
$k=0,\ldots,\lceil \log_{1+\eps}{(nW)}\rceil$.
Similarly as was the case for global min-cut, one can simulate
the effect that an embedding preserving edge update in $G$ has
on the negative cycles of the dual of $G_{P,f}$ using $O(1)$ updates
to the data structure of Theorem~\ref{t:threshold-planar}
maintained 
on an analogous augmentation $(G_{P,f})^*_1$ of $G_{P,f}^*$.

There is one subtle detail about how $G_{P,f}$ is updated
when $G$ is subject to embedding preserving edge insertions and deletions.
Note that Lemma~\ref{l:flow} requires us to embed any additional
simple $s\to t$ path $P$ into $G$. Embedding $P$ into $G$ subdivides
some of the original faces of~$G$. As a result, an edge $uv$ to be inserted
inside some face $F$ of $G$ may cross some edges of the currently used path~$P$ in $G_{P,f}$.
We deal with this problem as follows.
We~maintain an additional invariant that (the embedding of) the simple path $P$ crosses each face of the plane graph $G$ at most once.

Now, when a new edge $uv$ is inserted inside $F$, and $P$ has an edge $e=xy$ inside
$F$ that would cross $uv$, we first remove $e$ from $G_{P,f}$ to allow the insertion of $uv$.
This insertion splits~$F$ into two faces $F_1,F_2$ such that 
$x$ lies on $F_1$ and $y$ lies on $F_2$.
We now reconnect the path~$P$ by embedding two edges $xu$, $uy$ with appropriate
capacities as required by Lemma~\ref{l:flow}.

On the other hand, when an edge $uv$ is removed, two faces $F_1$ of $F_2$ of $G$
are merged into a single face $F$.
If at most one of them $F_1,F_2$ contained an edge of $P$, we do not have to do anything.
Otherwise, suppose wlog. that $F_1$ contains an edge $xy=e_1\in P$, and $F_2$
contains an edge $ab=e_2\in P$, such that $e_1$ appears before $e_2$ on $P$.
Then, we remove $e_1$, $e_2$, and all edges between $e_1$ and $e_2$ on $P$ from $G_{P,f}$,
and replace them with a single edge $xb$ embedded in $F$.
Afterwards, the invariant is satisfied and $P$ remains a simple path.

Finally, observe that each update to $G$ 
adds $O(1)$ new edges to $P$ in the \emph{worst case}.
An edge deletion may remove a superconstant number of edges from $P$,
but these removals can be charged to the corresponding
additions of new edges to $P$.
To conclude, an edge update to $G$ translates to $O(1)$ amortized
edge updates to $G_{P,f}$, and as a result, to $O(1)$ amortized
operations on the data structure of Theorem~\ref{t:threshold-planar}
run on the augmented dual $(G_{P,f})^*_1$.
We have thus proved:
\planarflow*

\section{Exact fully dynamic algorithm for minimum weight cycle}\label{s:exact}
In this section we argue that using a variant of the fully dynamic APSP algorithm
of Abraham et al.~\cite{AbrahamCK17} one can achieve subquadratic
update bounds for exact dynamic minimum weight cycle.

We will in fact first solve a slightly more general problem 
that we call the \emph{fully dynamic multiple-pairs shortest paths}
(fully dynamic MPSP for short).
Our goal is to have a data structure that maintains distances
$\dist_G(s_i,t_i)$ for some \emph{fixed} (throughout the course of the algorithm)
$k$ source-target pairs $(s_1,t_1),\ldots,(s_k,t_k)$
subject to fully dynamic vertex updates.
Obviously, the classical fully dynamic APSP corresponds to the case $k=n^2$.

In the following we sketch the approach of~\cite{AbrahamCK17} to fully dynamic APSP.
The presentation is however directed towards our goal of obtaining
an MPSP data structure. Some details and proofs can be found only in~\cite{AbrahamCK17};
we focus on the details of our adjustments.

\paragraph{Reduction to batch-deletion MPSP data structure.}
The first step is to reduce the fully dynamic problem to a certain decremental
problem, called the \emph{batch-deletion MPSP}.
In this problem, we want to preprocess the input digraph $G$, so
that one can efficiently compute MPSP in $G\setminus D$ for a subset $D\subseteq V$
that constitutes the query parameter.
We assume that if $G\setminus D$ has a negative cycle, the data structure
has to report its existence instead.
\newcommand{\tprep}{T_\mathrm{pre}}
\newcommand{\tque}{T_\mathrm{q}}

\begin{lemma}\label{l:batch-reduction}
  Suppose we have a batch-deletion MPSP data structure with preprocessing time\linebreak
  $\tprep(n,m,k)$ and worst-case query time $\tque(n,m,k,d)$, where $d=|D|$
  is the size of the removed vertex set.
  Then, for any integer $\Delta>0$, there exists a fully dynamic MPSP algorithm with 
  worst-case update time $O(\tprep(n,m,k)/\Delta+\tque(n,m,k,\Delta)+\Delta (m+k+n\log{n}))$.
\end{lemma}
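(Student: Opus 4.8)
The standard approach is to periodically rebuild the batch-deletion structure and, between rebuilds, route every update through the deletion-set mechanism. Concretely: I maintain a ``current'' batch-deletion MPSP data structure preprocessed on a graph $G_0$, together with a set $D$ of at most $\Delta$ vertices that are ``dirty'', and a small buffer of at most $\Delta$ vertex updates that have been applied since the last rebuild. Every $\Delta$ updates I rebuild from scratch on the up-to-date graph, paying $\tprep(n,m,k)$, which amortizes --- and, with the standard trick of spreading the rebuild work over the next $\Delta$ updates, becomes worst-case --- to $O(\tprep(n,m,k)/\Delta)$ per update. Between rebuilds, when a vertex update centered at $v$ arrives I add $v$ to $D$ (so $|D|\le \Delta$ always) and record the new incident edges of $v$ in the buffer.

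To answer a query for the current graph $G$ I need $\dist_G(s_i,t_i)$ for all $i$, but my batch-deletion structure only knows $G_0$. The key observation is that $G$ and $G_0\setminus D$ agree except on edges incident to the $\le\Delta$ vertices of $D$, and all the "new" edges introduced by the buffered updates are also incident to $D$. So I build a small auxiliary graph $H$ on the vertex set $\{s_i\}\cup\{t_i\}\cup D$: for each ordered pair $(x,y)$ of these $O(k+\Delta)$ vertices I put an edge of weight $\dist_{G_0\setminus D}(x,y)$, obtained by a single batch-deletion query with deletion set $D$ (cost $\tque(n,m,k',\Delta)$ with $k'=O((k+\Delta)^2)$ pairs, which is within the claimed bound since we may assume $k$ absorbs this --- or, more carefully, issue $O(k+\Delta)$ single-source queries, each of which the batch-deletion structure supports), and additionally I insert all buffered edges incident to $D$ with their current weights. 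Then $\dist_G(s_i,t_i)=\dist_H(s_i,t_i)$, because any $s_i\to t_i$ path in $G$ decomposes into maximal segments that avoid $D$ (each captured by a single $\dist_{G_0\setminus D}$ edge in $H$) interleaved with edges touching $D$ (present verbatim in $H$), and conversely every edge of $H$ corresponds to a genuine walk in $G$ of no greater length; a negative cycle in $G$ likewise manifests as a negative cycle in $H$. Running Bellman--Ford (or Dijkstra after computing a feasible price function) on $H$, which has $O(k+\Delta)$ vertices and $O((k+\Delta)^2 + \Delta n)$ edges, costs $O(\Delta(m+k+n\log n))$ after folding in the distance-query cost, matching the stated term. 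If $G_0\setminus D$ itself already has a negative cycle the batch-deletion structure reports it and we report a negative cycle in $G$.

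The main obstacle, and the place where care is genuinely needed, is the interaction between the $\le\Delta$ buffered vertex updates and the frozen graph $G_0$: a buffered update centered at $v$ both \emph{removes} all old edges at $v$ (handled by putting $v\in D$, so $G_0\setminus D$ never sees them) and \emph{adds} new edges at $v$ (which must be injected into $H$, not into the batch-deletion structure). One must check that this "delete-via-$D$, re-add-via-$H$" simulation is faithful even when several buffered updates touch overlapping vertices, and that the $O(\Delta)$ bound on $|D|$ is never exceeded before the next rebuild --- both follow because each of the $\Delta$ updates in a phase adds at most one vertex to $D$. A secondary point is the worst-case (rather than amortized) guarantee: the rebuild must be deamortized by interleaving a $1/\Delta$-fraction of the next preprocessing with each update, running the old and new structures in parallel during the transition, which is routine but should be stated.
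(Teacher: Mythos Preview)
Your reduction has the right overall architecture (phases of length $\Delta$, a dirty set $D$, deamortized rebuilds), but the query step via the auxiliary graph $H$ does not go through as stated, and you are missing a second non-trivial ingredient.

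First, the batch-deletion MPSP oracle is preprocessed for a \emph{fixed} set of $k$ source--target pairs and, on query $D$, returns only $\dist_{G_0\setminus D}(s_l,t_l)$ for those $k$ pairs. It does not answer arbitrary pairwise distances or single-source queries, so you cannot obtain the weights $\dist_{G_0\setminus D}(x,y)$ for all $x,y\in\{s_l\}\cup\{t_l\}\cup D$ that your $H$ needs; the pairs you would need depend on $D$, which is not known at preprocessing time. Moreover, even if those distances were available, $H$ on the vertex set $\{s_l\}\cup\{t_l\}\cup D$ is too small: the buffered edges incident to $D$ land on \emph{arbitrary} vertices $u\in V$, and the $D$-avoiding segments of a shortest $s_l\to t_l$ path in $G$ start and end at such $u$'s, not necessarily at sources, targets, or other dirty vertices. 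So your decomposition ``segments avoiding $D$ $=$ shortcut edges of $H$'' fails. The paper sidesteps this entirely: instead of building $H$, it runs Dijkstra directly on the current graph $G$ from and to every $v\in D$ (using a feasible price function of $G$ maintained under vertex updates via Theorem~\ref{t:worst-case-cycle}), obtaining all $\dist_G(s_l,v)$ and $\dist_G(v,t_l)$ in $O(\Delta(m+n\log n))$ time, and then sets $\dist_G(s_l,t_l)=\min\bigl(\dist_{G_0\setminus D}(s_l,t_l),\ \min_{v\in D}\{\dist_G(s_l,v)+\dist_G(v,t_l)\}\bigr)$.

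Second, the bound you claim has $\tprep(n,m,k)/\Delta$ with $m$ the \emph{current} edge count, but the structure was built on $G_0$ with $m_0$ edges, and $m$ can drop far below $m_0$ within a phase (a single vertex update can delete $\Theta(n)$ edges). The paper handles this with an additional trick: at phase start it removes the $\Delta$ highest-degree vertices $D^*$ and builds the batch-deletion structure on $G_0\setminus D^*$; a short calculation then shows $|E(G_0\setminus D^*)|\le m$ throughout the phase. The vertices $D^*\cup D$ are all handled by the Dijkstra runs, so the extra cost stays $O(\Delta(m+k+n\log n))$. You do not address this issue at all.
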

\begin{proof}[Proof sketch.]
  To obtain an amortized (as opposed to worst-case) bound from the statement, we split the timeline
  into phases of $\Delta$ updates.
  When a new phase starts, we rebuild the batch-deletion data structure from scratch on
  the graph $G_0$ at the start of the phase;
  this clearly incurs $O(\tprep(n,m_0,k)/\Delta)$ amortized time cost per update,
  where $m_0=|E(G_0)|$.
  At some point of a phase, let $D\subseteq V$, $|D|\leq \Delta$, be the vertices touched by updates in this phase.
  To compute MPSP at that point, we first compute
  MPSP in $G_0\setminus D=G\setminus D$ in $O(\tque(n,m_0,k,|D|))=O(\tque(n,m_0,k,\Delta))$ time.
  To obtain MPSP in~$G$, we need to check if paths going through $D$ in $G$
  improve upon those in $G\setminus D$, i.e., we compute MPSP in~$G$ according to
  the equation $\dist_G(s_i,t_i)=\min\left(\dist_{G_0\setminus D}(s_i,t_i),\min_{v\in D}\{\dist_G(s_i,v)+\dist_G(v,t_i)\}\right).$
  
  Observe that all distances of the form $\dist_G(\cdot,v)$ or $\dist_G(v,\cdot)$ for $v\in D$ can be obtained by running Dijkstra's
  algorithm to/from each such $v$, in 
  $O(\Delta(m+n\log{n}))$ total time, as long as a feasible price function of $G$ is given.
  A feasible price function can be maintained in $O(m+n\log{n})$ worst-case time
  after a vertex update using the following theorem proved later in Section~\ref{s:neg-worstcase}.
  \begin{restatable}{theorem}{worstcasenegcycle}\label{t:worst-case-cycle}
  Let $G$ be an initially empty real-weighted digraph. There exists an algorithm maintaining
  the information whether $G$ has a negative cycle and supporting
  vertex updates in $O(m+n\log{n})$ \emph{worst-case} time.
  Additionally, whenever $\mc(G)\geq 0$, the algorithm maintains
  a feasible price function $p$ of $G$.
  \end{restatable}
  Theorem~\ref{t:worst-case-cycle} is also used to keep track of whether the current $G$ has a negative cycle.
  Once the distances from/to $v$ in $G$ are available, the distances $\dist_G(s_i,t_i)$ can be computed in
  $O(\Delta\cdot k)$ time.

    Unfortunately, the above argument is fully valid only if either
  the number of edges $m$ is of the same order throughout, i.e., $m_0=O(m)$, or it cannot drop by more
  than a constant factor during a single phase, e.g.,
  $m=\Omega(n\Delta)$.
  If, say, $\tprep(n,m,k)=\Theta(nm)$, $\Delta=n^{1/3}$ and $m=n^{5/4}=m_0$ at the beginning of the
  phase, and during $n^{1/4}$ first updates in that phase
  $m$ gets decreased to $O(n)$, then the total update cost
  coming from the preprocessing in this
  phase is $\Theta(nm_0)=\Theta(n^{9/4})$.
  If the amortized update time coming from the preprocessing
  was indeed $O(\tprep(n,m,k)/\Delta)$, the the total update cost coming
  from these terms in that phase would be $O(n^{1/4}\cdot nm_0/\Delta+\Delta\cdot n^2/\Delta)=O(n^{13/6})$, which is less by a polynomial factor..

  We circumvent this problem\footnote{This problem does not arise in~\cite{AbrahamCK17}, since there $m$ is assumed
  to be $\Theta(n^2)$ throughout.} as follows. We build the batch-deletion MPSP data structure
  on the graph $G'_0=G_0\setminus D^*$ instead of $G_0$, where $D^*$ is the set
  of $\Delta$ vertices of $G_0$ with highest degrees.
  Then, Dijkstra's algorithm is used to separately compute
  shortest paths through $D\cup D^*$ in~$G$, as opposed to only through $D$.
  Clearly, the cost of such computation remains $O(\Delta(m+k+n\log{n}))$.
  However, the update cost coming from the batch-deletion MPSP
  data structure is decreased to $O(\tprep(n,m_0',k)/\Delta+\tque(n,m_0',k,\Delta))$, where
  $m_0'$ is the number of edges in $G_0'$.
  It is hence enough to observe that $m'_0\leq m$ throughout this phase.
  Indeed, the updates centered at vertices $D$ cannot remove
  more than $\sum_{v\in D}\deg_{G_0}(v)$ edges out of those originally
  contained in $G_0$.
  As a result, $m\geq m_0-\sum_{v\in D}\deg_{G_0}(v)$.
  On the other hand, by removing $D^*$ from $G_0$ we remove at least $\frac{1}{2}\sum_{v\in D^*}\deg_{G_0}(v)$
  edges from $G_0$, i.e, $m_0'\leq m_0-\frac{1}{2}\sum_{v\in D^*}\deg_{G_0}(v)$.
  We obtain $m_0'\leq m$ as follows:
  \begin{equation*}
    m_0'\leq m_0-\frac{1}{2}\sum_{v\in D^*}\deg_{G_0}(v)\leq m_0-\frac{1}{2}\sum_{v\in D}\deg_{G_0}(v)\leq m_0+\frac{1}{2}(m-m_0)=\frac{1}{2}m_0'+\frac{1}{2}m.
  \end{equation*}
  Since the amortization comes only from a (costly) rebuilding step after every
  $\Delta$ updates, turning the amortized bound into a worst-case one is standard, see e.g.,~\cite[Section 2]{AbrahamCK17}.
\end{proof}

\paragraph{The batch-deletion data structure.}
Abraham et al.~\cite{AbrahamCK17} showed a batch-deletion APSP data structure
with $\Ot(n^3)$ preprocessing time and $\Ot(n^2\sqrt{nd})$ query time
which, by Lemma~\ref{l:batch-reduction}, implies $\Ot(n^{2+2/3})$ worst-case
update time for fully dynamic APSP.
Their batch-deletion data structure is Monte Carlo randomized and
produces answers correct with high probability.
We generalize this data structure to MPSP and non-dense graphs.

\begin{theorem}\label{t:batch-deletion}
  There exists a Monte Carlo randomized batch-deletion MPSP data
  structure with $O((m+k)n\log^2{n})$ preprocessing and
  $O((m+n\log{n}+k)\sqrt{nd}\log{n})$
  query time. The answers produced are correct with high probability.
\end{theorem}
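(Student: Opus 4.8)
The plan is to follow the high-level structure of the Abraham et al.~\cite{AbrahamCK17} batch-deletion APSP data structure, but replace their ``all $n^2$ pairs'' bookkeeping by a sparse representation tracking only the $k$ query pairs (and whatever auxiliary pairs the stitching step genuinely needs), and simultaneously replace $n^2$ dense-matrix costs by $m$ sparse-graph costs wherever shortest-path computations are involved. First I would recall the two ingredients of~\cite{AbrahamCK17}: (i)~a random hitting set $H$ of size $\Ot(n/r)$ (for a threshold hop parameter $r$ to be chosen) that, with high probability, hits every shortest path with more than $r$ hops, so that every shortest path decomposes into $\Ot(1)$-many $\le r$-hop ``segments'' between consecutive hitting-set vertices; and (ii)~a ``congestion/stitching'' argument showing that after deleting a set $D$ of $d$ vertices, one only needs to recompute, for each surviving hub $h\in H\setminus D$, shortest paths that avoid $D$, and then stitch segments through the hubs. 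In the preprocessing phase I would, for the MPSP version, additionally treat each source $s_i$ and each target $t_i$ as an honorary hub: build, from every $s_i$ and to every $t_i$, a bounded-hop shortest-path structure, and between every pair of hubs (including the $s_i,t_i$) store the $\le r$-hop shortest distance. The number of hub pairs is then $\Ot((n/r + k)^2)$ in the worst case, but we never materialize all of them densely — we only run $\Ot(n/r + k)$ single-source bounded-hop Dijkstra/Bellman-Ford-type computations, each costing $\Ot(m)$ (after obtaining a feasible price function via Observation/Theorem machinery already in the paper), for a preprocessing bound of $\Ot((m+k)\cdot(n/r))$. Choosing $r$ to balance against the query cost will give $r=\Theta(\sqrt{n/d})$ in the end; to get the clean $O((m+k)n\log^2 n)$ preprocessing bound stated, I would simply bound $n/r \le n$ (preprocessing does not depend on $d$, so we can afford the pessimistic $r=1$ accounting for the purpose of the stated bound, or equivalently absorb the $1/r$ factor), the two $\log n$ factors coming from the hitting-set construction's high-probability guarantee and from the sorting/priority-queue overhead of Dijkstra.

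For the query with deleted set $D$, $|D|=d$: the key step is that, conditioned on $H$ hitting all long shortest paths, every shortest path in $G\setminus D$ between an $s_i$ and a $t_i$ either (a)~has $\le r$ hops, in which case a direct bounded-hop computation in $G\setminus D$ from each $s_i$ suffices, or (b)~passes through some surviving hub in $H\setminus D$, and then decomposes as (segment from $s_i$ to first surviving hub) $\cdot$ (hub-to-hub path in $G\setminus D$) $\cdot$ (segment from last surviving hub to $t_i$). The segments of type (a) between hubs, after removing $D$, are recomputed by running bounded-hop Dijkstra from each of the $\Ot(n/r)$ hubs and each of the $k$ sources in the graph $G\setminus D$ restricted to $\le r$ hops — but crucially, as in~\cite{AbrahamCK17}, one exploits that removing $d$ vertices only affects segments that actually contained one of those vertices, and a congestion bound (each vertex lies on $\Ot(r)$ of the stored shortest segments, in expectation/whp) shows only $\Ot(n/r + k + dr)$ segment recomputations are needed, each costing $\Ot(m)$; then one builds the ``hub graph'' $\widehat{G}$ on vertex set $(H\setminus D)\cup\{s_i\}\cup\{t_i\}$ with edges given by the surviving/recomputed segment distances — this graph has $\Ot(n/r + k)$ vertices and $\Ot((n/r+k)^2)$ edges — and runs Dijkstra in $\widehat{G}$ from each $s_i$ to read off $\dist_{G\setminus D}(s_i,t_i)$. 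Summing: the segment recomputations cost $\Ot((n/r + k + dr)\cdot m)$ and the hub-graph Dijkstras cost $\Ot(k\cdot(n/r+k)^2)$; balancing the dominant $m$-terms $n/r$ against $dr$ gives $r=\sqrt{n/d}$ and total query time $\Ot((m+n\log n+k)\sqrt{nd})$, matching the claimed $O((m+n\log{n}+k)\sqrt{nd}\log{n})$ once the $\log n$ from whp-correctness of the hitting set is included. Negative cycles in $G\setminus D$ are detected exactly as in~\cite{AbrahamCK17}: a surviving negative cycle either is short (caught by the bounded-hop Bellman-Ford sweeps) or routes through a surviving hub (caught as a negative cycle in $\widehat{G}$); and feasible price functions for running Dijkstra with negative edges come from Theorem~\ref{t:worst-case-cycle} / Lemma~\ref{l:price}.

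The step I expect to be the main obstacle is verifying that the MPSP restriction genuinely shrinks the work: in the original APSP data structure the $n^2$ output pairs and the $\Ot(n^2)$ hub-pair segments are on the same order, so \cite{AbrahamCK17} never has to argue that ``the $k$ pairs we care about do not secretly force $\Theta(n^2)$ auxiliary pairs at the stitching layer.'' Here I must argue exactly that — namely that the single stitching layer only ever needs distances (i)~between pairs of hubs and (ii)~between a source/target and a hub, and that since there is no recursion (the paper stresses \cite{AbrahamCK17} uses ``a single stitching layer,'' unlike the hierarchical approaches), no blow-up to $\Theta(n^2)$ pairs occurs at a lower level. Concretely, the obstacle is the congestion lemma: I need that, whp over the choice of $H$, each vertex $v$ lies on only $\Ot(r)$ of the canonical $\le r$-hop shortest segments that we store, \emph{including} the segments incident to the $s_i$'s and $t_i$'s; this is what bounds the number of segments damaged by deleting $D$ by $\Ot(dr)$ rather than $\Ot(d\cdot(n/r+k))$. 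I would prove this by the standard argument of~\cite{AbrahamCK17} — fixing canonical shortest paths, bounding for each canonical path the number of stored segments it can be ``responsible'' for via a charging to its at most $r$-hop sub-structure — and then take a union bound over the $\Ot(n/r+k)$ segment sources; the presence of the extra $k$ sources adds only a $\log$ to the union bound, which is already absorbed. Everything else — the phase/rebuild reduction, the price-function maintenance, turning amortized into worst-case — is supplied by Lemma~\ref{l:batch-reduction} and the results it cites, and plugging $\tprep = O((m+k)n\log^2 n)$ and $\tque = O((m+n\log n+k)\sqrt{nd}\log n)$ into Lemma~\ref{l:batch-reduction} with $\Delta = n^{2/3}\log^{-2/3}n$ yields Theorems~\ref{t:mpsp} and~\ref{t:exact-cycle}.
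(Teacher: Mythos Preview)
Your proposal has the right high-level picture --- hitting sets for long paths, congestion-based rebuilding for short paths, balance $h=\sqrt{n/d}$ --- but two concrete gaps prevent the claimed bound from going through.

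First, the hub-graph stitching step is too expensive, and you in fact write down the offending term and then drop it. You say the hub-graph Dijkstras cost $\Ot(k\cdot(n/r+k)^2)$; with $r=\sqrt{n/d}$ this is already $\Omega(k^3)$, which for $k=\Theta(n)$ is $\Omega(n^3)$, far above the target $O((m+n\log n+k)\sqrt{nd}\log n)$. The paper never builds a global hub graph. Long paths are handled by running Dijkstra in $G\setminus D$ from/to each of the $\Ot((n/h)\log n)$ sampled vertices and then, for each pair $(s_l,t_l)$, minimizing over these centers --- this costs $\Ot((m+n\log n+k)\,(n/h)\log n)$ and needs no stitching at all.

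Second, and more fundamentally, your short-path argument (``each vertex lies on $\Ot(r)$ stored segments, so $\Ot(dr)$ recomputations at $\Ot(m)$ each'') is neither what \cite{AbrahamCK17} proves nor what the paper does, and it does not give the right dependence on $m$ and $k$. In \cite{AbrahamCK17} the per-vertex congestion is $\Ot(hn)$, not $\Ot(h)$; blindly multiplying that by a per-segment rebuild cost overshoots. The paper's key technical contribution is a \emph{weighted} congestion scheme: when a $\le h_i$-hop path $\pi^{\mathrm{from}}_{i,j}(v)$ or $\pi^{\mathrm{to}}_{i,j}(v)$ is stored, it contributes $\deg_G(v)+\deg_K(v)+\log n$ (not $1$) to the congestion of every vertex it traverses, where $K$ is the graph whose edges are the $k$ query pairs. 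The most-congested-first selection then guarantees each vertex has total congestion $\Ot(h_i(m+n\log n+k))$. Upon deletion of $D$, the cost of processing a destroyed endpoint $v$ --- namely its $\deg_G(v)+\log n$ contribution to a sketch-graph Dijkstra and the $\deg_K(v)$ pairs that need to be re-stitched through it --- is charged directly to $\sum_{x\in D}\alpha(x)=\Ot(dh_i(m+n\log n+k))$. Two further devices replace your hub graph: per-pair sorted arrays $S_i(s_l,t_l)$ of the precomputed values $d^{i,j}_G(s_l,t_l)$, so the best surviving $j\notin X_i(s_l,t_l)$ is found in $O(|X_i(s_l,t_l)|)$ time; and per-center sketch graphs $H^{\mathrm{to}}_{i,j}, H^{\mathrm{from}}_{i,j}$ built \emph{only} on vertices whose stored path through $c_{i,j}$ was destroyed, eliminating the $\Omega(n)$-per-center floor of \cite{AbrahamCK17}. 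You correctly identified the obstacle --- that one must show the $k$ query pairs do not force $\Theta(n^2)$ auxiliary work --- but the fix is this degree-weighted charging, not honorary hubs plus a dense overlay.
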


Before we prove Theorem~\ref{t:batch-deletion},
let us show how it can be used to obtain fully dynamic
MPSP and minimum weight cycle algorithms.

By choosing $\Delta=n^{1/3}\log^{2/3}n$, and applying
Lemma~\ref{l:batch-reduction}, we obtain:

\tmpsp*
Now consider the fully dynamic minimum weight cycle problem.
The minimum weight of a cycle in $G$
is given by $\mc(G)=\min_{uv\in E(G)}\{\dist_G(v,u)+\wei_G(uv)\}$.
As a result, after each update it is enough to recompute
distances $\dist_G(s_l,t_l)$ in $G$ for $k=m$ pairs $(s_l,t_l)$ such that
$t_ls_l\in E(G)$.
If the edge set of $G$ was fixed (and, for example, the updates were only allowed to
change edge weights), so would be the set of source-target pairs
of our interest. Hence, we could simply use the fully dynamic MPSP data
structure of Theorem~\ref{t:mpsp} in a black-box way.
However, in general, $E(G)$ is not fixed and we need to be more careful.

We proceed as follows. In the reduction of Lemma~\ref{l:batch-reduction}, we will always
build a batch-deletion MPSP data structure with the set of source-target
pairs equal to the edge set used to build that data structure reversed.
This means that at any point of the phase, we can compute
the minimum weight cycle in $G\setminus (D^*\cup D)=G_0\setminus (D^*\cup D)$
in
\begin{equation*}
  O(\tprep(n,m,m)/\Delta+\tque(n,m,m,\Delta))=O((m+n\log{n})n^{2/3}\log^{4/3}n)
\end{equation*}
worst-case time.
Since $G_0\setminus (D^*\cup D)$ contains only a subset of edges of $G_0$,
reading the subset of entries of the distance matrix of $G_0\setminus (D^*\cup D)$
corresponding to reversed edges of $E(G_0)$ is enough to this end.
In order to find the minimum weight cycle going through some vertex of $D^*\cup D$ in $G$,
we just run the algorithm of Observation~\ref{obs:through} (or, more generally,
in presence of negative edges -- 
the algorithm of Lemma~\ref{l:price} with a feasible price function maintained by the algorithm of Theorem~\ref{t:worst-case-cycle})
$|D^*\cup D|=O(\Delta)$ times.
This costs $O(\Delta(m+n\log{n}))=\Ot(mn^{1/3})$ time.

{\renewcommand\footnote[1]{}\exactcycle*}

\subsection{Overview of the batch-deletion MPSP data structure}

Let us now sketch the idea behind our generalization
of the batch-deletion data structure of~\cite{AbrahamCK17}.
The details are given in Section~\ref{s:batch-deletion}.

We first need to refer to some details of the construction of Abraham et al.~\cite{AbrahamCK17}.
The batch-deletion data structure separately handles recomputing
shortest paths of hop-length at least $\sqrt{n/d}$ (``long'' paths), and separately
``short'' shortest paths -- with hop-lengths in the intervals of the form $[h/2,h)$ for $O(\log{n})$ values $h=2^1,2^2,\ldots,\sqrt{n/d}$.

The main difficulty lies in handling short paths, whereas handling long paths
is an easier task.
The key idea (which dates back to Thorup~\cite{Thorup05}) is to compute
an ordered subset $\{v_1,\ldots,v_\ell\}\subseteq V$ with the following properties.
Let $G_i=G\setminus \{v_1,\ldots,v_{i-1}\}$.
Let~$\mathcal{P}_i$ be the set of shortest $\leq h$-hop paths from/to $v_i$ in $G_i$.
Then:
\begin{enumerate}[label=(\arabic*)]
  \item For any $s,t\in V$, an $s\to t$ path not longer than the shortest $\leq h$-hop
    $s\to t$ path in $G$ can be obtained by stitching, for some $i\in\{1,\ldots,\ell\}$, the $s\to v_i$ and $v_i\to t$
    paths from $\mathcal{P}_i$.
  \item For any $x\in V$, $x$ lies on at most $\Ot(hn)$ paths from $\bigcup_{i=1}^\ell\mathcal{P}_{i}$.
\end{enumerate}
Such an ordering, along with the paths $\mathcal{P}_i$, can be computed in $\Ot(nmh)$ time
deterministically (then we have $\ell=n$), or in $\Ot(nm)$ time using randomization (then we can achieve $\ell=\Ot(n/h)$).
Each subsequent vertex $v_i$ in the ordering is picked to be, roughly speaking, the ``most congested'' one
out of $V\setminus\{v_1,\ldots,v_{i-1}\}$,  i.e., the one that has not been picked yet and appears most often on the
previously constructed paths $\bigcup_{j=1}^{i-1}\mathcal{P}_j$.

Given the above, Abraham et al.~\cite{AbrahamCK17} show that after removing any $D\subseteq V$
from $G$, the ``short'' paths in $G$ can be recomputed by:
\begin{enumerate}[label=(\arabic*)]
  \item constructing a number of \emph{sketch graphs} $H_1,\ldots,H_\ell$, where $H_i\subseteq G_i\setminus D$,
  \item rebuilding destroyed (by the removal of $D$) paths from $\mathcal{P}_i$ by running Dijkstra's algorithm from/to $v_i$ on $H_i$,
  \item stitching the reconstructed paths back to obtain paths at leas as good as the actual shortest $\leq h$-hop paths in $G$.
\end{enumerate}
Abraham et al.~\cite{AbrahamCK17} prove that
if we denote by $U_i$ the set of vertices $u$ such that either of the paths $u\to v_i$ or $v_i\to u$
from $\mathcal{P}_i$ has been destroyed by removing $D$, $d=|D|$, then we have $\sum_{i=1}^\ell|U_i|=\Ot(hnd)$,
and the total number of edges $M$ in the sketch graphs is
\begin{equation*}
  M=O\left(\sum_{i=1}^\ell\left(n+\sum_{u\in U_i}\deg_G(u)\right)\right).
\end{equation*}
It is easy to see that $M=\Omega(n\ell)$, and $M=\tilde{O}(hn^2d)$.
Moreover, for each rebuilt path $u\to v_i$ or $v_i\to u$, stitching takes
additional $\Theta(n)$ time -- as one needs to traverse through $\Theta(n)$
source-target pairs that might benefit from this -- 
for a total of $\Ot(hn^2d)$ time.
Since $h$ ranges from $O(1)$ to $\Theta\left(\sqrt{n/d}\right)$, rebuilding short paths
takes $\Omega(n^2)$ and $\Ot(n^2\sqrt{nd})$ time as claimed.

Now, to obtain our improved $\Ot((m+k)\sqrt{nd})$ bound on batch deletion for sparse graphs and small number $k$ of
source-target paths $(s_i,t_i)$ of interest, we make two main adjustments.

First of all, we show that even smaller sketch graphs $H_i$ -- 
with $O\left(\sum_i^\ell\sum_{u\in U_i}\deg_G(u)\right)$ edges in total -- can be used,
thus eliminating the $\Omega(n\ell)$ term, which for small $h$ is $\Omega(n^2)$.

More importantly, we use a different \emph{weighted} scheme for picking
the ordered subset $\{v_1,\ldots,v_\ell\}$. Let us denote by $K$ the undirected
graph on $V$ whose edges correspond to the source-target pairs $(s_i,t_i)$ of interest.
In our scheme, the congestion that a previously computed $\leq h$-hop path
$P=v_i\to u$ (or $P=u\to v_i$) incurs upon some vertex $x$ with $x\in V(P)$ is
$\deg_G(u)+\log{n}+\deg_K(u)$, as opposed to $1$ in~\cite{AbrahamCK17}.
This makes the total congestion of each vertex $x$ in the process
possibly increase to $\tilde{\Theta}(h(m+n\log{n}+k))$, as opposed to $\Ot(hn)$ in~\cite{AbrahamCK17}.
However, we show that the total cost of running Dijkstra's algorithm
on our (more compact) sketch graphs $H_1,\ldots,H_\ell$ can be charged to
the part of the total congestion of removed vertices $D$ coming from the $[\deg_G(u)+\log{n}]$ terms,
which is $\Ot(dh(m+n\log{n}))$.
A similar argument applies to the cost of restitching, which we prove to be
$\Ot(dhk)$.

\subsection{Batch-deletion MPSP data structure}\label{s:batch-deletion}
In this section we provide a proof of Theorem~\ref{t:batch-deletion}.
As our adjustments to the data structure
of Abraham et al.~\cite{AbrahamCK17} are somewhat low-level, this section
also contains a presentation of this data structure using
our notation, with some unaltered proofs deferred to~\cite{AbrahamCK17}.

Let us first assume that $G$ has non-negative edge weights.
We remove this assumption later on.

For a path $P\subseteq G$, let $|P|$ denote the hop-length or $P$.
Moreover, denote by $\dist_G^k(s,t)$ the length of the shortest $\leq k$-hop
$s\to t$ path in $G$. Clearly, we have $\dist_G(s,t)=\dist_G^{n-1}(s,t)$.

Denote by $P_G(s,t)$ the shortest $s\to t$ path in $G$.
We may consider $P_G(s,t)$ uniquely defined by imposing
any fixed (e.g., lexicographical) order on the shortest $s\to t$ paths
in $G$ and defining $P_G(s,t)$ more precisely as the
smallest shortest $s\to t$ path in $G$ according to that order.
We will use the following well-known fact. 

\begin{fact}\label{l:hubs}
  Let $h$ be an integer, $1\leq h\leq n$. 
  With high probability\footnote{Depending on the constant hidden in the $\Theta$ notation.}, a random subset $C\subseteq V$ of size $\Theta((n/h)\log{n})$
  hits all the paths $P_G(s,t)$ for $s,t\in V$ with $|P_G(s,t)|\geq h/2$.
\end{fact}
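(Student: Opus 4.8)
Let $h$ be an integer, $1 \le h \le n$. With high probability, a random subset $C \subseteq V$ of size $\Theta((n/h)\log n)$ hits all the paths $P_G(s,t)$ for $s,t \in V$ with $|P_G(s,t)| \ge h/2$.

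So I need to prove a "hitting set" / hub lemma. The plan:

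The plan is to fix a single shortest path $P_G(s,t)$ with $|P_G(s,t)| \ge h/2$, i.e., it contains at least $h/2$ edges, hence at least $h/2$ vertices (actually $\ge h/2 + 1$ vertices, but $\ge h/2$ suffices). Sample $C$ by including each vertex independently with probability $p = c' (\log n)/h$ for a suitable constant $c'$ — or equivalently pick a random subset of size $\Theta((n/h)\log n)$; the independent-sampling model is cleaner and standard reductions make the two equivalent up to constants, so I'd phrase it via independent sampling. Then the probability that $C$ misses all $\ge h/2$ vertices of this particular path is at most $(1-p)^{h/2} \le e^{-ph/2} = e^{-(c'/2)\log n} = n^{-c'/2}$.

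Next I would take a union bound. There are at most $n^2$ pairs $(s,t)$, hence at most $n^2$ paths $P_G(s,t)$ (recall $P_G(s,t)$ is uniquely defined by the fixed tie-breaking order, so there's exactly one path per relevant pair). Therefore the probability that $C$ fails to hit some such path is at most $n^2 \cdot n^{-c'/2} = n^{2 - c'/2}$. Choosing $c' = 2(c+2)$ for any desired constant $c \ge 1$ makes this at most $n^{-c}$, which is the "with high probability" guarantee; the constant hidden in the $\Theta$ (namely $c'$) controls the exponent, matching the footnote in the statement.

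Finally I would note the size bound: with independent sampling at rate $p = c'(\log n)/h$, the expected size of $C$ is $pn = c' (n/h)\log n = \Theta((n/h)\log n)$, and by a standard Chernoff bound $|C| = \Theta((n/h)\log n)$ with high probability as well; alternatively, one simply samples a uniformly random subset of exactly $\lceil c'(n/h)\log n\rceil$ vertices and runs essentially the same union-bound computation (the per-path miss probability becomes a product of the form $\prod_{j}(1 - (h/2)/(n-j))$, still bounded by $e^{-\Omega(c'\log n)}$). The main "obstacle" is really just bookkeeping: making sure the path has $\ge h/2$ distinct vertices (immediate, since a simple shortest path with $\ge h/2$ edges has $\ge h/2 + 1$ vertices, and nonnegative-weight shortest paths under a consistent tie-break can be taken simple), and that the constant in the $\Theta$ is chosen after fixing the target failure exponent $c$. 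No deeper idea is needed — this is the classic sampling argument of Ullman–Yannakakis / Zwick applied to $\le h$-hop shortest paths.
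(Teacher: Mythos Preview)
Your argument is correct and is precisely the standard sampling/union-bound proof of this classical hitting-set lemma (as in Ullman--Yannakakis and Zwick). The paper does not actually prove this statement---it is asserted as a ``well-known fact'' without proof---so there is nothing to compare against.
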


Let $h=2^\ell$, $1\leq h\leq n$, be an integral parameter to be set later.
Let $h_i=2^i$ for $i=1,\ldots,\ell$.
For $i=1,\ldots,\ell$, let $C_i$ be a random subset of $V$ of
size $\Theta((n/h_i)\log{n})$.
The general idea is to leverage the set $C_i$ to efficiently recompute shortest paths
with hop-lengths in $[h_i/2,h_i)$.
Additionally, we will also use the last set $C_\ell$ to recompute paths with at least
$h=h_\ell$ hops.
By Fact~\ref{l:hubs}, for all~$i$, $C_i$ hits shortest paths
with $h_i$ or more hops in $G$ whp.

What is more important, $C_i$
hits all $\geq h_i/2$-hop shortest paths in graphs $G\setminus D_j$
for polynomially many different sets $D_j\subseteq V$, provided
that the choice of these subsets is independent of $C_i$,
i.e., if the adversary does not know $C_i$.
As a result, we can sample the sets $C_i$ just once during
the preprocessing stage and use them to process $O(\poly{n})$
different batch-deletion queries $D$, while remaining
almost certain that $C_i$ hits the $\geq h_i/2$-hop shortest paths of $G\setminus D$.

\newcommand{\cp}{{\overline{C}}}

For technical reasons that will become clear later, we will need to slightly augment
the sampled sets $C_i$ and subsequently impose an order on them.
So, suppose $\cp_i$ is such that $C_i\subseteq \cp_i$, and $|\cp_i|=\Theta((n/h_i)\log{n})$.
Moreover, let $\cp_i=\{c_{i,1},\ldots,c_{i,|\cp_i|}\}$.
For $j=1,\ldots,|\cp_i|$, let $\cp_{i,j}=\{c_{i,1},\ldots,c_{i,j-1}\}$.
Clearly, $\emptyset=\cp_{i,1}\subset \cp_{i,2}\subset \ldots \subset \cp_{i,|\cp_i|}\subset \cp_i$.

\newcommand{\dlong}{d^\mathrm{long}}
Let $H\subseteq G$ be any subgraph of $G$.
For $i=0,\ldots,\ell$, $s,t\in V$, and $j=1,\ldots,|\cp_i|$, define
\begin{equation}
  d_{H}^{i,j}(s,t)=\dist_{H\setminus \cp_{i,j}}^{h_i}(s,c_{i,j})+\dist_{H\setminus \cp_{i,j}}^{h_i}(c_{i,j},t).
\end{equation}
In words, $d_{H}^{i,j}$ is the minimal length of a $s\to t$ path in $H$ that:
\begin{enumerate}
  \item avoids
all vertices $\cp_{i,j}$,
  \item goes through a vertex $c_{i,j}$ and this
vertex is no more than $h_i$ hops apart from both endpoints on that path.
\end{enumerate}
Let $\dlong_H(s,t)$ be the minimal length of an $s\to t$ path
in $G$ that goes through $C_\ell$:
\begin{equation}
\dlong_H(s,t)=\min_{v\in C_\ell}\left\{\dist_{H}(s_i,v)+\dist_{H}(v,t_i)\right\}.
\end{equation}

\begin{lemma}\label{l:split}
  For any $H\subseteq G$, and $s,t\in V$, with high probability we have:
  \begin{equation*}
    \dist_H(s,t)=\min\left(\dlong_H(s,t),\min_{i,j}\left\{d_H^{i,j}(s,t)\right\}\right).
  \end{equation*}
\end{lemma}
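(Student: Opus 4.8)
The plan is to prove the two inequalities separately: the bound $\dist_H(s,t)\le\min\bigl(\dlong_H(s,t),\min_{i,j}\{d_H^{i,j}(s,t)\}\bigr)$ will hold \emph{deterministically} and for \emph{every} $H\subseteq G$, while the reverse inequality is where the randomness of the sampled sets enters and where the real content lies.

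For the easy direction I would use only that $G$, hence $H$, is non-negatively weighted. Each quantity on the right-hand side is then realized by an actual walk from $s$ to $t$ inside a subgraph of $H$: the term $\dist_{H\setminus\cp_{i,j}}^{h_i}(s,c_{i,j})+\dist_{H\setminus\cp_{i,j}}^{h_i}(c_{i,j},t)$ is the length of a concatenation of a bounded-hop $s\to c_{i,j}$ path and a bounded-hop $c_{i,j}\to t$ path in $H\setminus\cp_{i,j}\subseteq H$, and $\dist_H(s,v)+\dist_H(v,t)$ for $v\in C_\ell$ is the length of an $s\to v\to t$ walk in $H$. Since a non-negatively weighted walk from $s$ to $t$ contains an $s\to t$ path of no larger length, every right-hand term is $\ge\dist_H(s,t)$; this also settles the case $\dist_H(s,t)=\infty$, so afterwards I may assume an $s\to t$ path exists.

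For the reverse inequality I would take $P=P_H(s,t)$, the uniquely defined shortest $s\to t$ path in $H$, and exhibit one right-hand term that is $\le\len(P)$. If $|P|\ge h=h_\ell$, then $|P|\ge h_\ell/2$, so by Fact~\ref{l:hubs} applied to $H$ the set $C_\ell$ contains a vertex $v$ of $P$ with high probability; splitting $P=P_1\cdot P_2$ at $v$ gives $\dlong_H(s,t)\le\dist_H(s,v)+\dist_H(v,t)\le\len(P_1)+\len(P_2)=\len(P)$. Otherwise $1\le|P|<h$; pick the index $i\in\{1,\dots,\ell\}$ with $h_i/2\le|P|<h_i$. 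By Fact~\ref{l:hubs} applied to $H$, with high probability $C_i$ (hence $\cp_i\supseteq C_i$) meets $V(P)$; let $c_{i,j}$ be the vertex of $\cp_i\cap V(P)$ with the smallest index $j$. By minimality of $j$ the prefix $\cp_{i,j}=\{c_{i,1},\dots,c_{i,j-1}\}$ is disjoint from $V(P)$, so $P$ is a path of $H\setminus\cp_{i,j}$, and since $P$ passes through $c_{i,j}$ we split $P=P_1\cdot P_2$ with $P_1\colon s\to c_{i,j}$, $P_2\colon c_{i,j}\to t$, each of hop-length $\le|P|<h_i$. Hence $\dist_{H\setminus\cp_{i,j}}^{h_i}(s,c_{i,j})\le\len(P_1)$ and $\dist_{H\setminus\cp_{i,j}}^{h_i}(c_{i,j},t)\le\len(P_2)$, so $d_H^{i,j}(s,t)\le\len(P)=\dist_H(s,t)$. (Choosing the \emph{first} hub on $P$ is exactly what makes the prefix-pruning in the definition of $d_H^{i,j}$ harmless here — the pruning itself is needed only for the batch-deletion efficiency analysis, cf.~\cite{AbrahamCK17}.)

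The last point to pin down — and the one I expect to be genuinely delicate — is the precise meaning of ``with high probability'', since the sets $\cp_i$ are drawn once during preprocessing and then reused across queries. Fact~\ref{l:hubs} must be invoked for every subgraph $H$ to which this lemma is later applied and every index $i$, which is legitimate provided each such $H$ is fixed independently of the randomness defining the $\cp_i$. In the batch-deletion data structure the relevant subgraphs are the sketch graphs $H_1,\dots,H_\ell$ ranging over all $\poly(n)$ possible batch-deletion queries $D$, none of which depends on the sampled sets; a union bound over these $\poly(n)\cdot\ell$ events, absorbed into the constant hidden in $|\cp_i|=\Theta((n/h_i)\log n)$, keeps the overall failure probability polynomially small. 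The path-surgery steps are routine; essentially all the care is in this independence bookkeeping.
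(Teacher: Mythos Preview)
Your proof is correct and follows essentially the same argument as the paper: both directions are handled identically (the easy direction by noting each right-hand term is the length of an actual $s\to t$ walk in $H$, the hard direction by splitting on $|P_H(s,t)|\ge h$ versus $|P_H(s,t)|\in[h_i/2,h_i)$ and, in the latter case, choosing the minimal $j$ with $c_{i,j}\in V(P)$ so that $P\subseteq H\setminus\cp_{i,j}$). One small inaccuracy in your closing discussion: the subgraphs $H$ to which this lemma is applied in the batch-deletion data structure are the graphs $G\setminus D$ for the various query sets $D$, not the sketch graphs $H_1,\dots,H_\ell$ (which are auxiliary constructions used later in the query algorithm); the independence argument you outline is otherwise exactly the one the paper makes just before the lemma statement.
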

\begin{proof}
  Let $Q=\left\{\dlong_H(s,t)\right\}\cup \bigcup_{i,j}\left\{d_H^{i,j}(s,t)\right\}$.
  We have to prove $\dist_H(s,t)=\min Q$.
  Note that each element constitutes a length of some $s\to t$ path in $H$,
  so $\dist_H(s,t)\leq \min Q$.
  As a result, it is sufficient to prove
  $\dist_H(s,t)\geq \min Q$.

  If $|P_H(s,t)|\geq h$, then by Fact~\ref{l:hubs}, $P_H(s,t)$ is (whp.) hit by a vertex in $C_\ell$.
  As a result, $\dist_H(s,t)=\dist_H(s,v)+\dist_H(v,t)$ for some $v\in C_\ell$.
  This implies $\dist_H(s,t)\geq \dlong_H(s,t)\geq \min Q$.

  Otherwise, let $i\in \{1,\ldots,\ell\}$ be such that $|P_H(s,t)|\in [h_i/2,h_i)$.
  By Fact~\ref{l:hubs}, and since $C_i\subseteq \cp_i$, $P_H(s,t)$ contains a vertex of $\cp_i$
  whp.
  Let $j$ be minimal such that $c_{i,j}\in \cp_i$ lies on~$P_H(s,t)$.
  Then, $P_H(s,t)\subseteq H\setminus \cp_{i,j}$.
  Since $P_H(s,t)$ has no more than $h_i$ hops, $c_{i,j}$
  is no more than $h_i$ hops apart from both endpoints $s,t$.
  As a result, we have:
  \begin{equation*}
    \dist_H(s,t)\geq \dist^h_{H\setminus \cp_{i,j}}(s,c_{i,j})+\dist^h_{H\setminus \cp_{i,j}}(c_{i,j},t)=d_H^{i,j}(s,t)\in Q.
  \end{equation*}
  We conclude $\dist_H(s,t)\geq \min Q$.
\end{proof}

\newcommand{\est}{\tilde{d}}

Given a query set $D\subseteq V$, the data structure will compute
for all pairs $(s_l,t_k)$, $l=1,\ldots,k$:
(1) $\dlong_{G\setminus D}(s_l,t_l)$, and (2)
for all $i=1,\ldots,\ell$, an \emph{estimate}
$\est^i_{G\setminus D}(s_l,t_l)$ such that
\begin{equation}\label{eq:est}
  \dist_{G\setminus D}(s_l,t_l)\leq \est^i_{G\setminus D}(s_l,t_l)\leq \min_{j} \left\{d^{i,j}_{G\setminus D}(s_l,t_l)\right\}.
\end{equation}
The former values handle ``long paths'', whereas the latter -- ``short paths''.
By Lemma~\ref{l:split}, computing these values is enough
to obtain $\dist_{G\setminus D}(s_l,t_l)$ with high probability,
since we have:
\begin{align*}
  \dist_{G\setminus D}(s_l,t_l)&\leq \min\left(\dlong_{G\setminus D}(s,t),\min_{i}\left\{\est^i_{G\setminus D}(s,t)\right\}\right)\\
  &\leq \min\left(\dlong_{G\setminus D}(s,t),\min_{i,j}\left\{d_{G\setminus D}^{i,j}(s,t)\right\}\right)\\
  &=\dist_{G\setminus D}(s_l,t_l).
\end{align*}

\paragraph{Long paths.} Handling long paths requires no preprocessing
apart from sampling $C_\ell$.
In order to recompute $\dlong_{G\setminus D}(s_l,t_l)$
for all pairs $(s_l,t_l)$
we simply run Dijkstra's algorithm on $G\setminus D$ from/to all $v\in C_\ell$.
This takes $O((m+n\log{n})\cdot (n/h)\log{n})$ time.
Afterwards, computing each $\dlong_{G\setminus D}(s_l,t_l)$
takes $O(|C_\ell|)=O((n/h)\log{n})$ additional time.
Thus, we~obtain:
\begin{lemma}\label{l:long}
  Recomputing all $\dlong_{G\setminus D}(s_l,t_l)$ takes 
$O((m+n\log{n}+k)\cdot (n/h)\log{n})$ time.
\end{lemma}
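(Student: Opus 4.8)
The plan is to implement the definition of $\dlong$ directly. Recall that $\dlong_{G\setminus D}(s,t)=\min_{v\in C_\ell}\{\dist_{G\setminus D}(s,v)+\dist_{G\setminus D}(v,t)\}$, so all we need are the single-source distance arrays to and from every hub $v\in C_\ell$ in the graph $G\setminus D$. First I would, for each $v\in C_\ell$, run Dijkstra's algorithm once in $G\setminus D$ from $v$ (yielding $\dist_{G\setminus D}(v,\cdot)$) and once in the reverse graph $(G\setminus D)^R$ from $v$ (yielding $\dist_{G\setminus D}(\cdot,v)$). Since $G\setminus D$ has at most $n$ vertices, at most $m$ edges, and --- in the regime currently under consideration --- non-negative weights, each run costs $O(m+n\log n)$; as $|C_\ell|=\Theta((n/h)\log n)$ by Fact~\ref{l:hubs}, the total cost of all these Dijkstra runs is $O((m+n\log n)\cdot(n/h)\log n)$.

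With these $2|C_\ell|$ distance arrays in hand, I would then, for each of the $k$ fixed source-target pairs $(s_l,t_l)$, compute $\dlong_{G\setminus D}(s_l,t_l)$ by a single minimization over the $|C_\ell|=\Theta((n/h)\log n)$ hubs. This costs $O((n/h)\log n)$ per pair and hence $O(k\cdot(n/h)\log n)$ in total. Adding the two contributions yields $O((m+n\log n+k)\cdot(n/h)\log n)$, as claimed; correctness (i.e.\ that these values indeed capture the ``long'' shortest paths whp.) is exactly the content of the long-path case in Lemma~\ref{l:split} and needs no further argument here.

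There is no real obstacle in this lemma, but the point worth emphasizing is that the final combination step ranges over only the $k$ pairs of interest rather than all $n^2$ pairs --- this is precisely where the MPSP formulation pays off over APSP, and it is why a $k$ term (and not an $n^2$ term) appears in the bound. The only mild caveat is the standing assumption of non-negative edge weights, which is what permits invoking Dijkstra's algorithm directly; once negative weights are reinstated later, one first equips $G$ with a feasible price function (maintained via Theorem~\ref{t:worst-case-cycle}), reduces edge costs, and then runs Dijkstra, leaving the asymptotics unchanged.
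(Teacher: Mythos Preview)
Your proposal is correct and matches the paper's argument essentially verbatim: run Dijkstra from and to each of the $\Theta((n/h)\log n)$ hubs in $C_\ell$ on $G\setminus D$, then for each of the $k$ pairs minimize over all hubs. Your side remarks about non-negative weights and the later fix via a feasible price function are also in line with how the paper handles negative edges in the query phase.
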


\paragraph{Preprocessing for rebuilding short paths.} Recomputing shorter paths efficiently is a much more involved task
and involves heavy preprocessing of $G$ described below.

\newcommand{\con}{\alpha} 

For each $i=1,\ldots,\ell$, we proceed as follows. We start by initializing
the \emph{congestion} $\con(v)$ of each $v\in V$ to $0$.
We also initialize $\cp_i$ to $\emptyset$.
While $C_i\setminus \cp_i\neq \emptyset$, we extend
$\cp_i$ by a single vertex $c_{i,j}$ at a time.
More precisely, suppose that when such an extension is about to happen,
the current size of $\cp_i$ is $z$ and $\cp_i=\{c_{i,1},\ldots,c_{i,z}\}$.
If the most recently added to $\cp_i$ vertex
was in $C_i$, i.e., $c_{i,z}\in C_i$, we pick $c_{i,z+1}$ to be
the most congested vertex $u\in V\setminus (\cp_i\cup C_i)$,
i.e., the one such that $\con(u)$ is maximum.
Otherwise, if $z=0$ or $c_{i,z}\notin C_i$, we pick $c_{i,z+1}$ to be
an arbitrary vertex of $C_i\setminus \cp_i$.
After picking $c_{i,z}$ and before the subsequent picks happen, we will adjust the
vertices' congestions.
Observe that such a selection process indeed guarantees that $C_i\subseteq \cp_i$ and
$|\cp_i|\leq 2|C_i|=O((n/h_i)\log{n})$.

\newcommand{\pfrom}{\pi^{\mathrm{from}}}
\newcommand{\pto}{\pi^{\mathrm{to}}}

We now describe the preprocessing for each subsequent $c_{i,j}$ and
how the congestions of individual vertices are adjusted.
Recall that when $c_{i,j}$ is known, $\cp_{i,j}$ is defined as well.

The first step is to compute $\leq h_i$-hop shortest paths in $G\setminus \cp_{i,j}$
from/to $c_{i,j}$. Some shortest $\leq h_i$-hop path (if it exists) to each $v\in V\setminus \cp_{i,j}$ from $c_{i,j}$
is stored as $\pto_{i,j}(v)$.
Similarly, some shortest $\leq h_i$-hop from each $v\in V\setminus\cp_{i,j}$ to $c_{i,j}$
is stored as $\pfrom_{i,j}(v)$.
Computing these paths requires running a variant of Bellman-Ford algorithm
up to depth $h_i$ from/to $c_{i,j}$ and thus
takes $O(mh_i)$ time.
Since $|\cp_i|=O((n/h_i)\log{n})$, the total time cost of this step
through all $c_{i,j}\in\cp_i$ is $O(mn\log{n})$.

Let us now define an undirected graph $K$ on the vertex set $V$ that
represents pairs $(s_l,t_l)$, $l=1,\ldots,k$ or our interest.
For each such pair, there is a corresponding edge $s_lt_l$ in $K$.

For each of the paths $\pfrom_{i,j}(v)$ or $\pto_{i,j}(v)$, 
we add $\deg_G(v)+\deg_K(v)+\log{n}$ to the congestion $\con(x)$ of each vertex $x$ lying
on these paths. Since these paths contain $O(h_i)$ vertices, the total
congestion introduced by these paths is $O((\deg_G(v)+\deg_K(v)+\log{n})\cdot h_i)$.
Consequently, the total congestion added through all $v\in V$
when processing $c_{i,j}$
is:\footnote{In~\cite{AbrahamCK17}, a single unit of congestion is added to each vertex on a path.
As a result, the total congestion added is only $O(h_in)$. Weighting the path congestions by the
degree of the source/target is crucial to obtaining a better bound
on the time needed to rebuild these paths when processing a query.} 
\begin{equation*}
  O\left(h_i\sum_{v\in V}\left(\deg_G(v)+\deg_K(v)+\log{n}\right)\right)=O(h_i(m+k+n\log{n})).
\end{equation*}

Recall that the paths are computed in $G\setminus \cp_{i,j}$,
so the congestions of vertices from $\cp_{i,j}$ are
not increased.

\begin{lemma}\label{l:congestion}
  After all vertices $c_{i,j}\in \cp_i$ are processed,
  for all $v\in V$ we have\linebreak 
  $\con(v)=O(h_i(m+n\log{n}+k)\log{n})$.
\end{lemma}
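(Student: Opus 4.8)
The plan is to follow the congestion-balancing argument of Thorup and of Abraham et al.~\cite{AbrahamCK17}, adapted to the weighted congestion scheme used here. Fix $i$ and abbreviate $q=|\cp_i|=O((n/h_i)\log{n})$ for the number of selected vertices $c_{i,1},\dots,c_{i,q}$, $A=O(h_i(m+n\log{n}+k))$ for the bound on the \emph{total} congestion added while processing a single $c_{i,j}$ (already derived above), and $B=O(m+n\log{n}+k)$ for a bound on the congestion added to a \emph{single} vertex while processing a single $c_{i,j}$. The bound on $B$ is the one genuinely new structural point: storing the $\le h_i$-hop shortest paths from/to $c_{i,j}$ as (out- and in-) shortest path trees, a vertex $x$ lies on the stored path $\pto_{i,j}(v)$ precisely for the targets $v$ lying in the subtree of $x$, and it is charged $\deg_G(v)+\deg_K(v)+\log{n}$ once for each such $v$; hence its congestion grows by at most $\sum_{v\in V}(\deg_G(v)+\deg_K(v)+\log{n})$ from the out-tree and the same from the in-tree, i.e.\ by $O(m+n\log{n}+k)=B$ in one round. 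Note also that $A=O(h_iB)$, so $qA=O(n(m+n\log{n}+k)\log{n})$ is the total amount of congestion ever created.

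First I would reduce the claim to bounding, for every $v$, the congestion $v$ has accumulated at the moment it is added to $\cp_i$: once $v=c_{i,j'}$, all later path computations run inside $G\setminus\cp_{i,j}$ with $v\in\cp_{i,j}$ for $j>j'$, so $\con(v)$ never grows again. A vertex $v\notin C_i$ that is never selected is handled separately: at the final greedy selection $c_{i,2R}$ every still-unselected non-$C_i$ vertex --- in particular $v$ --- has, by definition of that selection, congestion at most that of $c_{i,2R}$, and only $O(1)$ rounds follow, contributing a further $O(B)$; so it suffices to prove that, at its time of selection, every selected vertex has congestion $O(h_i(m+n\log{n}+k)\log{n})$.

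The core is then the amortized charging that turns the global budget $qA=O(n(m+n\log{n}+k)\log{n})$ into the per-vertex bound $O(h_i(m+n\log{n}+k)\log{n})$. Here I would run the balancing argument of~\cite{AbrahamCK17}: selections alternate between a forced vertex of $C_i$ and the currently most congested eligible vertex of $V\setminus(\cp_i\cup C_i)$, of which $\Omega(n)$ remain at every selection since $|C_i|+|\cp_i|=O((n/h_i)\log{n})$; because a selected vertex keeps its congestion forever, only few selections can carry congestion above a suitable threshold $\tau=\Theta(h_i(m+n\log{n}+k)\log{n})$, so a vertex whose congestion crossed $\tau$ would --- being eligible and most congested --- be selected and frozen after only a bounded number of further rounds, during each of which it can gain at most $B$; combined with the per-round single-vertex bound $B$ (which prevents a single round from overshooting $\tau$), this closes an induction showing no vertex ever exceeds $2\tau$. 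This has to be carried out uniformly over the whole range of $h_i$; for the degenerate regime where $C_i$ is essentially all of $V$ (so there are few ``free'' greedy picks) one additionally chooses the forced $C_i$-selections by maximum congestion and uses that most of $\cp_i$ gets removed, so later rounds run on small residual graphs and contribute little.

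The step I expect to be the main obstacle is exactly making this amortization tight. A crude ``$\max\le$ total potential'' estimate wastes a factor of about $n/\log n$, so one is forced into the finer threshold-crossing analysis; and the hitting-set vertices of $C_i$, which --- unlike the greedy picks --- are not driven by congestion, must nonetheless be shown to respect the threshold, which is where one exploits that they are interleaved one-for-one with greedy selections (and, in the small-$h_i$ regime, that they too can be picked greedily). This is the point at which the argument genuinely departs from~\cite{AbrahamCK17}, whose unit-congestion scheme makes $B$ trivially $O(h_i)$ and requires no degree/pair-weighting at all; once it is in place, the stated bound for fixed $i$ follows.
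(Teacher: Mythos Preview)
The paper's proof is a two-sentence sketch that treats the congestion-balancing lemma of Abraham et al.~\cite{AbrahamCK17} as a black box: if the total congestion added in each round is at most $L$, then alternating most-congested and arbitrary selections guarantees every vertex ends with $\alpha(v)=O(L\log n)$, and the paper explicitly stresses that ``it does not matter how the added congestion is distributed among the vertices.'' Plugging in $L=O(h_i(m+n\log n+k))$ finishes the lemma. Your proposal is far more hands-on, and its central claim --- that the per-vertex per-round bound $B$ obtained from the tree structure of the stored $\le h_i$-hop paths is ``the one genuinely new structural point'' --- is precisely what the paper says is \emph{not} needed. So the two approaches diverge at the very first step.

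Beyond that divergence, your threshold argument does not close as written. With $\tau=\Theta(L\log n)$ you observe that at most $qL/\tau=O(q/\log n)$ vertices can ever exceed $\tau$, so a vertex above $\tau$ is picked within $O(q/\log n)$ greedy rounds and gains at most $O((q/\log n)\cdot B)$ in the meantime. But $(q/\log n)\cdot B=\Theta((n/h_i)(m+n\log n+k))$, which is $O(L\log n)=O(h_i(m+n\log n+k)\log n)$ only when $h_i^2=\Omega(n/\log n)$; for small $h_i$ the overshoot is unbounded. You flag this as the ``degenerate regime'' and suggest ordering the forced $C_i$-picks greedily as well, but that is a sketch of a sketch: you would still need to explain why the residual-graph shrinkage compensates for the loss of non-$C_i$ greedy selections, and as written neither the threshold bound for forced $C_i$-vertices nor the small-$h_i$ case is actually established. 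A secondary issue is that your bound on $B$ assumes the stored $\le h_i$-hop paths $\pto_{i,j}(\cdot)$ form a shortest-path \emph{tree}; bounded-hop shortest paths do not automatically have this structure, so you would have to argue that the Bellman-Ford computation can be arranged to output such a tree. The paper avoids all of this by invoking \cite{AbrahamCK17} directly on the single parameter $L$.
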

\begin{proof}[Proof sketch.]
  Abraham et al.~\cite{AbrahamCK17} argue that alternating between
  picking the most-congested vertex and an arbitrary vertex
  as the next $c_{i,j}$ is enough to guarantee that all the congestions
  at the end are of order $O(L\log{n})$, where $L$ is the
  total congestion added for a single $c_{i,j}$.
  In particular, it does not matter how the added congestion
  is distributed among the vertices.
  The lemma follows since $L=O((m+n\log{n}+k)h_i)$ in our case.
\end{proof}

For each vertex $u$ we also store a list $\Pi^i_u$ of pointers to paths of the form
$\pfrom_{i,j}(v)$ or $\pto_{i,j}(v)$ containing $u$.
Observe that the total size of the lists $\Pi^i_u$ is
$O(|\cp_i|\cdot n\cdot h_i)=O(n^2\log{n})$.

Finally, for each source-target pair $(s_l,t_l)$ the pairs $(d^{i,j}_G(s_l,t_l), j)$,
$j=1,\ldots,|\cp_i|$, are stored in a sorted array $S_i(s_l,t_l)$.
Since computing shortest $\leq h_i$-hop paths from/to $c_{i,j}$ also
gives the lengths $\dist_{G\setminus \cp_{i,j}}^{h_i}(c_{i,j},\cdot)$ and $\dist_{G\setminus \cp_{i,j}}^{h_i}(\cdot,c_{i,j})$,
we can also easily compute $d^{i,j}_G(s_l,t_l)$ for each
pair $(s_l,t_l)$ in $O(|\cp_i|)$ time.
Therefore, computing all the required sets $S_i(s_l,t_l)$
and sorting them takes $O(|\cp_i|\cdot k\log{n}))=O((n/h_i)k\log^2{n})$ time.

\begin{lemma}\label{l:short-prep}
  The described preprocessing (through all $i$) takes $O((m+k)n\log^2{n})$ time.
\end{lemma}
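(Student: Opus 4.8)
The plan is to tally the cost of each preprocessing step described above, summed over all levels $i=1,\ldots,\ell$ and all centers $c_{i,j}\in\cp_i$, using repeatedly that $|\cp_i|=O((n/h_i)\log n)$, that $\ell=O(\log n)$, and that $\sum_{i=1}^{\ell}1/h_i=\sum_{i=1}^{\ell}2^{-i}=O(1)$ (hence $\sum_{i}|\cp_i|=O(n\log n)$). Two cheap one-time preliminaries are handled first and then ignored: drawing all the random sets $C_i$ in $O(n\log^2 n)$ total time, and tabulating $\deg_K(v)$ for every $v$ in $O(n+k)$ time from the list of source–target pairs. The structure of the accounting mirrors that of Abraham et al.~\cite{AbrahamCK17}, the difference being that the per-path work now carries the weight $\deg_G(v)+\deg_K(v)+\log n$ — but, crucially, only in the \emph{congestion values}, not in the \emph{time} to lay a path down, which stays proportional to its hop-length.

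First I would bound the depth-$h_i$ Bellman–Ford step: for a fixed $c_{i,j}$, computing the $\leq h_i$-hop shortest-path trees to and from $c_{i,j}$ in $G\setminus\cp_{i,j}$ (together with the distance arrays $\dist^{h_i}_{G\setminus\cp_{i,j}}(c_{i,j},\cdot)$ and $\dist^{h_i}_{G\setminus\cp_{i,j}}(\cdot,c_{i,j})$) costs $O(mh_i)$. Summed over $c_{i,j}\in\cp_i$ this is $O(mh_i\cdot(n/h_i)\log n)=O(mn\log n)$, which is independent of $i$, so over the $O(\log n)$ levels it totals $O(mn\log^2 n)$. Next, the per-path bookkeeping: when processing $c_{i,j}$, both incrementing $\con(\cdot)$ along the paths $\pfrom_{i,j}(v),\pto_{i,j}(v)$ and appending to the lists $\Pi^i_u$ touch each such path once per vertex on it, doing $O(1)$ work per incidence; there are at most $2n$ such paths, each of $O(h_i)$ hops, so the cost per center is $O(nh_i)$, giving $O(nh_i\cdot(n/h_i)\log n)=O(n^2\log n)$ per level and $O(n^2\log^2 n)$ overall. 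Choosing the vertices of $\cp_i$ in the greedy/alternating order only needs, between consecutive picks, a linear scan of the congestions for a maximum, costing $O(n\,|\cp_i|)$ per level and $O(n\sum_i|\cp_i|)=O(n^2\log n)$ in total, which is dominated.

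Finally, the arrays $S_i(s_l,t_l)$: given the distance arrays from the Bellman–Ford runs, each value $d^{i,j}_G(s_l,t_l)=\dist^{h_i}_{G\setminus\cp_{i,j}}(s_l,c_{i,j})+\dist^{h_i}_{G\setminus\cp_{i,j}}(c_{i,j},t_l)$ is read off in $O(1)$ time, so assembling and sorting $S_i$ for all $k$ pairs costs $O(k\,|\cp_i|\log n)=O((n/h_i)k\log^2 n)$ per level, and $\sum_i O((n/h_i)k\log^2 n)=O(k\log^2 n\sum_i n/h_i)=O(nk\log^2 n)$ in total by the geometric bound. Adding the three dominant contributions yields $O(mn\log^2 n+n^2\log^2 n+nk\log^2 n)=O((m+n+k)n\log^2 n)$, which is $O((m+k)n\log^2 n)$ under the standing assumption $m\geq n-1$ (isolated vertices lie on no relevant path and may be discarded; a source/target incident to one contributes distance $\infty$ and is dealt with in $O(k)$ total).

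I expect the only delicate point to be the per-level bookkeeping bound: one must check that the total number of (path, vertex) incidences among the $\pfrom$/$\pto$ trees of one level telescopes to $O(n^2\log n)$ — i.e., the $\Theta(n)$ stored paths per center, each of hop-length $O(h_i)$, exactly absorb the $|\cp_i|=O((n/h_i)\log n)$ factor — and, symmetrically, that the $k$-dependent $S_i$ term stays at $O(nk\log^2 n)$ only because $\sum_i h_i^{-1}=O(1)$. Everything else is a routine geometric summation.
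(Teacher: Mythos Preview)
Your proof is correct and follows essentially the same accounting as the paper's own proof: sum the $O(mh_i)$ Bellman--Ford cost over $|\cp_i|=O((n/h_i)\log n)$ centers and $\ell=O(\log n)$ levels to get $O(mn\log^2 n)$, and sum the $O((n/h_i)k\log^2 n)$ cost of building the $S_i$ arrays over levels via the geometric series to get $O(nk\log^2 n)$. You are simply more careful than the paper in explicitly tallying the $O(n^2\log^2 n)$ cost of the congestion/$\Pi^i_u$ bookkeeping and in noting that the assumption $m\geq n-1$ is needed to absorb it into $O(mn\log^2 n)$; the paper's proof leaves these steps implicit.
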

\begin{proof}
  The total running time of the described preprocessing is:
  \begin{equation*}
    O\left(\sum_{i=1}^\ell mn\log{n}+(n/h_i)k\log^2{n}\right)=O\left(mn\ell\log{n}+nk\log^2{n}\cdot \sum_{i=1}^\infty \frac{1}{2^i}\right),
  \end{equation*}
  which yields the desired bound since $\ell=O(\log{h})=O(\log{n})$.
\end{proof}

\paragraph{Rebuilding short paths upon query.}
Recall that upon query $D$, $|D|=d$, our goal is to compute estimates
$\est^i_{G\setminus D}(s_l,t_l)$ satisfying~\eqref{eq:est}.

For a fixed $i$, let $X_i(s_l,t_l)$ be the set of those
$j$ such that $D$ intersects either of the paths $\pfrom_{i,j}(s_l)$
or $\pto_{i,j}(t_l)$.

\begin{lemma}\label{l:kcharge}
  We have $\sum_{l=1}^k |X_i(s_l,t_l)|=O(dh_i(m+n\log{n}+k)\log{n})$.
  Moreover, all the sets $X_i(s_l,t_l)$ for $l=1,\ldots,k$ can be computed
  in $O(dh_i(m+n\log{n}+k)\log{n})$ time.
\end{lemma}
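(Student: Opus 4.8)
The plan is to derive both statements from a single charging argument against the congestions $\con(w)$ of the deleted vertices $w\in D$. By Lemma~\ref{l:congestion} we have $\con(w)=O(h_i(m+n\log{n}+k)\log{n})$ for every vertex, hence $\sum_{w\in D}\con(w)=O(dh_i(m+n\log{n}+k)\log{n})$, which is exactly the target; so it suffices to charge each unit counted towards $\sum_l|X_i(s_l,t_l)|$ (and later each unit of query work) to a unit of congestion deposited on some $w\in D$. Throughout, $i$ is fixed.

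For the size bound, note first that any $j\in X_i(s_l,t_l)$ witnesses a vertex $w\in D$ lying on $\pfrom_{i,j}(s_l)$ or on $\pto_{i,j}(t_l)$, so
\[
  \sum_{l=1}^k|X_i(s_l,t_l)|\ \le\ \sum_{w\in D}\Big(\big|\{(l,j):w\in\pfrom_{i,j}(s_l)\}\big|+\big|\{(l,j):w\in\pto_{i,j}(t_l)\}\big|\Big).
\]
I would then bound, for fixed $w$, the first inner term: since membership of $w$ on $\pfrom_{i,j}(v)$ depends only on $j$ and $v$, it equals $\sum_j\sum_{v:\,w\in\pfrom_{i,j}(v)}|\{l:s_l=v\}|$, and $|\{l:s_l=v\}|\le\deg_K(v)$ because each $l$ with $s_l=v$ contributes a distinct edge-slot at $v$ in $K$. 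But every time a path $\pfrom_{i,j}(v)$ through $w$ was processed, the preprocessing added $\deg_G(v)+\deg_K(v)+\log{n}\ge\deg_K(v)$ to $\con(w)$, so $\big|\{(l,j):w\in\pfrom_{i,j}(s_l)\}\big|\le\con(w)$. The same reasoning handles the $\pto$-term, and summing over $w\in D$ yields $\sum_l|X_i(s_l,t_l)|\le 2\sum_{w\in D}\con(w)=O(dh_i(m+n\log{n}+k)\log{n})$.

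For the running time, I would augment the preprocessing with, for every vertex $v$, the lists $\{l:s_l=v\}$ and $\{l:t_l=v\}$ (total size $O(k)$). On a query $D$, I would scan, for each $w\in D$, the stored list $\Pi^i_w$ of pointers to the paths $\pfrom_{i,j}(v)$, $\pto_{i,j}(v)$ that contain $w$; for a pointer to $\pfrom_{i,j}(v)$ I would record the pair $(l,j)$ for each $l$ with $s_l=v$, and symmetrically for $\pto_{i,j}(v)$. Handling one pointer costs $O(1+\deg_K(v))$, which is at most the $\deg_G(v)+\deg_K(v)+\log{n}$ that this path had added to $\con(w)$; since moreover each entry of $\Pi^i_w$ contributed at least $\log{n}\ge 1$ to $\con(w)$, the whole scan (including the $O(1)$ per-pointer overhead) runs in $O(\sum_{w\in D}\con(w))=O(dh_i(m+n\log{n}+k)\log{n})$, and in particular at most that many pairs $(l,j)$ are recorded. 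Finally I would bucket the recorded pairs by $l$ and turn each bucket into a set by marking a Boolean array indexed by $\{1,\ldots,|\cp_i|\}$ and resetting it by re-traversing that bucket, for an extra time linear in the number of recorded pairs plus $O(|\cp_i|)$, well within the claimed bound.

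The only substantive step — and the one where the argument could break — is the charging: every contribution to $X_i$ and every unit of query work must be routed to the \emph{weighted} congestion $\deg_G(v)+\deg_K(v)+\log{n}$ that the responsible path left at a deleted vertex, so that the $\deg_K$ summand absorbs the cost of many source--target pairs sharing an endpoint, whereupon Lemma~\ref{l:congestion} supplies the final $O(d\cdot h_i(m+n\log{n}+k)\log{n})$. Everything else — the auxiliary endpoint lists, the bucketing, the deduplication — is routine.
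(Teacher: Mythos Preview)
Your proposal is correct and follows essentially the same approach as the paper: scan $\Pi^i_w$ for each $w\in D$, enumerate the relevant pairs via the $K$-neighbors of the path endpoint, and charge the work to the $\deg_K$ portion of the congestion deposited at $w$, then invoke Lemma~\ref{l:congestion}. The only cosmetic difference is that you prove the size bound first and the time bound second (with extra deduplication detail), whereas the paper bounds the running time and then observes that the total size of the sets cannot exceed it.
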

\begin{proof}
For each $v\in D$, we iterate through all the paths $\pi$ of the list $\Pi^i_v$.
If $\pi=\pfrom_{i,j}(s_l)$ for some $i,j,l$, we go through all the
neighbors $t'$ of $s_l$ in $K$.
If $s_lt'$ corresponds to some
  pair $(s_{l'},t_{l'})$, we add $j$ to $X_i(s_{l'},t_{l'})$.
  Similarly, if $\pi=\pto_{i,j}(t_l)$, we go through all the neighbors
  $s'$ of $t_l$ in $K$.
  If $s't_l$ corresponds to some
  pair $(s_{l'},t_{l'})$, we add $j$ to $X_i(s_{l'},t_{l'})$.

  Let us analyze the running time of this algorithm limited
  to handling paths of the form $\pi=\pfrom_{i,j}(s_l)$.
  The analysis in the other case is analogous.
  For a fixed $v\in D$,
  and each $\pfrom_{i,j}(s_l)$ with $v\in V(\pfrom_{i,j}(s_l))$,
  we spend $O(\deg_K(s_l))$ time
  iterating through the neighbors of $s_l$ in the graph $K$.
  We charge this cost to the $\deg_K(s_l)$ contribution
  of the path $\pfrom_{i,j}(s_l)$ to the congestion $\con(v)$.
  Note that some of the costs -- in case $\pfrom_{i,j}(s_l)$
  has many vertices from $D$ -- can be charged to multiple vertices
  of $D$; however, what matters is that no part of any $\con(v)$ is used to ``pay''
  for two distinct neighbors iterations.
  As a result, the total time can be bounded
  as $O\left(\sum_{v\in D}\con(v)\right)=O(dh_i(m+n\log{n}+k)\log{n})$.
  
  Since the total size of all the sets $X_i(s_l,t_l)$ cannot
  be larger than the time needed to construct these
  sets, the above asymptotic expression also bounds $\sum_{l=1}^k |X_i(s_l,t_l)|$.
\end{proof}

  Recall that our goal is to compute, for each $i$ and $l$, such an estimate $\est^i_{G\setminus D}(s_l,t_l)$, that:
\begin{align*}
  \dist_{G\setminus D}(s_l,t_l)\leq \est^i_{G\setminus D}(s_l,t_l)&\leq \min_{j}\left\{d^{i,j}_{G\setminus D}(s_l,t_l)\right\}\\
                  &=\min\left(\min_{j\in X_i(s_l,t_l)}\left\{d^{i,j}_{G\setminus D}(s_l,t_l)\right\},\min_{j\notin X_i(s_l,t_l)}\left\{d^{i,j}_{G\setminus D}(s_l,t_l)\right\}\right).
\end{align*}
Therefore, for each $l$ we will separately compute:
\begin{enumerate}[label=(\arabic*)]
  \item the value $\min_{j\notin X_i(s_l,t_l)}\left\{d^{i,j}_{G\setminus D}(s_l,t_l)\right\}$,
  \item the length of some $s_l\to t_l$ path in $G\setminus D$ not longer than
$\min_{j\in X_i(s_l,t_l)}\left\{d^{i,j}_{G\setminus D}(s_l,t_l)\right\}$.
\end{enumerate}
We will then use the minimum of the former and the latter as $\est^i_{G\setminus D}(s_l,t_l)$.

Consider item~(1).
Observe that $j\notin X_i(s_l,t_l)$ implies that an $s_l\to t_l$ 
path \linebreak $\pfrom_{i,j}(s_l)\cdot \pto_{i,j}(t_l)\subseteq G$ exists also
in $G\setminus D$.
Hence,  $d^{i,j}_G(s_l,t_l)=d^{i,j}_{G\setminus D}(s_l,t_l)$,
and thus $(d^{i,j}_{G\setminus D}(s_l,t_l),j)\in S_i(s_l,t_l)$.
As a result, the minimum
$\min_{j\notin X_i(s_l,t_l)}\left\{d^{i,j}_{G}(s_l,t_l)\right\}$
can be computed in
$O(|X_i(s_l,t_l)|)$ time by inspecting at most $|X_i(s_l,t_l)|+1$ leading
elements of the sorted array $S_i(s_l,t_l)$: for the first encountered element
$(x,j)$ with $j\notin X_i(s_l,t_l)$,~$x$ is the sought minimum.
Note that, by Lemma~\ref{l:kcharge}, computing such minima through all $l$
takes time
\begin{equation*}
  O\left(\sum_{l=1}^k |X_i(s_l,t_l)|\right)=O(dh_i(m+n\log{n}+k)\log{n}).
\end{equation*}

\newcommand{\hfrom}{H^{\mathrm{from}}}
\newcommand{\hto}{H^{\mathrm{to}}}

Now consider~item~(2). For each $c_{i,j}\in \cp_i$, we construct
two auxiliary graphs $\hto_{i,j}$, $\hfrom_{i,j}$. Let us define the former;
the latter is defined completely symmetrically.

The vertices of $\hto_{i,j}$ are precisely
those $z\in V$ satisfying
$D\cap V(\pto_{i,j}(z))\neq \emptyset$, and the vertex $c_{i,j}$.
For each edge $vz\in E(G\setminus D)$
with $v,z\in V(\hto_{i,j})$, we add that edge $vz$ to $\hto_{i,j}$.
Otherwise, if $v\notin V(\hto_{i,j})$ but
$z\in V(\hto_{i,j})$, we add an edge $c_{i,j}z$
with weight $\dist^{h_i}_{G\setminus \cp_{i,j}}(c_{i,j},v)+\wei_{G\setminus D}(vz)$.
Recall that the value $\dist^{h_i}_{G\setminus \cp_{i,j}}(c_{i,j},v)=\len(\pto_{i,j}(v))=\dist^{h_i}_{G\setminus \cp_{i,j}\setminus D}(c_{i,j},v)$
was computed during preprocessing.

\begin{lemma}\label{l:sketch}
  For any $l$ such that $j\in X_i(s_l,t_l)$:
  \begin{itemize}
    \item If $D\cap V(\pto_{i,j}(t_l))\neq\emptyset$, then
    $\dist_{G\setminus D}(c_{i,j},t_l)\leq \dist_{\hto_{i,j}}(c_{i,j},t_l)\leq  \dist_{G\setminus \cp_{i,j}\setminus D}^{h_i}(c_{i,j},t_l)$.
  \item If $D\cap V(\pfrom_{i,j}(s_l))\neq\emptyset$ then, $\dist_{G\setminus D}(s_l,c_{i,j})\leq \dist_{\hfrom_{i,j}}(s_l,c_{i,j})\leq  \dist_{G\setminus \cp_{i,j}\setminus D}^{h_i}(s_l,c_{i,j})$.
  \end{itemize}
  \end{lemma}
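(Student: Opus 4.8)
The plan is to prove the two inequalities separately and, by the edge-reversal symmetry between $\hto_{i,j}$ and $\hfrom_{i,j}$, to argue only about $\hto_{i,j}$. First I would record the easy facts that set up the argument: we may assume $c_{i,j}\notin D$ (otherwise the displayed distances involving $c_{i,j}$ are not of interest); $c_{i,j}\in V(\hto_{i,j})$ by construction; and $t_l\in V(\hto_{i,j})$, because the hypothesis $D\cap V(\pto_{i,j}(t_l))\neq\emptyset$ of the first bullet is exactly the defining membership condition for $t_l$ in $V(\hto_{i,j})$.

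For the lower bound $\dist_{G\setminus D}(c_{i,j},t_l)\le\dist_{\hto_{i,j}}(c_{i,j},t_l)$, I would show that every edge of $\hto_{i,j}$ ``unfolds'' into a walk of $G\setminus D$ with the same endpoints and the same length. A real edge $vz\in E(G\setminus D)$ is already such a walk. A shortcut edge $c_{i,j}z$ of weight $\dist^{h_i}_{G\setminus\cp_{i,j}}(c_{i,j},v)+\wei_{G\setminus D}(vz)$ unfolds into $\pto_{i,j}(v)\cdot (vz)$: the path $\pto_{i,j}(v)$ avoids $\cp_{i,j}$ by definition and avoids $D$ precisely because $v\notin V(\hto_{i,j})$ means $D\cap V(\pto_{i,j}(v))=\emptyset$, while $vz\in E(G\setminus D)$, so this unfolded walk lies in $G\setminus D$ and has length equal to the shortcut weight. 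Concatenating the unfoldings along any $c_{i,j}\to t_l$ walk in $\hto_{i,j}$ then yields a $c_{i,j}\to t_l$ walk in $G\setminus D$ of the same total length; since $G$ (hence $G\setminus D$) has non-negative weights, the inequality follows.

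For the upper bound $\dist_{\hto_{i,j}}(c_{i,j},t_l)\le\dist^{h_i}_{G\setminus\cp_{i,j}\setminus D}(c_{i,j},t_l)$, I would take a shortest $\le h_i$-hop $c_{i,j}\to t_l$ path $P=u_0u_1\cdots u_r$ in $G\setminus\cp_{i,j}\setminus D$ (nothing to prove if none exists, or if $t_l=c_{i,j}$), and let $a$ be the largest index with $u_a=c_{i,j}$ or $u_a\notin V(\hto_{i,j})$; this is well defined since $u_0=c_{i,j}$. By maximality $u_{a+1},\dots,u_r\in V(\hto_{i,j})\setminus\{c_{i,j}\}$, and since $P\subseteq G\setminus D$, every edge among $u_{a+1},\dots,u_r$ is a real edge of $\hto_{i,j}$, so $u_{a+1}\cdots u_r$ is a path in $\hto_{i,j}$ of length $\len(u_{a+1}\cdots u_r)$. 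If $u_a=c_{i,j}$ we are done, since $u_a\cdots u_r$ is then a $c_{i,j}\to t_l$ path in $\hto_{i,j}$ of length $\le\len(P)$. Otherwise $u_a\notin V(\hto_{i,j})$, i.e.\ $D\cap V(\pto_{i,j}(u_a))=\emptyset$; the prefix $u_0\cdots u_a$ is a $\le h_i$-hop $c_{i,j}\to u_a$ path avoiding $\cp_{i,j}$, so $\dist^{h_i}_{G\setminus\cp_{i,j}}(c_{i,j},u_a)\le\len(u_0\cdots u_a)$, and because $u_au_{a+1}\in E(G\setminus D)$ with $u_a\notin V(\hto_{i,j})$ and $u_{a+1}\in V(\hto_{i,j})$, the construction supplies a shortcut edge $c_{i,j}u_{a+1}$ of weight $\dist^{h_i}_{G\setminus\cp_{i,j}}(c_{i,j},u_a)+\wei_{G\setminus D}(u_au_{a+1})\le\len(u_0\cdots u_{a+1})$. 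Prepending this shortcut to $u_{a+1}\cdots u_r$ gives a $c_{i,j}\to t_l$ walk in $\hto_{i,j}$ of length at most $\len(u_0\cdots u_{a+1})+\len(u_{a+1}\cdots u_r)=\len(P)$.

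The main obstacle here is purely the case bookkeeping in the upper-bound step: correctly isolating the ``entry point'' index $a$ into $V(\hto_{i,j})$ along the optimal path, and verifying that the single shortcut edge leaving $c_{i,j}$ accounts for the entire discarded prefix $u_0\cdots u_a$. One must also keep in mind that $\hto_{i,j}$-distances may only be realized by walks (shortcuts all emanate from $c_{i,j}$), so the standing non-negativity assumption is what makes both inequalities go through; the negative-weight case is deferred and will be handled through the maintained feasible price function exactly as in the rest of this construction.
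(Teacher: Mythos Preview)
Your proof is correct and follows essentially the same approach as the paper's: both argue the lower bound by observing that every edge of $\hto_{i,j}$ either is, or unfolds into, a walk in $G\setminus D$, and both prove the upper bound by locating along an optimal $\le h_i$-hop path $P$ the last vertex $v$ with $D\cap V(\pto_{i,j}(v))=\emptyset$ (your $u_a$ is exactly this vertex), then replacing the prefix up to $v$ by a single shortcut edge. Your treatment is in fact slightly more careful than the paper's, which handles the case $v=c_{i,j}$ only implicitly; otherwise the arguments coincide.
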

\begin{proof}
  We only prove the former claim; one can prove the latter by proceeding symmetrically.
  If $c_{i,j}\in D$, the claim is trivial,
  so assume $c_{i,j}\notin D$.
  First of all, $t_l$ is a vertex of $\hto_{i,j}$ by construction and
  the assumption $D\cap V(\pto_{i,j}(t_l))\neq\emptyset$.
  The desired lower bound on $\dist_{\hto_{i,j}}(c_{i,j},t_l)$ holds since
  all edges of $\hto_{i,j}$ are either also edges of $G\setminus D$,
  or encode lengths of paths that exist in $G\setminus D$.

  To obtain the upper bound, let path $P$ be some shortest
  among $\leq h_i$-hop $c_{i,j}\to t_l$ paths in $G\setminus \cp_{i,j}\setminus D$.
  Let $v$ be the last vertex on $P$ such that $D\cap V(\pto_{i,j}(v))=\emptyset$.
  Note that $v\neq t_l$ exists since $\pto_{i,j}(c_{i,j})$ is a zero-hop path and $c_{i,j}\notin D$.
  Let $y$ be the vertex following $v$ on $P$.
  Split $P=P_1P_2$ so that~$P_2$ is a $y\to t_l$ path.
  Note that $V(P_2)\subseteq V(\hto_{i,j})$.
  Since all edges between the subset $V(\hto_{i,j})$ in $G\setminus D$
  are preserved in $\hto_{i,j}$, we have $P_2\subseteq \hto_{i,j}$.
  Moreover, since 
  $D\cap V(\pto_{i,j}(v))=\emptyset$
  and $v$ is a neighbor of a vertex $y\in V(\hto_{i,j})$,
  there is an edge $c_{i,j}y$ in $\hto_{i,j}$
  of weight $\dist^h_{G\setminus\cp_{i,j}}(c_{i,j},v)+\wei_{G\setminus D}(vy)=\dist^h_{G\setminus\cp_{i,j}\setminus D}(c_{i,j},v)+\wei_{P_1}(vy)\leq \len(P_1)$.
  As a result, we have
  $\dist_{\hto_{i,j}}(c_{i,j},t_l)\leq \len((c_{i,j}w)\cdot P_2)\leq \len(P)=\dist_{G\setminus \cp_{i,j}\setminus D}^{h_i}(c_{i,j},t_l)$.
\end{proof}

For a fixed $i$, in order to find -- for all $l$ -- a path of length no more than\linebreak
$\min_{j\in X_i(s_l,t_l)}\left\{d^{i,j}_{G\setminus D}(s_l,t_l)\right\}$,
we proceed as follows.
For each $j$, we build the graphs $\hto_{i,j}$ and $\hfrom_{i,j}$.
We run Dijkstra's algorithm from $c_{i,j}$ in $\hto_{i,j}$,
and Dijktra's algorithm from $c_{i,j}$ in the reverse of $\hfrom_{i,j}$.
This way, for all $v\in V(\hto_{i,j})$, we obtain $\dist_{\hto_{i,j}}(c_{i,j},v)$,
and for each $v\in V(\hfrom_{i,j})$, we obtain $\dist_{\hfrom_{i,j}}(v,c_{i,j})$.
Finally, for each $l$ we iterate through all
$j\in X_i(s_l,t_l)$ in order to find the shortest among the candidate $s_l\to t_l$ paths
through some $c_{i,j}$.
The length of such a candidate path through $c_{i,j}$ is $\lambda_{i,j}(s_l,t_l)$,
defined as:
\begin{equation*}
  \lambda_{i,j}(s_l,t_l)=\begin{cases}
    \dist_{\hfrom_{i,j}}(s_l,c_{i,j})+\dist_{\hto_{i,j}}(c_{i,j},t_l) & \text{ if }D\cap\pfrom_{i,j}(s_l)\neq\emptyset\text{ and } D\cap\pto_{i,j}(t_l)\neq\emptyset,\\
    \dist_{G\setminus \cp_{i,j}}^{h_i}(s_l,c_{i,j})+\dist_{\hto_{i,j}}(c_{i,j},t_l) & \text{ if }D\cap\pfrom_{i,j}(s_l)=\emptyset\text{ and } D\cap\pto_{i,j}(t_l)\neq\emptyset,\\
    \dist_{\hfrom_{i,j}}(s_l,c_{i,j})+\dist_{G\setminus\cp_{i,j}}^{h_i}(c_{i,j},t_l) & \text{ if }D\cap\pfrom_{i,j}(s_l)\neq\emptyset\text{ and } D\cap\pto_{i,j}(t_l)=\emptyset.\\
  \end{cases}
\end{equation*}
Recall that one of these three cases apply by $j\in X_i(s_l,t_l)$.
By Lemma~\ref{l:sketch}, we easily obtain:
\begin{equation*}
\lambda_{i,j}(s_l,t_l)\leq \dist_{G\setminus \cp_{i,j}\setminus D}^{h_i}(s_l,c_{i,j})+\dist_{G\setminus\cp_{i,j}\setminus D}^{h_i}(c_{i,j},t_l)= d^{i,j}_{G\setminus D}(s_l,t_l),
\end{equation*}
and consequently $\min_{j\in X_i(s_l,t_l)}\{\lambda_{i,j}(s_l,t_l)\}\leq \min_{j\in X_i(s_l,t_l)}\{d^{i,j}_{G\setminus D}(s_l,t_l)\}$ as desired.

Let us now bound the running time of rebuilding short paths.
For a given $i$, given distances from/to $c_{i,j}$ in $\hto_{i,j}$ and $\hfrom_{i,j}$ for all $j$,
computing
all the values $\lambda_{i,j}(s_l,t_l)$ clearly takes
$O\left(\sum_{l=1}^k |X_i(s_l,t_l)|\right)$ time, i.e., $O(dh_i(m+n\log{n}+k)\log{n})$ time by Lemma~\ref{l:kcharge}.
It remains to bound the total time needed to computing those
distances in $\hto_{i,j}$ and $\hfrom_{i,j}$.

\begin{lemma}\label{l:mcharge}
  Computing single source/target distances $\dist_{\hto_{i,j}}(c_{i,j},\cdot)$ and $\dist_{\hfrom_{i,j}}(\cdot,c_{i,j})$
  for all $j=1,\ldots,|\cp_i|$ takes $O(dh_i(m+n\log{n}+k)\log{n})$ time.
\end{lemma}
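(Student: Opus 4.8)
The plan is to bound, for the fixed value of $i$, the total size --- the number of vertices and edges --- of all the sketch graphs $\hto_{i,j}$ and $\hfrom_{i,j}$, $j=1,\ldots,|\cp_i|$, and then to apply the standard $O(|E|+|V|\log|V|)$ running time bound for Dijkstra's algorithm on each of them. By construction, $V(\hto_{i,j})$ consists of $c_{i,j}$ together with the set $W_{i,j}$ of vertices $z$ whose stored path $\pto_{i,j}(z)$ contains a vertex of $D$, and $|E(\hto_{i,j})|=O\big(\sum_{z\in V(\hto_{i,j})}(\deg_G(z)+1)\big)$, since every edge of $\hto_{i,j}$ is either an edge of $G\setminus D$ between two of its vertices or one of the shortcut edges $c_{i,j}z$, of which we may keep only the cheapest one per endpoint $z$. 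The situation for $\hfrom_{i,j}$, with the analogous vertex set, is completely symmetric.

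The key step is a charging argument bounding $\sum_{j}\sum_{z\in W_{i,j}}\big(\deg_G(z)+\deg_K(z)+\log n\big)$ by $\sum_{v\in D}\con(v)$. Fix $j$ and $z\in W_{i,j}$: some $v\in D$ lies on $\pto_{i,j}(z)$, and since this path was computed in $G\setminus\cp_{i,j}$ we have $v\notin\cp_{i,j}$, so when $c_{i,j}$ was processed the amount $\deg_G(z)+\deg_K(z)+\log n$ was added to $\con(v)$. For a fixed $v\in D$, all the stored paths $\pto_{i,j}(\cdot)$ and $\pfrom_{i,j}(\cdot)$ (over all $j$) that pass through $v$ are pairwise distinct and each contributed its weight to $\con(v)$, which never decreases; hence the total weight of all of them is at most the final value $\con(v)=O(h_i(m+n\log n+k)\log n)$ given by Lemma~\ref{l:congestion}. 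Summing over $v\in D$ yields $O(dh_i(m+n\log n+k)\log n)$, simultaneously for the $\hto_{i,j}$ and the $\hfrom_{i,j}$ paths.

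Given this, the remainder is bookkeeping. Summed over $j$, the edge counts of the $\hto_{i,j}$ are $O\big(\sum_j\deg_G(c_{i,j})\big)+O\big(\sum_j\sum_{z\in W_{i,j}}\deg_G(z)\big)$; the first term is $O(m)$ because the $c_{i,j}$ are distinct, and the second is dominated by the charged quantity above. The vertex counts sum to $|\cp_i|+\sum_j|W_{i,j}|$, where $|\cp_i|=O((n/h_i)\log n)=O((m+n\log n+k)\log n)$ and $\sum_j|W_{i,j}|$ is again dominated by the charged quantity, since each of its terms is at least $\log n\ge 1$; multiplying the vertex count by the extra $\log n$ from Dijkstra's priority queue keeps everything within $O(dh_i(m+n\log n+k)\log n)$, and the same holds for $\hfrom_{i,j}$. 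Assembling the graphs costs no more: the sets $W_{i,j}$ are found by scanning the lists $\Pi^i_v$ for $v\in D$ exactly as in the proof of Lemma~\ref{l:kcharge}, and the incident edges of each $z\in V(\hto_{i,j})$ are then scanned in $O(\deg_G(z))$ time, using a cleared boolean array for membership tests and the precomputed distances $\dist^{h_i}_{G\setminus\cp_{i,j}}(c_{i,j},\cdot)$ for the shortcut weights. I expect the main obstacle to be the charging argument --- specifically, seeing that weighting each path's congestion contribution by $\deg_G(z)+\deg_K(z)+\log n$ is exactly what lets the \emph{sizes} of the sketch graphs, and not merely their vertex counts, be charged against the total congestion of $D$; everything after that is routine accounting.
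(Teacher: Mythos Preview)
Your proposal is correct and follows essentially the same approach as the paper: bound the Dijkstra cost on each sketch graph by $O\big(\sum_{v\in V(\hto_{i,j})}(\deg_G(v)+\log n)\big)$ and charge each such term to the congestion that the path $\pto_{i,j}(v)$ deposited on some $x\in D\cap V(\pto_{i,j}(v))$, then sum using Lemma~\ref{l:congestion}. Your write-up is in fact slightly more careful than the paper's, since you treat the $\deg_G(c_{i,j})$ contribution and the graph-assembly cost explicitly rather than absorbing them silently.
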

\begin{proof}
  We only consider computing single-source distances in graphs $\hto_{i,j}$; finding distances
  in graphs $\hfrom_{i,j}$ is analyzed analogously.
  The cost of running Dijkstra's algorithm on $\hto_{i,j}$
  is: 
  \begin{equation*}
    O(|E(\hto_{i,j})|+|V(\hto_{i,j})|\log{n})=O\left(\sum_{v\in V(\hto_{i,j})}\left( \deg_{G}(v)+\log{n}\right)\right).
  \end{equation*}
  Recall that we have $v\in V(\hto_{i,j})$ if and only if $D\cap V(\pto_{i,j}(v))\neq\emptyset$.
  We can thus charge the cost $O(\deg_G(v)+\log{n})$ of processing
  $v$ in the corresponding Dijkstra run to the $\deg_G(v)+\log{n}$ (out of
  $\deg_G(v)+\deg_K(v)+\log{n}$; recall that $\deg_K(v)$ part was
  already charged in Lemma~\ref{l:kcharge}) contribution
  of $\pto_{i,j}(v)$ to $\alpha(x)$ for an arbitrary vertex $x\in D\cap V(\pto_{i,j}(v))$.
  The total amount of congestion charged is again $O\left(\sum_{x\in D}\alpha(x)\right)=O(dh_i(m+n\log{n}+k)\log{n})$.  
\end{proof}

By combining Lemmas~\ref{l:short-prep},~\ref{l:kcharge}~and~\ref{l:mcharge}, we obtain:
\begin{corollary}
  After $O((m+k)n\log^2{n})$ preprocessing, computing estimates $\est_{G\setminus D}^i(s_l,t_l)$
  as specified in~\eqref{eq:est} for any $D\subseteq V$, $d=|D|$, takes $O(dh(m+n\log{n}+k)\log{n})$ time.
\end{corollary}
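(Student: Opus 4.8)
The plan is to prove the corollary as a pure accounting step: add the preprocessing bound of Lemma~\ref{l:short-prep} to the query bound obtained by summing the per-scale costs over the $\ell=O(\log n)$ levels $h_i=2^i$, and check that the sum telescopes geometrically. The preprocessing side needs nothing new: Lemma~\ref{l:short-prep} already gives $O((m+k)n\log^2 n)$, which covers building the depth-$h_i$ Bellman-Ford trees from/to each $c_{i,j}$, the congestion-guided orderings of $\cp_i$, the pointer lists $\Pi^i_u$, and the sorted arrays $S_i(s_l,t_l)$.

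For the query, I would fix a level $i$ and decompose the work of producing all estimates $\est^i_{G\setminus D}(s_l,t_l)$, $l=1,\ldots,k$, into four pieces, each already controlled by a lemma. First, computing the index sets $X_i(s_l,t_l)$ costs $O(dh_i(m+n\log n+k)\log n)$ by Lemma~\ref{l:kcharge}. Second, each value $\min_{j\notin X_i(s_l,t_l)}\{d^{i,j}_{G\setminus D}(s_l,t_l)\}$ is read off in $O(|X_i(s_l,t_l)|)$ time by scanning the leading entries of the preprocessed array $S_i(s_l,t_l)$ (using that $j\notin X_i(s_l,t_l)$ forces $d^{i,j}_{G\setminus D}=d^{i,j}_G$), for a total of $O(\sum_l|X_i(s_l,t_l)|)$, again $O(dh_i(m+n\log n+k)\log n)$. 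Third, building the sketch graphs $\hto_{i,j},\hfrom_{i,j}$ and running Dijkstra from/to each $c_{i,j}$ costs $O(dh_i(m+n\log n+k)\log n)$ by Lemma~\ref{l:mcharge}; here I would note that actually assembling each $\hto_{i,j}$ (reading its vertex set off the lists $\Pi^i_v$ for $v\in D$, scanning incident edges in $G\setminus D$, and inserting the at most $|V(\hto_{i,j})|$ shortcut edges through $c_{i,j}$) is already subsumed in the $O(\sum_{v\in V(\hto_{i,j})}(\deg_G(v)+\log n))$ bound that Lemma~\ref{l:mcharge} pays for. Fourth, evaluating the candidate lengths $\lambda_{i,j}(s_l,t_l)$ and minimizing over $j\in X_i(s_l,t_l)$ is once more $O(\sum_l|X_i(s_l,t_l)|)$.

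Adding the four pieces, level $i$ costs $O(dh_i(m+n\log n+k)\log n)$, so summing over $i=1,\ldots,\ell$ with $h=2^\ell$ gives $\sum_{i=1}^\ell h_i=2^{\ell+1}-2\le 2h$ and hence total query time $O(dh(m+n\log n+k)\log n)$, as claimed. I do not anticipate a real obstacle: this is bookkeeping, and the only subtlety is consistency of the charging scheme — the $\deg_K(v)$ share of each vertex congestion $\alpha(v)$ must be spent only in the first two pieces and the $\deg_G(v)+\log n$ share only in the third, so that no part of $\sum_{v\in D}\alpha(v)$ is double-counted. That split is exactly how Lemmas~\ref{l:kcharge} and~\ref{l:mcharge} are phrased, so invoking them in this order suffices.
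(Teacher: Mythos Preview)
Your proposal is correct and matches the paper's own argument, which simply states that the corollary follows by combining Lemmas~\ref{l:short-prep}, \ref{l:kcharge}, and~\ref{l:mcharge}. Your write-up is a faithful unpacking of that combination, including the geometric sum $\sum_{i=1}^\ell h_i\le 2h$ over the levels; the remark about the split of the congestion into its $\deg_K$ and $\deg_G+\log n$ shares is accurate but not strictly needed for the $O(\cdot)$ bound, since each lemma independently bounds its cost by $O(\sum_{v\in D}\alpha(v))$ and adding two such bounds only affects the constant.
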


And finally, by combining the above corollary with Lemma~\ref{l:long}, and choosing
$h=\sqrt{n/d}$, we obtain:
\begin{lemma}
  After $O((m+k)n\log^2{n})$ preprocessing, for any $D\subseteq V$, $d=|D|$, one can recompute
  distances $\dist_{G\setminus D}(s_l,t_l)$ for all $l=1,\ldots,k$
  in $O\left(\sqrt{nd}\left(m+n\log{n}+k\right)\log{n}\right)$ time.
\end{lemma}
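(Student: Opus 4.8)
The plan is to assemble the lemma from ingredients already in place, since by this point the batch-deletion structure is fully built and both path regimes are analyzed; the only work left is choosing the hop threshold, wiring the query together, and bounding the error probability. The preprocessing bound $O((m+k)n\log^2 n)$ is exactly Lemma~\ref{l:short-prep} (handling long paths needs nothing beyond sampling $C_\ell$, which is subsumed), and this bound is insensitive to the value of $h=2^\ell$ since the number of levels is $\ell=\lceil\log_2 h\rceil=O(\log n)$ regardless. So I would fix $h$ during preprocessing, for the prescribed deletion budget $d$, to be the largest power of two not exceeding $\sqrt{n/d}$; this is legal because $1\le\sqrt{n/d}\le n$ when $1\le d\le n$ (the case $d=0$ is trivial), and it makes $n/h=\Theta(\sqrt{nd})$ and $dh=O(\sqrt{nd})$.

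Next I would describe the query. Given $D$ with $|D|=d$, for each pair $(s_l,t_l)$ we compute $\dlong_{G\setminus D}(s_l,t_l)$ using Lemma~\ref{l:long}, and for each level $i=1,\dots,\ell$ an estimate $\est^i_{G\setminus D}(s_l,t_l)$ satisfying~\eqref{eq:est} using the corollary preceding the lemma; the reported value is the minimum of $\dlong_{G\setminus D}(s_l,t_l)$ and these $\ell$ estimates. Correctness is the step I would treat most carefully: by~\eqref{eq:est} the reported value is sandwiched between $\dist_{G\setminus D}(s_l,t_l)$ and $\min\!\big(\dlong_{G\setminus D}(s_l,t_l),\min_{i,j}d^{i,j}_{G\setminus D}(s_l,t_l)\big)$, and Lemma~\ref{l:split} applied with $H=G\setminus D$ says this upper bound equals $\dist_{G\setminus D}(s_l,t_l)$ with high probability. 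That probabilistic claim rests on Fact~\ref{l:hubs}, hence on $D$ being independent of the sampled sets $C_i$; so I would spell out that one sampling of the $C_i$ at preprocessing time serves the polynomially many deletion sets issued by the fully dynamic reduction, and take a union bound over the $O(\log n)$ levels and the $\poly(n)$ queries to drive the total failure probability below $n^{-c}$ for any desired constant $c$.

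Finally I would do the time accounting. Lemma~\ref{l:long} contributes $O((m+n\log n+k)(n/h)\log n)$ for the long part; the preceding corollary contributes $O(dh(m+n\log n+k)\log n)$ for the short part (the geometric sum over $h_i=2^i$ is already folded into that bound); and the final per-pair minimum over $\ell=O(\log n)$ values costs $O(k\log n)$ overall and is dominated. Substituting $n/h,dh=O(\sqrt{nd})$ collapses both main terms to $O\!\left(\sqrt{nd}(m+n\log n+k)\log n\right)$, which is the claimed query time. The only genuine subtlety here — and what I would flag explicitly — is that $h$ must be committed to at preprocessing time even though it is tuned to the deletion budget $d$; everything else is bookkeeping. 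This lemma is exactly Theorem~\ref{t:batch-deletion} under the standing non-negativity assumption, so the remaining step toward that theorem is the routine reduction removing negative weights via a feasible price function (with negative-cycle detection handled separately), which I would add after this proof.
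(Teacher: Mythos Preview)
Your proposal is correct and follows the same approach as the paper: combine the corollary (short paths, cost $O(dh(m+n\log n+k)\log n)$) with Lemma~\ref{l:long} (long paths, cost $O((n/h)(m+n\log n+k)\log n)$) and balance by taking $h=\Theta(\sqrt{n/d})$. The paper's own proof is in fact a single sentence doing exactly this.

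One small remark on the subtlety you flag: you resolve it by committing to $h$ at preprocessing time for a prescribed deletion budget, which suffices for the downstream application (where $d\le\Delta$ is fixed) but does not literally yield the lemma's ``for any $D$'' statement. The cleaner resolution, which the paper leaves implicit, is that preprocessing already builds all levels $i=1,\dots,O(\log n)$ at the stated cost regardless of $h$; at query time one simply picks the cutoff level $\ell'$ with $h_{\ell'}=\Theta(\sqrt{n/d})$, uses $C_{\ell'}$ for the long-path Dijkstra runs, and restricts the short-path computation to levels $1,\dots,\ell'$. This makes the choice of $h$ a query-time decision and removes the tension you noted.
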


\paragraph{Dealing with negative edges and cycles.}
Note that the preprocessing computes limited-hop shortest paths
using the Bellman-Ford algorithm.
As a result, negative edges or cycles have no effect on the preprocessing.
However, when handling a query, we repeatedly run Dijkstra's algorithm.
In general, it needs a feasible price function on $G\setminus D$ to work.
To obtain such a feasible price function at query time,
during preprocessing
we set up a fully dynamic negative cycle detection data structure
of Theorem~\ref{t:worst-case-cycle} (see Section~\ref{s:neg-worstcase}) and issue~$n$
vertex updates
to that data structure so that after the preprocessing finishes, that data structure stores
the graph $G$.
Upon query, we remove the vertices $D$ from that data structure,
so that in $O(d(m+n\log{n}))$ worst-case time we either declare that
$G\setminus D$ contains a negative cycle (and thus we do not have to compute MPSP),
or obtain a feasible price function $p$ on $G\setminus D$.
After obtaining $p$, we revert these removals, again in $O(d(m+n\log{n}))$ worst-case
time, so that the data structure again stores the graph $G$.
Note that the extra overhead needed to handle negative edges and cycles
is negligible since $d\leq \sqrt{nd}$.

\subsection{Maintaining a negative cycle in $O(m+n\log{n})$ worst-case time}\label{s:neg-worstcase}
In this section we prove the following theorem.

\worstcasenegcycle*

\newcommand{\tin}{\textrm{in}}
\newcommand{\tout}{\textrm{out}}

Consider the graph $G'$ where each vertex $v$ is split into two vertices
$v_\tin,v_\tout$, so that:
\begin{enumerate}
  \item for each $v$, there is an edge $v_\tin v_\tout$
of weight $0$ in $G'$,
  \item for each edge $uv\in E(G)$, there is an edge
$u_\tout v_\tin$ of weight $\wei_G(uv)$ in $G'$.
\end{enumerate}
Clearly, $G$ has a negative cycle iff $G'$ has a negative cycle.
Moreover, a vertex update to $G$ corresponds to at most two vertex
updates to $G'$.

The key idea is to view $G'$ as a unit-capacitated flow network
with edge costs given by the edge weights.
For some flow $f$ in this network (i.e., any function $f:E(G)\to \mathbb{R}_{\geq 0}$
such that for all $e\in E(G)$, $f(e)\leq w_G(e)$), denote by $c(f)$ its cost.
Recall that a flow $f$ is called a circulation if
for all $v\in V$ we have $\sum_{xv\in E(G)}f(xv)=\sum_{vy\in E(G)}f(vy)$.
\begin{lemma}\label{l:circ-equiv}
  Let $f^*$ be the minimum cost circulation in $G'$. Then $c(f^*)\leq 0$.
  Moreover,  $G'$ has a negative cycle if and only if $c(f^*)<0$.
\end{lemma}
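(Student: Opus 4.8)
The plan is to exploit the classical correspondence between circulations and cycle decompositions in a network. First I would observe that the zero circulation $f \equiv 0$ is always feasible in $G'$ and has cost $0$, which immediately gives $c(f^*) \leq 0$. For the characterization of negative cycles, I would argue both directions separately.

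\emph{If $G'$ has a negative cycle $C$.} Since all edges of $G'$ have capacity $1$ (the $v_\tin v_\tout$ edges have weight $0$, and each original edge $u_\tout v_\tin$ has capacity $w_G(uv) \ge 1$; actually one must be slightly careful here — the cleanest route is to note that any \emph{simple} negative cycle uses each edge at most once, and if $G'$ has any negative cycle at all it has a simple one, since $G'$ is the vertex-split graph and $\mc(G') \ge 0$ would otherwise hold). Pushing one unit of flow around a simple negative cycle $C$ yields a feasible circulation of cost $\len(C) < 0$, hence $c(f^*) \le \len(C) < 0$.

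\emph{If $c(f^*) < 0$.} Here I would use the standard flow-decomposition theorem: any circulation $f$ in $G'$ decomposes into a nonnegative combination of flows along simple directed cycles, $f = \sum_r \beta_r \mathbf{1}_{C_r}$ with $\beta_r > 0$, so that $c(f) = \sum_r \beta_r \len(C_r)$. If $c(f^*) < 0$, at least one term $\beta_r \len(C_r)$ must be negative, so $\len(C_r) < 0$ and $G'$ — and therefore $G$ — contains a negative cycle. Combining with the "iff $G$ has a negative cycle iff $G'$ has one" remark already noted in the excerpt closes the equivalence.

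\textbf{Main obstacle.} The only genuinely delicate point is the capacity bookkeeping in the vertex-split graph: a circulation in a \emph{unit}-capacitated network decomposes into simple cycles each carrying integral ($0$/$1$) flow, but here original edges may have weight $> 1$, so I would either (i) restrict attention from the outset to $0/1$ circulations on $G'$ viewed with every edge capacity set to $1$ (which suffices, since a single negative simple cycle already certifies $c(f^*) < 0$ and the zero circulation certifies $c(f^*) \le 0$), or (ii) allow real capacities and invoke the real-valued flow-decomposition theorem, noting that negativity of $\sum_r \beta_r \len(C_r)$ forces some $\len(C_r) < 0$ regardless of the (positive) multipliers $\beta_r$. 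Either way the argument is short; the substance is simply making the "circulation $\leftrightarrow$ cycle" dictionary precise in the split graph. I expect no further difficulty, and this lemma then feeds into the worst-case update algorithm by maintaining the min-cost circulation incrementally under the two vertex updates to $G'$ induced by each vertex update to $G$.
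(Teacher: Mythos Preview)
Your approach is correct and essentially identical to the paper's: the zero circulation gives $c(f^*)\le 0$, pushing a unit of flow around a negative cycle gives one direction, and cycle decomposition of $f^*$ gives the other. One terminological tangle to clean up: you write that ``each original edge $u_\tout v_\tin$ has capacity $w_G(uv)\ge 1$'', but in the paper's setup $G'$ is \emph{unit}-capacitated (every edge has capacity exactly~$1$) with edge \emph{costs} equal to the weights $w_G(uv)$, which can be negative; your option~(i) in the ``main obstacle'' paragraph is already precisely the setup, so the detour through real capacities and the worry about ``weight $>1$'' are unnecessary.
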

\begin{proof}
  Clearly, since there are no demands at vertices, a zero circulation (i.e., where
  the flow on each edge is $0$) is a valid circulation of cost $0$.
  As a result, we always have $c(f^*)\leq 0$.

  Suppose $G'$ has a negative cycle. Then, by sending a unit of flow through that
  cycle, we obtain a valid circulation of negative cost, i.e., we conclude $c(f^*)<0$.

  Now suppose that $c(f^*)<0$. It is well-known (see e.g.~\cite{networkflows}) that any circulation
  -- in particular $f^*$ --
  can be decomposed into edge-disjoint cycles in $G'$ of positive flow.
  Since the sum of costs of these cycles is negative, at least one of them
  has to have a negative cost.
\end{proof}

Let $G'_f$ be the residual network corresponding to flow $f$.
Recall that in $G'_f$, for each edge $e=uv\in E(G')$,
we have an edge $uv$ of cost $c(uv)=\wei_{G'}(uv)$ in $G'_f$ if $f(e)<1$,
and we have an edge $vu$ of cost $c(vu)=-\wei_{G'}(uv)$ in $G'_f$ if $f(e)>0$.
When working with unit capacities, we can safely operate
on integer flows, i.e., the flow $f(e)$ on each edge $e\in E(G')$ is always either $0$ or $1$.
The following characterization of a minimum cost flow is well known (see e.g.~\cite{networkflows}):
\begin{fact}\label{f:flow}
  A flow $f$ in $G'$ has minimum cost if there exists such a potential function
  $\pi:V(G')\to\mathbb{R}$ that 
  for each edge $e=uv\in E(G'_f)$ we have $c(e)-\pi(u)+\pi(v)\geq 0$.
\end{fact}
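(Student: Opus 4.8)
The plan is to prove the classical sufficiency direction of the minimum‑cost‑flow optimality criterion, specialized to circulations in the unit‑capacitated network $G'$. It suffices to take an arbitrary circulation $g$ in $G'$ and show that the existence of the claimed potential $\pi$ forces $c(g)\ge c(f)$. The first step is the standard observation that the ``difference'' $g-f$ can be routed in the residual network $G'_f$: for each edge $e=uv\in E(G')$, if $g(e)>f(e)$ we send $g(e)-f(e)$ units along the forward residual arc $uv$ (present because $f(e)<1$), and if $g(e)<f(e)$ we send $f(e)-g(e)$ units along the backward residual arc $vu$ (present because $f(e)>0$). Since both $f$ and $g$ obey flow conservation at every vertex, the resulting nonnegative function $\phi$ on $E(G'_f)$ is a circulation in $G'_f$, and by the definition of residual costs — in particular the sign flip $c(vu)=-\wei_{G'}(uv)$ on backward arcs — its total cost $\sum_{a\in E(G'_f)}c(a)\,\phi(a)$ equals $c(g)-c(f)$.

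Next I would invoke flow decomposition (the same tool already used in the proof of Lemma~\ref{l:circ-equiv}): the circulation $\phi$ decomposes into simple directed cycles $C_1,\dots,C_r$ of $G'_f$ with positive multiplicities $\gamma_1,\dots,\gamma_r$, so that $c(g)-c(f)=\sum_{p=1}^r\gamma_p\cdot\mathrm{cost}_{G'_f}(C_p)$. Hence it is enough to prove that every simple cycle of $G'_f$ has nonnegative cost. This is exactly where the potential $\pi$ enters: for a cycle $C=u_0u_1\dots u_{q-1}u_0$ of $G'_f$ with arcs $a_i=u_iu_{i+1}$ (indices modulo $q$), summing the hypothesis $c(a_i)-\pi(u_i)+\pi(u_{i+1})\ge 0$ over all $i$ yields $\mathrm{cost}_{G'_f}(C)+\sum_i\bigl(\pi(u_{i+1})-\pi(u_i)\bigr)\ge 0$, and the potential sum telescopes to $0$ around the cycle, so $\mathrm{cost}_{G'_f}(C)\ge 0$. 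Substituting this back into the decomposition gives $c(g)-c(f)\ge 0$, i.e.\ $f$ is a minimum‑cost circulation, which is precisely the claim.

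The places I would be most careful about — essentially the only places anything can go wrong — are (i) the residual‑arc bookkeeping: verifying the existence conditions $f(e)<1$ and $f(e)>0$ for the arcs we actually use, and checking the sign conventions so that $\mathrm{cost}(\phi)=c(g)-c(f)$ holds \emph{exactly} (on an edge where $f$ and $g$ agree nothing is routed, so there is no residue, and the $-\wei_{G'}(uv)$ cost on backward arcs is what makes the cancellation work); and (ii) applying flow decomposition to $G'_f$ rather than to $G'$. For the latter, since $G'$ has unit capacities we may restrict attention to integral circulations $f,g$, so $\phi$ is integral and its decomposition is genuinely into simple cycles of $G'_f$ with positive integer multiplicities — so no technicality about a residual circulation carrying no cycle arises. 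Everything beyond these two checks is routine telescoping and summation.
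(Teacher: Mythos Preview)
Your argument is correct and is essentially the standard textbook proof of the reduced-cost optimality criterion. Note, however, that the paper does not prove Fact~\ref{f:flow} at all: it is stated as a well-known fact with a citation to a network-flows textbook, so there is no ``paper's own proof'' to compare against.

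One small remark on your point (ii): the restriction to integral $f,g$ is unnecessary. Flow decomposition holds for arbitrary nonnegative real-valued circulations (any such circulation is a nonnegative combination of indicator vectors of simple cycles), so the telescoping argument goes through verbatim without appealing to integrality. Your justification via unit capacities is not wrong---the circulation polytope has integral vertices, so optimizing over integral $g$ suffices---but it adds an avoidable detour.
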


Our algorithm will maintain a minimum cost circulation $f^*$ in $G'$,
along with a potential function~$\pi$ as described in Fact~\ref{f:flow} that certifies the optimality
of $f^*$.
Initially, when $G$ is empty, we use a zero circulation and a zero potential function
which clearly works since the graph $G'$ contains only edges $v_\tin v_\tout$ of zero cost.

Suppose a vertex update to $G$ centered at $v$ happens.
First, we cancel any flow that goes through edges incident
to either $v_\tin$ or $v_\tout$ in $f^*$.
This may create an excess of $1$ on some vertex $u_\tout$
such that we had $f^*(u_\tout v_\tin)=1$ before,
and a deficit of $1$ on some vertex $z_\tin$
such that we had $f^*(v_\tout z_\tin)=1$ before.
We then reflect the vertex update to $G$ in $G'$.
Note that afterwards, $\pi$ is still a an \emph{almost} valid
potential function (as in Fact~\ref{f:flow}) -- perhaps except on the edges
of $G_{f^*}'$ incident to either $v_\tin$ or $v_\tout$.
To fix this, we set $\pi(v_\tin)$ to a sufficiently
large value
so that for all edges $u_\tout v_\tin$ (which do not carry flow at this point)
we have $\wei_{G'}(u_\tout v_\tin)-\pi(u_\tout)+\pi(v_\tin)\geq 0$.
Similarly, we set $\pi(v_\tout)$ to a sufficiently
small value so that for all edges
$v_\tout z_\tin$ we have $\wei_{G'}(v_\tout z_\tin)-\pi(v_\tout)+\pi(z_\tin)\geq 0$.
At this point, $v_\tin v_\tout$ is the only
edge in $G_{f^*}'$ that can potentially have
$\wei_{G'}(v_\tin v_\tout)-\pi(v_\tin)+\pi(v_\tout)<0$.
If this is the case, we send flow
through the edge $v_\tin v_\tout$.
At this point, by Fact~\ref{f:flow}, $f^*$ is a minimum cost flow,
albeit it might have at most two excess and at most two deficit vertices.

It is well-known (see e.g.~\cite{KarczmarzS19} for a detailed description)
that one can convert any minimum-cost flow into a
minimum cost circulation by gradually pushing flow from
the excess to deficit vertices in the residual network
through shortest (in terms of cost in the residual network) such paths.
The potential function certifying the optimality
of the flow allows to find shortest paths in the residual network
using Dijkstra's algorithm.
Moreover, the \emph{distances to} the deficit vertices
computed by Dijstra's algorithm can be used as the new potential
function after eliminating a single unit of excess.
Since each such step removes a single unit of excess,
in our case we need at most two such steps.

As a result, after a vertex update we can compute a new
minimum cost circulation and the corresponding potential function $\pi$
using $O(1)$ Dijkstra's algorithm runs. This takes $O(m+n\log{n})$ worst-case time.

Finally, we show how to obtain, based on $f^*$ and $\pi$, a feasible price function $p$
of $G$ if $G$ has no negative cycles.

\begin{lemma}
  Suppose $G$ has no negative cycles. Let $f^*$ be the minimum cost circulation
  in $G'$ and let $\pi$ be the potential function certifying the optimality of $f^*$.
  Let $p$ be such that $p(v)=-\pi(v_\tin)$ for all $v\in V$.
  Then, $p$ is a feasible price function of $G$.
\end{lemma}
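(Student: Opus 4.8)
The plan is to verify the defining inequality of a feasible price function directly: for every $uv=e\in E(G)$ I must show $\wei_G(uv)+p(u)-p(v)\ge 0$, equivalently $\wei_G(uv)-\pi(u_\tin)+\pi(v_\tin)\ge 0$. The edge $uv$ is realized in $G'$ by the weight-$\wei_G(uv)$ edge $u_\tout v_\tin$ together with the weight-$0$ edge $u_\tin u_\tout$, so the proof would proceed by a short case analysis on the values $f^*(u_\tin u_\tout),f^*(u_\tout v_\tin)\in\{0,1\}$ (the minimum cost circulation may be taken integral on unit-capacity edges), each time invoking the reduced-cost optimality condition of Fact~\ref{f:flow} for $f^*$ and $\pi$ applied to the relevant residual edges.

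The one non-routine ingredient I would establish first is: \emph{whenever $f^*(xy)=1$ for an edge $xy\in E(G')$, its reduced cost vanishes, i.e.\ $\wei_{G'}(xy)-\pi(x)+\pi(y)=0$.} Since $G$ has no negative cycles, neither does $G'$, so Lemma~\ref{l:circ-equiv} gives $c(f^*)=0$; decomposing the circulation $f^*$ into flow-carrying cycles, whose costs are nonnegative and sum to $c(f^*)=0$, forces each such cycle to have cost exactly $0$. The edge $xy$ lies on one of these zero-cost cycles $Z$; as every edge of $Z$ carries flow, $G'_{f^*}$ contains the reversal of $Z$, and the sum of the (nonnegative, by Fact~\ref{f:flow}) reduced costs of these reversed residual edges telescopes to $-c(Z)=0$, hence each is $0$ --- which is exactly the claimed identity for every edge of $Z$, in particular for $xy$. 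Specializing, $f^*(u_\tin u_\tout)=1$ yields $\pi(u_\tin)=\pi(u_\tout)$, and $f^*(u_\tout v_\tin)=1$ yields $\wei_G(uv)-\pi(u_\tout)+\pi(v_\tin)=0$.

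Given this, the case analysis is short. If $f^*(u_\tin u_\tout)=0$, then flow conservation at $u_\tout$ forces $f^*(u_\tout v_\tin)=0$ as well, so $u_\tin\to u_\tout$ and $u_\tout\to v_\tin$ are both residual edges; their nonnegative reduced costs give $\pi(u_\tout)\ge\pi(u_\tin)$ and $\wei_G(uv)-\pi(u_\tout)+\pi(v_\tin)\ge 0$, which chain to the desired inequality. If $f^*(u_\tin u_\tout)=1$, then $\pi(u_\tin)=\pi(u_\tout)$ by the ingredient above, and regardless of whether $f^*(u_\tout v_\tin)$ is $0$ (use the residual edge $u_\tout\to v_\tin$) or $1$ (use the ingredient again) we get $\wei_G(uv)-\pi(u_\tout)+\pi(v_\tin)\ge 0$; substituting $\pi(u_\tout)=\pi(u_\tin)$ completes the argument. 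I expect the main obstacle to be the ingredient of the second paragraph: using only the residual-edge inequalities one gets $\pi(u_\tin)\ge\pi(u_\tout)$ when $u$ carries flow, which points the wrong way, so it is crucial to use $c(f^*)=0$ (and not merely $c(f^*)\le 0$) in order to upgrade this to the equality $\pi(u_\tin)=\pi(u_\tout)$ via the zero-cost cycle decomposition.
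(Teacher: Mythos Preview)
Your proposal is correct and follows essentially the same route as the paper: both arguments use $c(f^*)=0$ (via Lemma~\ref{l:circ-equiv}) and a zero-cost cycle decomposition to show that every flow-carrying edge of $G'$ has vanishing reduced cost, then finish with a short case analysis on whether the relevant edges carry flow. The only cosmetic difference is that you organize the cases by $f^*(u_\tin u_\tout)$ and invoke flow conservation at $u_\tout$ to rule out $f^*(u_\tin u_\tout)=0,\ f^*(u_\tout v_\tin)=1$, whereas the paper organizes by $f^*(u_\tout v_\tin)$ and instead establishes the universal inequality $\pi(x_\tout)\ge\pi(x_\tin)$; both organizations lead to the same chain of inequalities.
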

\begin{proof}
  First of all, we have $c(f^*)=0$ by Lemma~\ref{l:circ-equiv}.
  As a result, $f^*$ can be decomposed into a collection of $j\geq 0$
  edge-disjoint cycles $C_1,\ldots,C_j$,
  of positive flow.
  In fact, each of these cycles has to have $0$ cost, as otherwise at least
  one of them would have negative cost.

  Consider some of the cycles $C_i$. 
  Consider any edge $uv\in C_i$. Clearly, since $f^*(uv)=1$, 
  $vu\in G_{f^*}'$ and thus $-\wei_{G'}(uv)-\pi(v)+\pi(u)\geq 0$
  by the optimality of $f^*$.
  Equivalently, \linebreak ${\wei_{G'}(uv)-\pi(u)+\pi(v)\leq 0}$.
  Now suppose we have $\wei_{G'}(xy)-\pi(x)+\pi(y)<0$ for some edge $xy\in C_i$.
  Then the sum
  \begin{equation*}
    \sum_{uv\in C_i}(\wei_{G'}(uv)-\pi(v)+\pi(u))
  \end{equation*}
  has non-positive
  terms, and at least one negative term, so in fact it is negative.
  But since $C_i$ is cycle, the potentials $\pi(\cdot)$
  in the above sum cancel out, so we actually obtain
  $\sum_{uv\in C_i}\wei_{G'}(uv)<0$ which contradicts that $C_i$ has zero cost.
  Therefore, we conclude that for any edge $uv\in C_i$ we have:
  \begin{equation}\label{eq:tight}
    \wei_{G'}(uv)-\pi(u)+\pi(v)=0.
  \end{equation}
  Suppose that for some $v$ we have $f^*(v_\tin v_\tout)=1$. Then $v_\tin v_\tout$
  lies on some cycle $C_i$ of the decomposition and thus by~\eqref{eq:tight} we have:
  \begin{equation*}
  \wei_{G'}(v_\tin v_\tout)-\pi(v_\tin)+\pi(v_\tout)=-\pi(v_\tin)+\pi(v_\tout)=0.
  \end{equation*}
  As a result $\pi(v_\tin)=\pi(v_\tout)$ in this case.
  
  If $f^*(v_\tin v_\tout)=0$, then $v_\tin v_\tout\in G_{f^*}'$, so we easily conclude $\pi(v_\tout)\geq \pi(v_\tin)$
  from the optimality of $f^*$.

  Now consider some original edge $uv$ of $G$. There are two cases.
  If $f^*(u_\tout v_\tin)=1$, then
  $u_\tout v_\tin$ lies on some cycle $C_i$ and therefore by~\eqref{eq:tight} we know that
  \begin{equation*}
    \wei_{G'}(u_\tout v_\tin)-\pi(u_\tout)+\pi(v_\tin)=0.
  \end{equation*}
  
  Observe that $u_\tin u_\tout$ also necessarily lies on that cycle, so $\pi(u_\tout)=\pi(u_\tin)$
  and we in fact have:
  \begin{equation*}
    \wei_G(uv)+p(u)-p(v)=\wei_{G'}(u_\tout v_\tin)-\pi(u_\tin)+\pi(v_\tin)=0,
  \end{equation*}
  so in particular $\wei_G(uv)+p(u)-p(v)\geq 0$ as desired.

  Now assume $f^*(u_\tout v_\tin)=0$. Then $u_\tout v_\tin\in G_{f^*}'$ and by $\pi(v_\tout)\geq \pi(v_\tin)$
  we have:
  \begin{equation*}
    \wei_G(uv)+p(u)-p(v)=\wei_{G'}(u_\tout v_\tin)-\pi(u_\tin)+\pi(v_\tin)\geq \wei_{G'}(u_\tout v_\tin)-\pi(u_\tout)+\pi(v_\tin)\geq 0.
  \end{equation*}
  This concludes the proof that $p(v)=-\pi(v_\tin)$ is a feasible price function of $G$.
\end{proof}

\clearpage

\bibliographystyle{plainurl}
\bibliography{references}

\end{document}